\newcommand{\Share}{\mathsf{Share}}
\newcommand{\Eval}{\mathsf{Eval}}
\newcommand{\Rec}{\mathsf{Rec}}
\newcommand{\POLY}{\mathrm{POLY}}
\newcommand{\POLYdmF}{\POLY_{d,m}(\F)}
\definecolor{DarkGreen}{rgb}{0.1,0.5,0.1}
\definecolor{DarkRed}{rgb}{0.5,0.1,0.1}
\definecolor{DarkBlue}{rgb}{0.1,0.1,0.5}
\newcommand{\cC}{\ensuremath{\mathcal{C}}}
\newcommand{\cL}{\ensuremath{\mathcal{L}}}
\newcommand{\cM}{\ensuremath{\mathcal{M}}}
\newcommand{\F}{{\mathbb F}}
\newcommand{\fm}{{\mathfrak{m}}}
\newcommand{\inabs}[1]{\left|#1\right|}
\newcommand{\abs}[1]{\inabs{#1}}
\newcommand{\inset}[1]{\left\{#1\right\}}
\newcommand{\ev}{\mathrm{ev}}
\renewcommand{\epsilon}{\varepsilon}
\newtheorem{theorem}{Theorem} 
\newtheorem{lemma}[theorem]{Lemma} 
\newtheorem{definition}{Definition}
\newtheorem{observation}[theorem]{Observation}
\newtheorem{remark}{Remark}
\newtheorem{assume}{Assumption}
\newtheorem{claim}[theorem]{Claim}
\newtheorem{example}{Example}
\title{Improved Trade-offs Between Amortization and Download Bandwidth for Linear HSS}
\author{Keller Blackwell and Mary Wootters}
\date{}
\begin{document}

\maketitle

\begin{abstract}
    A \emph{Homomorphic Secret Sharing} (HSS) scheme is a secret-sharing scheme that shares a secret $x$ among $s$ servers, and additionally allows an output client to reconstruct some function $f(x)$ using information that can be locally computed by each server.  A key parameter in HSS schemes is \emph{download rate}, which quantifies how much information the output client needs to download from the servers. 
Often, download rate is improved by \emph{amortizing} over $\ell$ instances of the problem, making $\ell$ also a key parameter of interest. 

Recent work \cite{FIKW22} established a limit on the download rate of linear HSS schemes for computing low-degree polynomials and constructed schemes that achieve this optimal download rate; their schemes required amortization over $\ell = \Omega(s \log(s))$ instances of the problem. Subsequent work \cite{BW23} completely characterized linear HSS schemes that achieve optimal download rate in terms of a coding-theoretic notion termed \emph{optimal labelweight codes}. A consequence of this characterization was that $\ell = \Omega(s \log(s))$ is in fact necessary to achieve optimal download rate.

In this paper, we characterize \textit{all} linear HSS schemes, showing that schemes of any download rate are equivalent to a generalization of optimal labelweight codes. This equivalence is constructive and provides a way to obtain an explicit linear HSS scheme from \textit{any} linear code. Using this characterization, we present explicit linear HSS schemes with slightly sub-optimal rate but with much improved amortization $\ell = O(s)$. Our constructions are based on algebraic geometry codes (specifically Hermitian codes and Goppa codes).


\end{abstract}

\section{Introduction}
A \emph{Homomorphic Secret Sharing} (HSS) scheme is a secret sharing scheme that supports computation on top of the shares~\cite{C:Benaloh86a, BGI16, BGILT18}.
Homomorphic Secret Sharing has been a useful primitive in cryptography, with applications ranging from private information retrieval to secure multiparty computation (see, e.g., \cite{BCGIO17, BGILT18}).

In this work, we focus on information-theoretically secure HSS schemes for the class of degree $d$, $m$-variate polynomials. 
  Suppose $m$ secrets, $x_1, \ldots, x_m$, are shared independently with a $t$-private secret sharing scheme $\Share$\footnote{
A $t$-private secret sharing scheme shares a secret $x$ among $s$ servers by computing $s$ \emph{shares}, $\Share(x) = (y_1, \ldots, y_s)$.
The $t$-privacy guarantee means that no $t$ of the servers should be able to learn anything about $x$ given their shares.  
  }%
and that server $j$ receives the $m$ shares $y_{k,j}$ for $k \in [m]$, $j \in [s]$. Denote by $\POLYdmF \subseteq \mathbb{F}[X_1, \ldots, X_m]$ an arbitrary set of degree $d$, $m$-variate polynomials.  Given a polynomial $f \in \POLYdmF$, each server $j$ does some local computation on its shares $y_{k,j}$ for $k \in [m]$ to obtain an \emph{output share} $z_j = \Eval(f,j,(y_{1,j}, \ldots, y_{m,j}))$.  An \emph{output client} receives the output shares $z_1, \ldots, z_s$  and runs a recovery algorithm $\Rec$ to obtain $f(x_1, \ldots, x_m) = \Rec(z_1, \ldots, z_s)$. The HSS scheme $\pi$ is given by the tuple of functions $(\Share, \Eval, \Rec)$; see Definition~\ref{def:HSS} for a formal definition.  

\paragraph{Parameters of interest.}
A key parameter of interest in an HSS scheme is the \emph{download rate} (Definition~\ref{def:downloadrate}), which is the ratio of the number of bits in $f(x)$ to the number of bits in all of the output shares $z_j$.  Ideally this rate would be as close to $1$ as possible; i.e., the output client does not download much more information than it wishes to compute.

Another parameter of interest is \emph{amortization}.  As in previous work~\cite{FIKW22,BW23}, we consider HSS schemes for low-degree polynomials that amortize over $\ell$ instances of the problem. This means that we have $\ell$ batches of $m$ secrets, $x_k^{(i)}$ for $i \in [\ell]$ and $k \in [m]$, and $\ell$ polynomials $f_1, \ldots, f_\ell$.  Each of these $m\ell$ secrets is shared independently as before, but the output shares $z_j$ are allowed to depend on \emph{all} $\ell$ batches.  Then the output client is responsible for computing $f_i(x_1^{(i)}, \ldots, x_m^{(i)})$ for all $i \in [\ell]$.

\paragraph{Trade-offs between download rate and amortization.}
\cite{FIKW22,BW23} previously studied the optimal download rate possible for linear HSS schemes\footnote{A \emph{linear} HSS scheme is a scheme where both $\Share$ and $\Rec$ are linear over some field $\F$.  Note that $\Eval$ need not be linear.}, and also studied what amount of amortization is necessary to obtain this optimal rate.  
 In this work, we show that by backing off from the optimal rate by a small amount, one can get asymptotic improvements in the amortization parameter. 

In more detail, \cite{FIKW22} showed that for $t$-private, $s$-server linear HSS schemes for $m$-variate degree-$d$ polynomials, the best download rate possible is $1-dt/s$.  They achieved this download rate with schemes that had amortization $\ell = \Omega(s \log(s))$.  In follow-up work, \cite{BW23} showed that in fact amortization $\ell = \Omega(s \log(s))$ was \emph{necessary} to achieve the optimal download rate of $1 - dt/s$.  Their result followed from a characterization optimal-rate linear HSS schemes in terms of a coding theoretic notion they introduce, termed \emph{optimal labelweight codes}.

In our work, informally, we show that by backing off from the optimal rate by a small amount, we are able to get asymptotic improvements in the amortization required; in some cases we need only $\ell = O(s)$.  We obtain this by generalizing the characterization from \cite{BW23} to \emph{all} HSS schemes, not just optimal rate ones.  
We discuss our main results below in more detail.

\subsection{Main Results}\label{sec: main results}

For all of our results, we consider \emph{CNF sharing}~\cite{ISN89} (see Definition~\ref{def:CNF}).  It is known that CNF sharing is universal for linear secret sharing schemes, in that $t$-CNF shares can be locally converted to shares of \emph{any} linear $t$-private secret sharing scheme~\cite{CDI05}. 

\paragraph{Main Contributions.}

\begin{enumerate}
    \item \textbf{A complete characterization of the $\Rec$ functions for \textit{all} linear HSS schemes for $\POLYdmF$}.  As mentioned above, \cite{BW23} gives a characterization of \emph{optimal}-download-rate linear HSS schemes in terms of codes with good labelweight.  In our work, we extend that characterization to \emph{all} linear HSS schemes. 
    
    Our characterization is constructive, and in particular it gives an efficient algorithm to convert any code with good labelweight into a linear HSS scheme, and vice-versa. 

    \item \textbf{Improved amortization without much loss in rate.}  The work \cite{BW23} showed that to achieve optimal rate, it is necessary to have amortization $\ell = \Omega(s \log s)$.  Leveraging our characterization from Item 1, we give efficient constructions of linear HSS schemes that achieve \emph{near}-optimal download rate while requiring amortization parameter only $\ell = O(s)$.  We compute the parameters of our constructions in practical parameter regimes and show that our schemes achieve a near order-of-magnitude savings in amortization parameter, even for reasonable values of $s,d,m$.
\end{enumerate}

We describe our results in greater detail below.

\paragraph{(1) Characterization of arbitrary-rate linear HSS schemes}

Theorem \ref{thm:maininformal} below is a characterization of \textit{all} linear HSS schemes for $\POLYdmF$. In particular, our characterization extends that of \cite{BW23}, which only characterized optimal-rate linear HSS schemes. We show that the $\Rec$ algorithms for such schemes (with CNF sharing) are {equivalent} to a class of linear codes with sufficiently good \emph{labelweight}, a generalization of Hamming distance that was introduced by \cite{BW23}. 

 \begin{definition}[Labelweight]\label{def:LW}
     Let $\cC \subseteq \F^n$ be a linear code of dimension $\ell$.  Let $\mathcal{L}:[n] \to [s]$ be any surjective function, which we refer to as a \emph{labeling} function.  The \emph{labelweight} of $\mathbf{c} \in \cC$ is the number of distinct labels that the support of $\mathbf{c}$ touches:
     \[ \Delta_{\cL}(\mathbf{c}) = |\{ \cL(i) \,: \, i \in [n], c_i \neq 0 \}|.\]
     The \emph{labelweight} of $\cC$ is the minimum labelweight of any nonzero codeword:
     \[ \Delta_{\cL}(\cC) = \min_{c \in \cC\setminus \{0\}} \Delta_{\cL}(\mathbf{c}).\]
 \end{definition}
 In particular, if $s = n$ and $\mathcal{L}(j) = j$ for all $j \in [n]$, then $\Delta_\cL(\cC)$ is just the minimum Hamming distance of $\cC$.  Thus, the labelweight of a code generalizes the standard notion of distance.

Our main characterization theorem is the following.
\begin{theorem}[Linear HSS schemes are equivalent to labelweight codes. (Informal, see Theorem \ref{thm: main equiv})]\label{thm:maininformal}
Let $\pi = (\Share, \Eval, \Rec)$ be a $t$-private, $s$-server linear HSS for $\POLYdmF$ with download rate $R$ and amortization parameter $\ell$.  Let $G \in \mathbb{F}^{\ell \times (\ell/R)}$ be the matrix that represents $\Rec$ (see Observation~\ref{obs: linear reconstruction is matrix multiplication}). Then there is some labeling function $\cL$ so that $G$ is the generator matrix for a code $\cC$ of dimension $\ell$, with rate $R$ and with $\Delta_{\cL}(\cC) \geq dt + 1$.

Conversely, suppose that there is a labeling function $\cL:[n] \to [s]$ and a linear code $\cC \subseteq \F^n$ of dimension $\ell$ with rate $R$ and $\Delta_{\cL}(\cC) \geq dt + 1$.  Then any generator matrix $G$ of $\cC$ describes a linear reconstruction algorithm $\Rec$ for an $s$-server $t$-private linear HSS for $\POLYdmF$ that has download rate $R$ and amortization parameter $\ell$.
\end{theorem}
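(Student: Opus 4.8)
The plan is to prove both directions by making precise the correspondence between the structure of a linear HSS scheme and the labelweight of a code. The key object is the reconstruction matrix $G \in \F^{\ell \times n}$ with $n = \ell/R$, together with a labeling function $\cL : [n] \to [s]$ that records which server each column of $G$ is associated with (so if server $j$ outputs an output share $z_j$ consisting of $n_j$ field elements, those $n_j$ columns of $G$ all get label $j$, and $\sum_j n_j = n$). The correctness requirement for the HSS scheme says: for every choice of degree-$d$ polynomials $f_1, \dots, f_\ell$ and every batch of secrets, $G \cdot (z_1, \dots, z_s) = (f_1(x^{(1)}), \dots, f_\ell(x^{(\ell)}))$. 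The privacy/locality requirement is that each $z_j = \Eval(f, j, \cdot)$ depends only on server $j$'s CNF shares, and with CNF sharing each output share is a polynomial (in fact, by the degree-$d$ structure, a fixed affine combination of degree-$\le d$ monomials in the shares) — so the $z_j$'s live in a space whose "error" terms, as one varies the polynomials and secrets, span a controlled subspace. I would first recall from the setup (and from \cite{FIKW22, BW23}) the standard fact that with CNF sharing, each output share $z_j$ can be written as the intended-value-contribution plus a linear combination of "cross terms" indexed by $t$-subsets not containing $j$, and that any $dt$ of these cross-term coordinates can be made to take arbitrary values by a suitable adversarial-looking choice of secrets while the true output stays fixed.

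**Forward direction.** Given such a $\pi$, I would argue that $G$ must be full rank $\ell$ (else reconstruction fails, since the map from output shares to the $\ell$ computed values is surjective onto $\F^\ell$), so $G$ is a generator matrix of a code $\cC$ of dimension $\ell$ and rate $R$. For the labelweight bound: suppose toward contradiction some nonzero $\mathbf{c} \in \cC$ has support touching at most $dt$ distinct labels. Since $\mathbf{c}$ is in the row space of $G$, there is a "test functional" — a linear combination of the output clients' recovered values — which, when applied to the output shares, equals a combination of coordinates of $z$ supported on at most $dt$ coordinates of servers whose labels lie in a set of size $\le dt$. Using the locality structure, I can choose two different inputs (same true outputs $f_i(x^{(i)})$, but differing in the relevant cross-terms) that produce the same value on all servers outside those $\le dt$ labels' worth of coordinates but differ on them — contradicting that $\Rec$ is a well-defined function of the output shares that always returns the correct answer. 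This forces $\Delta_{\cL}(\cC) \ge dt+1$. The technical care here is matching "touches $\le dt$ labels" with "at most $dt$ cross-term coordinates can be freely perturbed," which is exactly where the $dt$ (degree $d$ times privacy $t$) arithmetic enters.

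**Converse direction.** Given $\cC$, $\cL$, $G$ with $\Delta_\cL(\cC) \ge dt+1$: I would build $\Eval$ and $\Rec$ explicitly. Set $\Rec(z) = G z$. For $\Eval$: server $j$, on its CNF shares, computes the coordinates of $z$ labeled $j$ by emulating (a block-diagonal copy of) the known optimal-rate construction of \cite{FIKW22}/\cite{BW23} restricted to those coordinates — concretely, each output-share coordinate is the local polynomial evaluation that the standard scheme prescribes, using the column of $G$ as the recombination coefficient. The correctness calculation then reads: $Gz = G(\text{true value vector} + E)$ where $E$ is the vector of cross terms; the true-value part gives $(f_1(x^{(1)}), \dots, f_\ell(x^{(\ell)}))$ because $G$'s relevant structure is set up as an encoding, and $GE = 0$ because $E$ lies in a subspace that is annihilated by $G$ precisely when every codeword of $\cC$ avoids the supports where $E$ can be nonzero — and the labelweight hypothesis $\Delta_\cL(\cC) \ge dt+1$ is exactly the statement that the (at most $dt$)-label-supported error patterns are orthogonal to... (being careful: it is the dual condition, so I would phrase it as: a degree-$d$ cross-term pattern is supported on at most $dt$ labels, and $G$ kills all such patterns iff the parity-check/kernel condition holds, which by standard coding duality is equivalent to $\Delta_\cL(\cC) \ge dt+1$). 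Privacy is inherited from CNF sharing (shares are distributed identically regardless of the secret as long as $\le t$ servers collude), independent of $\Eval, \Rec$.

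**Main obstacle.** The hardest part is nailing down the precise algebraic description of the cross-term space for degree-$d$ polynomials under CNF sharing and showing that "the space of realizable error vectors $E$" is exactly the span of the coordinate subspaces indexed by $\le dt$-many labels — i.e., that $d$ and $t$ combine multiplicatively and that *any* such pattern (not just structured ones) is realizable. Once that reduction is in hand, both directions of the theorem become the standard linear-algebra/coding fact that $G$ annihilates every vector supported on any given set of $\le dt$ labels if and only if $\cC = \mathrm{rowspace}(G)$ has labelweight $\ge dt+1$. I would lean on the detailed CNF-sharing analysis already developed in \cite{FIKW22, BW23} for this reduction rather than re-deriving it from scratch, and focus the novel content on the generalization from the optimal-rate (where $n$ and the code parameters are pinned down) to arbitrary rate $R$ (where $n = \ell/R$ is a free parameter and $G$ need not be square or systematic).
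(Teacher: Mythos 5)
There is a genuine gap, and it sits in the converse (code $\Rightarrow$ HSS) direction, which is the paper's main technical content. You reduce both directions to the claim that ``$G$ annihilates every vector supported on at most $dt$ labels if and only if $\mathrm{rowspace}(G)$ has labelweight $\geq dt+1$.'' That equivalence is false: ``$G E = 0$ for every $E$ supported on a given label set'' is a condition on the \emph{kernel} of $G$ (i.e., on the dual code $\cC^\perp$), and demanding it for all label sets of size $\leq dt$ would force the corresponding columns of $G$ to vanish, hence $G=0$; it has nothing to do with the labelweight of the row space. Relatedly, your correctness template ``$G\mathbf{z} = G(\text{true value vector} + E)$ with $GE=0$'' does not describe how such schemes work: there is no canonical true-value vector, and the cross terms are not an adversarial error to be annihilated. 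In the actual construction (Theorem~\ref{thm:lblwtImpliesHSS}), each downloaded symbol $z_r$ is an \emph{unknown} linear combination of the degree-$d$ CNF-share monomials computable by server $\cL(r)$, and correctness is the linear system $S\mathbf{e}=\mathbf{g}$ requiring that, for every monomial $\fm = \prod_k y^{(i)}_{k,T_k}$, the columns of $G$ labeled by the servers that \emph{can} compute $\fm$ (all but at most $|\bigcup_k T_k| \leq dt$ labels) reproduce the target coefficient pattern. Solvability is exactly where labelweight enters, via the paper's new Lemma~\ref{lem:updated mds-like}: $\Delta_\cL(\cC)\geq dt+1$ implies that $G(\Lambda)$ has full row rank $\ell$ for every $\Lambda$ with $|\Lambda|\geq s-dt$ (a nonzero vector in its left kernel would give a codeword supported on $\leq dt$ labels). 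Your proposal never identifies this full-row-rank property, and your fallback plan of block-diagonally ``emulating'' the optimal-rate construction of \cite{FIKW22,BW23} cannot be carried out as stated, because that construction leans on exactly the symmetries (equal download per server, invertible square sub-arrays of $G$) that fail at arbitrary rate --- removing that reliance is the point of the theorem.

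Your account of the $dt$ arithmetic is also off in a way that matters for both directions: under $t$-CNF sharing the relevant structure is that each degree-$d$ monomial of shares is unavailable to at most $dt$ servers, not that ``any $dt$ cross-term coordinates can be made to take arbitrary values while the true output stays fixed.'' That realizability claim is unproven and is doing all the work in your forward-direction sketch (a correctness-based perturbation argument); the paper instead handles the forward direction by citing the privacy-based argument of \cite{BW23} (Lemma~\ref{lem:HSS_implies_lblwt}), where a codeword of labelweight $\leq dt$ yields a nontrivial functional of the secrets computable from $\leq dt$ servers' shares, contradicting $t$-privacy for a non-trivial function class. The forward direction could likely be repaired by deferring to that analysis, as the paper does, but the converse as you have framed it rests on an incorrect linear-algebra equivalence and would not go through without replacing it by the $G(\Lambda)$ full-row-rank lemma and the per-monomial linear system for $\Eval$.
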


We remark that the converse direction is constructive: given the description of such a code $\cC$, the proof (see Theorem~\ref{thm:lblwtImpliesHSS}) gives an efficient construction of the $\Eval$ function as well as the $\Rec$ function.

\paragraph{(2) Achieving practical trade-off between download rate and amortization parameter.}\label{par: main results pt 2}

Using the complete characterization of all linear HSS schemes, we construct linear HSS schemes that achieve near-optimal download rate at amortization parameters that are strictly linear in in the number of servers $s$. While the construction of \cite{BW23} uses Reed-Solomon codes as a starting point to construct optimal rate linear HSS schemes, we use two well-studied families of algebraic geometry (AG) codes---Hermitian codes and Goppa codes.

Through the lens of Theorem \ref{thm:maininformal}, finding low amortization, high download rate linear HSS schemes is equivalent to constructing low dimension, high rate labelweight codes; that is, we wish for code dimension $\ell$ to be minimized and for code rate $\ell/n$ to be maximized. Observe that for any code dimension $\ell$, code rate is maximized when $n$ is minimized. Since $n \geq s$, we see that this occurs when setting $n = s$. In this setting, it immediately follows that the surjective labeling function $\mathcal{L}:[n]\to [s]$ can be assumed to be the identity map without loss of generality.

Therefore, our constructions use the trivial labeling scheme where $n=s$ and $\mathcal{L}:[n]\to[s], x \mapsto x$ is the identity function. Such labelweight codes, though straightforward, yield linear HSS schemes with attractive parameters for realistic server counts. Furthermore, their intuitive construction underscores the fundamental relationship between classical codes and linear HSS. We loosely summarize these results in Table \ref{fig: intro parameter comparison}; see Theorems \ref{thm: hss from hermitian code params}, \ref{thm: hss from goppa params} for additional details and formal statements.

\begin{table}[ht]
\centering
\begin{tabular}{l|c|c|c}
\hline
 & \cite{FIKW22},\cite{BW23} & Hermitian-based HSS & Goppa-based HSS \\ \hline
Download Rate & $1-dt/s$ & $ 1 - dt/s - O(s^{-1/3})$ & $1 - (dt/s) \cdot O(\log(dt))$ \\ 
Amortization & $(s-dt)\log(s)$ & $s-dt - O(s^{2/3})$ & $s - dt \cdot O(\log(dt))$ \\ \hline
\end{tabular}
\caption{Comparing our AG-based constructions to \cite{FIKW22}, \cite{BW23}}
\label{fig: intro parameter comparison}
\end{table}

Furthermore, for completeness we show in Appendix \ref{apx: linear hss from random codes} that, perhaps surprisingly, a random coding approach does not lead to amortization savings over \cite{FIKW22}, \cite{BW23}, even backing off from the optimal rate. More precisely, linear HSS schemes instantiated from random labelweight codes result in HSS schemes with amortization parameter at least $\Omega(s \log(s))$. This motivates additional study of linear codes with algebraic structure as a basis for linear HSS with attractive parameters.

\subsection{Technical Overview}\label{sec:tech}

In this section, we give a high-level overview of the techniques underpinning Theorem~\ref{thm:maininformal}, which states that any $t$-private, $s$-server linear HSS scheme for $\POLYdmF$ is equivalent to a labelweight code with minimum labelweight $\geq dt+1$.  

To give some intuition for the connection, we recount the simplest non-trivial case of the forward direction, which was proven by \cite{BW23}. We consider \emph{HSS for concatenation} \cite{FIKW22}: $\ell$ secrets $\mathbf{x} = \left(x^{(1)}, \ldots, x^{(\ell)} \right) \in \F^\ell$ are shared independently among $s$ servers who in turn communicate them to an output client. The objective is for the output client to download as little information as possible - in particular, significantly less than the naive solution of simply downloading $t+1$ shares of each secret.

Let $\mathbf{z}\in \mathbb{F}^n$ be the $n$-tuple of $\F$-symbols downloaded by the output client; since the output client instantiates a linear reconstruction algorithm $\Rec$, there exists $G \in \F^{\ell \times n}$ such that $G \mathbf{z} = \mathbf{x}$. Define a labeling function $L:[n] \to [s]$ satisfying the property that for all $i \in [n]$, $L(i) = r \in [s]$ if and only if $\mathbf{z}_i$ was downloaded from server $r$. 

\begin{claim}
    The rows of $G$ generate a linear code $\mathcal{C}$ satisfying $\Delta_\mathcal{L}(\mathcal{C}) \geq t+1$. 
\end{claim}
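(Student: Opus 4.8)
The plan is to prove the contrapositive: if $\Delta_{\cL}(\cC) \leq t$, then the HSS scheme fails to be $t$-private. Suppose $\mathbf{c} \in \cC \setminus \{0\}$ has labelweight at most $t$, so the support of $\mathbf{c}$ touches coordinates downloaded from a set $T \subseteq [s]$ of at most $t$ servers. Since $\mathbf{c}$ lies in the rowspace of $G$, write $\mathbf{c} = \mathbf{v}^\top G$ for some nonzero $\mathbf{v} \in \F^\ell$; then for any secret vector $\mathbf{x}$ with output shares $\mathbf{z}$ we have $\mathbf{v}^\top \mathbf{x} = \mathbf{v}^\top G \mathbf{z} = \mathbf{c}^\top \mathbf{z} = \sum_{i \in \supp(\mathbf{c})} c_i z_i$, which depends only on the output shares $\{z_i : \cL(i) \in T\}$. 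I would then argue that these output shares are themselves a (deterministic) function of the input shares held by the servers in $T$ — this is where the structure of the HSS scheme enters: each $z_i$ is computed by the single server $\cL(i) \in T$ via $\Eval$ from that server's input shares. Hence the servers in $T$ can jointly compute the linear combination $\mathbf{v}^\top \mathbf{x} = \sum_k v_k x^{(k)}$ of the secrets from their shares alone.

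The final step is to observe that this contradicts $t$-privacy. Because the $\ell$ secrets $x^{(1)}, \dots, x^{(\ell)}$ are shared independently and $\mathbf{v} \neq 0$, the quantity $\mathbf{v}^\top \mathbf{x}$ is a nonconstant function of the secrets (for a uniformly random choice of secrets it is a nonconstant random variable, since at least one $v_k \neq 0$ and the corresponding $x^{(k)}$ is uniform and independent of the rest). But $t$-privacy of $\Share$ says that the joint view of any $t$ servers is distributed independently of the secret vector, so in particular any function of that joint view — such as $\mathbf{v}^\top\mathbf{x}$, which we have just shown is computable from the view of the servers in $T$ — must be independent of $\mathbf{x}$, hence constant. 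This contradiction shows no such low-labelweight codeword exists, i.e. $\Delta_{\cL}(\cC) \geq t+1$.

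The main obstacle, and the step that needs the most care, is the claim that each downloaded symbol $z_i$ depends only on the shares held by server $\cL(i)$. This is essentially the definitional content of an HSS scheme for concatenation — the output shares are produced server-by-server via $\Eval$ with no interaction — but one must phrase it so that it correctly feeds into the $t$-privacy argument: we need that $\{z_i : \cL(i) \in T\}$ is a function of $\bigcup_{r \in T}(\text{shares held by server } r)$, and then that $t$-privacy applies to this union of at most $t$ servers' shares. A secondary subtlety is making precise the "nonconstant function of independently shared secrets" claim; this is where independence of the sharings of the $\ell$ secrets is used, and it is cleanest to fix all secrets except one uniformly at random and let the remaining one vary, so that $\mathbf{v}^\top \mathbf{x}$ takes at least two values with positive probability while the view of $T$ is held to a fixed distribution.
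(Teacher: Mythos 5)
Your proposal is correct and follows essentially the same route as the paper's proof: a nonzero codeword of labelweight at most $t$ yields a nonzero linear combination $\mathbf{v}^\top\mathbf{x} = \mathbf{c}^\top\mathbf{z}$ of the secrets that is computable from the output shares of at most $t$ servers, contradicting $t$-privacy. Your write-up simply makes explicit the details the paper leaves implicit (that output shares are local functions of those servers' input shares via $\Eval$, and the distributional argument showing a nonconstant linear combination cannot be determined by a $t$-server view).
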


\begin{proof}
    Suppose towards a contradiction that for some non-zero $\mathbf{m} \in \F^\ell$, $\Delta_\mathcal{L}(\mathbf{m}G) \leq t$. Then $\mathbf{m}G \mathbf{x} = \mathbf{c} \mathbf{x}$ (for some non-zero $\mathbf{c}\in \F^n$) is a linear combination of secrets recoverable by a set of $t$ servers, contradicting the $t$-privacy they were originally secret shared with. 
\end{proof}

In this example, the function evaluated on the secret shares is identity; the generalization of this example requires consideration of more general functions, but the fundamental principle is similar.

The converse requires showing that any labelweight code $\cC$ implies a linear HSS scheme. In the setting of optimal download rate, \cite{BW23} leveraged the specific properties of optimal labelweight codes in order to prove this; their work relied on the fact that optimal labelweight codes are highly structured. In particular, \cite{BW23} showed that, in the optimal download rate setting:

\begin{itemize}
    \item[(i)] the output client must download an equal number of symbols from each server; and

    \item[(ii)] up to elementary row operations, the matrix parameterizing the output client's linear $\Rec$ algorithm is a rectangular array of invertible matrices, with the property that any square sub-array  is itself invertible.
\end{itemize}

These strong symmetry properties are key to the equivalence result of \cite{BW23}.  However, in our setting, where we do not assume that the optimal download rate is attained,  \textit{neither of the aforementioned properties hold}. Our result thus requires proving additional properties of labelweight codes; in particular, Lemma \ref{lem:updated mds-like} shows that the generator matrices of linear codes with labelweight $\geq dt+1$ have the property that any sufficiently large sub-matrix attains full row rank. This property allows us to extend the techniques introduced by \cite{BW23} and show that the specification of a linear HSS scheme can be computed from the generator matrix of a linear code with sufficient labelweight.

\subsection{Related Work}
Though linear HSS schemes are implicit in classical protocols for secure multi-party computation and private information retrieval~\cite{C:Benaloh86a,STOC:BenGolWig88,STOC:ChaCreDam88,EC:CraDamMau00,BeaverF90,BeaverFKR90,Chor:1998:PIR:293347.293350}, the systematic study of HSS was introduced by \cite{BGILT18}. Most HSS schemes reply on cryptographic hardness assumptions~\cite{BoyleGI15,DHRW16,BGI16a,BGI16,FazioGJS17,BGILT18,BoyleKS19,BCGIKS19,CouteauM21,OrlandiSY21,RoyS21,DIJL23}.

In contrast, the HSS schemes presented in this work are \emph{information-theoretically secure}.  The information-theoretic setting was explored in~\cite{BGILT18} and was further studied in \cite{FIKW22} and \cite{BW23}; these latter two works are  the closest to our work, and we discuss them more below.

The work of \cite{FIKW22} focused on the download rate of information-theoretically secure HSS schemes (both linear and non-linear) and proved a tight impossibility result regarding the highest download rate achievable by linear HSS schemes. They paired this with an explicit construction that showed a large amortization parameter is a sufficient condition for a linear HSS scheme to achieve optimal download rate.

\begin{theorem}[\cite{FIKW22}] \label{thm:FIKWInformal intro}
Let $s, d, t \in \mathbb{Z}^+$ such that $s>dt$. Let $\pi$ be a $t$-private, $s$-server linear HSS scheme for $\POLYdmF$. Then $\mathsf{DownloadRate}(\pi) \leq 1 - dt/s$.

Furthermore, for all integers $j \geq \log_{|\F|}(s)$, there exists a $t$-private, $s$-server linear HSS $\pi$ satisfying $\mathsf{DownloadRate}(\pi) = 1 - dt/s$ with amortization parameter $\ell = j(s-dt)$.
\end{theorem}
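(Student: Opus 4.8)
The plan is to prove the two assertions of Theorem~\ref{thm:FIKWInformal intro} separately --- first the impossibility/optimality claim, that $1-dt/s$ bounds the download rate of every $t$-private, $s$-server linear HSS $\pi$ for $\POLYdmF$; and then the matching construction, attaining download rate exactly $1-dt/s$ with amortization $\ell=j(s-dt)$ for every integer $j\ge\log_{|\F|}s$ --- and to obtain both by routing through the labelweight-code characterization of Theorem~\ref{thm:maininformal} together with a short Singleton-type estimate for labelweight codes. The construction half will be a direct application of the constructive converse direction of Theorem~\ref{thm:maininformal}.

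For the rate bound, I would first apply the forward direction of Theorem~\ref{thm:maininformal}: writing $R:=\mathsf{DownloadRate}(\pi)$ and $n:=\ell/R$, the matrix $G\in\F^{\ell\times n}$ representing $\Rec$ generates a rate-$R$, dimension-$\ell$ code $\cC\subseteq\F^{n}$ for which there is a surjective labeling $\cL\colon[n]\to[s]$ with $\Delta_{\cL}(\cC)\ge dt+1$. It then remains to prove the purely combinatorial fact that any dimension-$\ell$ code carrying a surjective labeling into $s$ classes and having labelweight at least $dt+1$ has rate at most $1-dt/s$. For this, sort the $s$ label classes by size; the $dt$ largest of them have average size at least the overall average $n/s$, so together they contain at least $\tfrac{dt}{s}\,n$ coordinates. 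Since $\Delta_{\cL}(\cC)\ge dt+1$, no nonzero codeword is supported inside those $dt$ classes, so puncturing $\cC$ at those coordinates is injective on $\cC$; hence the punctured code still has dimension $\ell$, which forces $n-\tfrac{dt}{s}\,n\ge\ell$, i.e.\ $R=\ell/n\le 1-dt/s$. This is the first assertion of the theorem, and the construction below shows it is tight.

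For the construction, it suffices by the constructive converse of Theorem~\ref{thm:maininformal} to exhibit, for each integer $j\ge\log_{|\F|}s$, a code of rate exactly $1-dt/s$ and dimension $j(s-dt)$ carrying a surjective labeling into $[s]$ with labelweight at least $dt+1$. Let $q=|\F|$ and let $\mathbb K$ be the degree-$j$ extension field of $\F$, so $|\mathbb K|=q^{j}\ge s$; take the Reed--Solomon code over $\mathbb K$ of length $s$ and dimension $s-dt$, which is MDS and hence has minimum distance $dt+1$. Fixing an $\F$-basis of $\mathbb K$ identifies each of the $s$ coordinates with a block of $j$ coordinates over $\F$, turning this into an $\F$-linear code $\cC\subseteq\F^{sj}$ of dimension $(s-dt)j$; let $\cL\colon[sj]\to[s]$ send all $j$ coordinates of the $r$-th block to $r$. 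Then $\cL$ is surjective, a nonzero $\F$-codeword of $\cC$ corresponds to a nonzero $\mathbb K$-codeword of the Reed--Solomon code and thus has at least $dt+1$ nonzero $\mathbb K$-symbols, so $\Delta_{\cL}(\cC)\ge dt+1$; and the rate is $(s-dt)j/(sj)=1-dt/s$. Feeding $(\cC,\cL)$ into the converse of Theorem~\ref{thm:maininformal} produces a $t$-private, $s$-server linear HSS for $\POLYdmF$ with download rate $1-dt/s$ and amortization $\ell=j(s-dt)$; alternatively one may read $\Eval$ and $\Rec$ off $\cC$ directly, as in \cite{FIKW22}.

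The main obstacle is the forward direction of Theorem~\ref{thm:maininformal}: that a linear $\Rec$ matrix is forced to be the generator matrix of a code with labelweight at least $dt+1$. The intuition (cf.\ the claim in Section~\ref{sec:tech}) is that a nonzero linear combination of the output shares whose support meets only $dt$ label classes would be computable from the output shares of $dt$ servers alone; since each output share is a degree-$\le d$ function of that server's $t$-private CNF shares, one can choose the polynomials $f_1,\dots,f_\ell$ and the secrets so that this combination reveals a forbidden linear functional of the secrets to a coalition of only $t$ servers, contradicting $t$-privacy. Since that direction is already available as Theorem~\ref{thm:maininformal}, the remaining work is genuinely light: the one-line Singleton-type lemma for the bound, and the labeling/field-size bookkeeping for the construction, where the hypothesis $j\ge\log_{|\F|}s$ is exactly what guarantees that a length-$s$ Reed--Solomon code over $\mathbb K$ exists.
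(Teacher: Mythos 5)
Your proposal is correct, but note that the paper itself gives no proof of this statement: Theorem~\ref{thm:FIKWInformal intro} is imported verbatim from \cite{FIKW22} (its formal counterpart is Theorem~\ref{thm:FIKW rate LB}), so there is no in-paper argument to compare against, and the original proof in \cite{FIKW22} is a direct linear-algebraic/information-theoretic one rather than a detour through labelweight codes. Your route---forward direction of Theorem~\ref{thm: main equiv} (Lemma~\ref{lem:HSS_implies_lblwt}) plus a Singleton-type bound for labelweight codes for the impossibility half, and the constructive converse (Theorem~\ref{thm:lblwtImpliesHSS}) applied to a Reed--Solomon code over the degree-$j$ extension, expanded over $\F$ with the block labeling, for the matching construction---is sound and not circular, since neither Lemma~\ref{lem:HSS_implies_lblwt} nor Theorem~\ref{thm:lblwtImpliesHSS} uses the rate bound; the construction half in fact coincides with the RS-based construction of \cite{BW23}, so the genuinely new ingredient in your write-up is the short puncturing argument (top $dt$ label classes contain at least $\tfrac{dt}{s}n$ coordinates, and labelweight $\geq dt+1$ makes puncturing there injective), which is correct. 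Three small caveats: the ``$\geq$'' in the statement as printed is evidently a typo for ``$\leq$'' (compare Theorem~\ref{thm:FIKW rate LB}), and you correctly prove the upper-bound direction; invoking Lemma~\ref{lem:HSS_implies_lblwt} requires $m \geq d$ (non-triviality of $\mathcal{F}$), a hypothesis present in Theorem~\ref{thm:FIKW rate LB} though suppressed in the informal statement; and since the characterization is phrased in the CNF-sharing framework, to get the bound for an HSS with an \emph{arbitrary} linear $\Share$ one should add the one-line reduction via universality of CNF sharing (Theorem~\ref{thm:CNFuniversal}), under which servers locally convert CNF shares into shares of $\Share'$ without changing the download cost.
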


The work \cite{BW23} focused on linear schemes with \emph{optimal} download rate, meeting the bound showed in \cite{FIKW22}.  They proved that in fact a large amortization parameter is necessary for a linear HSS scheme to achieve optimal download rate.

\begin{theorem}[\cite{BW23}]\label{thm: bw23 amort informal}
    There exists a $t$-private, $s$-server linear HSS scheme for $\text{POLY}_{d,m}(\mathbb{F})$ with download rate $(s-dt)/s$ and amortization parameter $\ell$ only if $\ell = j(s-dt)$ for some $j \in \mathbb{Z}+$ satisfying
    \begin{equation*}
        j \geq \lceil \max\inset{ \log_q(s -dt +1), \log_q(dt +1)} \rceil.
    \end{equation*}
\end{theorem}

The key technique in their proof was showing that optimal-rate linear HSS is in fact equivalent to optimal labelweight linear codes; more precisely, they showed the following theorem.

\begin{theorem}[\cite{BW23}]\label{thm: bw23 equiv informal}
    There exists a $t$-private, $s$-server linear HSS scheme for $\text{POLY}_{d,m}(\mathbb{F})$ with download rate $(s-dt)/s$ and amortization parameter $\ell$ if and only if there exists a linear code $\mathcal{C} \subseteq \mathbb{F}^n$ with information rate $(s-dt)/s$ and dimension $\ell$; and surjection $L:[n] \to [s]$ such that $\Delta_L(\mathcal{C})\geq dt+1$. 
\end{theorem}


Our work extends the characterization of \cite{BW23} to \textit{all} linear HSS schemes with arbitrary download rate; in particular, we show that arbitrary-rate linear HSS schemes are equivalent to a broader class of labelweight codes than those considered by \cite{BW23}. Though our proof syntactically resembles that of Theorem \ref{thm: bw23 equiv informal} from~\cite{BW23}, as discussed in Section~\ref{sec:tech}, we need to overcome additional technical difficulties introduced by the lack of strong symmetries in the more general arbitrary-rate setting. Furthermore, we present explicit constructions that approach the optimal download rate from Theorem~\ref{thm:FIKW rate LB}, while asymptotically improving the amortization parameter $\ell$.


\subsection{Organization}

In Section~\ref{sec: prelim}, we set notation and record a few formal definitions that we will need. In Section~\ref{sec: equiv}, we show that linear HSS schemes (with arbitrary download rate) are equivalent to codes with sufficient labelweight: Lemma~\ref{lem:HSS_implies_lblwt} establishes that HSS schemes imply codes with sufficient labelweight, and Theorem~\ref{thm:lblwtImpliesHSS} constructively establishes the converse. 

In Section~\ref{sec: hermitian}, we derive labelweight codes from Hermitian codes and construct the corresponding linear HSS scheme by Theorem \ref{thm:lblwtImpliesHSS}. We formally state its parameters in Theorem \ref{thm: hss from hermitian code params} and compare its performance against constructions from \cite{FIKW22} and \cite{BW23}. In Section \ref{sec: goppa}, we do the same but use Goppa codes as the basis for a family of labelweight codes.

\section{Preliminaries}\label{sec: prelim}

We begin by setting notation and the basic definitions that we will need throughout the paper.  (We note that these definitions and notation closely follow that of \cite{FIKW22} and \cite{BW23}).

\textbf{Notation.} For $n \in \mathbb{Z}^+$, we denote by $[n]$ the set $\lbrace 1, 2, \ldots, n \rbrace$. We use bold symbols (e.g., $\mathbf{x}$) to denote vectors. For an object $w$ in some domain $\mathcal{W}$, we use $\|w\| = \log_2 (\abs{\mathcal{W}})$ to denote the number of bits used to represent $w$.

\subsection{Homomorphic Secret Sharing}

We consider homomorphic secret sharing (HSS) schemes with $m$ inputs and $s$ servers; each input is shared independently. We denote by $\mathcal{F} = \lbrace f: \mathcal{X}^m \to \mathcal{O} \rbrace$ the class of functions we wish to compute, where $\mathcal{X}$ and $\mathcal{O}$ are input and output domains, respectively. 

\begin{definition}[HSS]\label{def:HSS}

Given a collection of $s$ servers and a function class $\mathcal{F}= \lbrace f: \mathcal{X}^m \to \mathcal{O} \rbrace$, consider a tuple $\pi = (\Share, \Eval, \Rec)$, where $\Share: \mathcal{X} \times \mathcal{R} \to \mathcal{Y}^s$, $\Eval: \mathcal{F} \times [s] \times \mathcal{Y} \to \mathcal{Z}^\ast$, and $\Rec: \mathcal{Z}^\ast \to \mathcal{O}$ are as follows\footnote{By $\mathcal{Z}^\ast$, we mean a vector of some number of symbols from $\mathcal{Z}$.}:

\begin{itemize}
    \item[$\bullet$] $\Share(x_i,r_i)$: For $i \in [m]$, $\Share$ takes as input a secret $x_i \in \mathcal{X}$ and randomness $r_i \in \mathcal{R}$; it outputs $s$ 
 shares $\left( y_{i,j} : j \in [s] \right) \in \mathcal{Y}^s$. We refer to the $y_{i,j}$ as \emph{input shares}; server $j$ holds shares $(y_{i,j} 
 : i \in [m])$.

    \item[$\bullet$] $\Eval\left(f, j, \left( y_{1,j}, y_{2,j}, \ldots, y_{m,j} \right) \right)$: Given $f \in \mathcal{F}$, server index $j \in [s]$, and server $j$'s input shares $\left( y_{1,j}, y_{2,j}, \ldots, y_{m,j} \right)$, $\Eval$ outputs $z_j \in \mathcal{Z}^{n_j}$, for some $n_j \in \mathbb{Z}$. We refer to the $z_j$ as \emph{output shares}.

    \item[$\bullet$] $\Rec(z_1, \ldots, z_s)$: Given output shares $z_1, \ldots, z_s$, $\Rec$ computes $f(x_1, \ldots, x_m) \in \mathcal{O}$. 

\end{itemize}

\noindent
We say that $\pi = (\Share, \Eval, \Rec)$ is a $s$-server HSS scheme for $\mathcal{F}$ if the following requirements hold:

\begin{itemize}
    \item[$\bullet$] \textbf{Correctness:} For any $m$ inputs $x_1, \ldots, x_m \in \mathcal{X}$ and $f \in \mathcal{F}$, 

    \begin{equation*}
        \Pr_{\mathbf{r} \in \mathcal{R}^m}\left[ \Rec(z_1, \ldots, z_s) = f(x_1, \ldots, x_m) : 
            \begin{aligned} 
                &\forall i \in [m], \; \left( y_{i,1}, \ldots, y_{i,s} \right) \leftarrow \Share(x_i,r_i)\\
                &\forall j \in [s], \; z_j \leftarrow \Eval\left( f, j, (y_{1,j}, \ldots, y_{m,j}) \right)
            \end{aligned}
         \right] = 1
    \end{equation*}
Note that the random seeds $r_1, \ldots, r_m$ are independent.

    \item[$\bullet$] \textbf{Security:} Fix $i \in [m]$; we say that $\pi$ is \emph{$t$-private} if for every $T\subseteq [s]$ with $|T| \leq t$ and $x_i, x'_i \in \mathcal{X}$, $\Share(x_i)|_T$ has the same distribution as $\Share(x'_i)|_T$, over the randomness $\mathbf{r} \in \mathcal{R}^m$ used in $\Share$.

\end{itemize}

\end{definition}

\begin{remark}
    We remark that in the definition of HSS, the reconstruction algorithm $\Rec$ does \emph{not} need to know the identity of the function $f$ being computed, while the $\Eval$ function does.  In some contexts it makes sense to consider an HSS scheme for $\mathcal{F} = \{f\}$, in which case $f$ is fixed and known to all.  Our results in this work apply for general collections $\mathcal{F}$ of low-degree, multivariate polynomials, and in particular cover both situations.
\end{remark}

We focus on \emph{linear} HSS schemes, where both $\Share$ and $\Rec$ are $\F$-linear over some finite field $\F$; note that $\Eval$ need not be linear.  
\begin{definition}[Linear HSS]
    Let $\mathbb{F}$ be a finite field.
    \begin{itemize}
        \item[$\bullet$] We say that an $s$-server HSS $\pi = (\Share,\Eval,\Rec)$ has \emph{linear reconstruction} if:
        \begin{itemize}
            \item  $\mathcal{Z} = \F$, so each output share $z_i \in \F^{n_i}$ is a vector over $\F$;
            \item $\mathcal{O} = \F^o$ is a vector space over $\F$; and
            \item $\Rec: \F^{\sum_i n_i} \to \F^o$ is $\F$-linear.
        \end{itemize}

        \item[$\bullet$] We say that $\pi$ has \emph{linear sharing} if $\mathcal{X}$, $\mathcal{R}$, and $\mathcal{Y}$ are all $\F$-vector spaces, and $\Share$ is $\F$-linear.

        \item[$\bullet$] We say that $\pi$ is \emph{linear} if it has both linear reconstruction and linear sharing. Note there is no requirement for $\Eval$ to be $\mathbb{F}$-linear.
    \end{itemize}
\end{definition}

The assumption of linearity implies that the function $\Rec$ can be represented by a matrix, as per the following observation that was also used by \cite{BW23}.

\begin{observation}[\cite{BW23}]\label{obs: linear reconstruction is matrix multiplication}
    Let $\ell, t, s, d, m, n$ be integers. Let $\pi = (\Share, \Eval, \Rec)$ be a $t$-private, $s$-server HSS for some function class $\mathcal{F} \subseteq \POLYdmF^\ell$ with linear reconstruction $\Rec: \F^n \to \F^\ell$.

    Then there exists a matrix $G_\pi \in \mathbb{F}^{\ell \times n}$ so that, for all $\mathbf{f} \in \mathcal{F}$ and for all secrets $\mathbf{x} \in (\F^m)^\ell$, there exists some $\mathbf{z} \in \mathbb{F}^n$ such that
    \begin{equation*}
        \Rec(\mathbf{z}) = G_\pi \mathbf{z} = \mathbf{f}(\mathbf{x})= \left[
             \mathbf{f}_1( \mathbf{x}^{(1)}),\; \mathbf{f}_2( \mathbf{x}^{(2)}), \ldots,\; \mathbf{f}_\ell( \mathbf{x}^{(\ell)}) 
        \right]^\mathsf{T}.
    \end{equation*}
\end{observation}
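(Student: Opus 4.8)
The statement is essentially a translation of the linear-reconstruction hypothesis into matrix language, so the plan is a short unwinding of definitions. First I would recall the elementary fact that every $\F$-linear map $L\colon \F^n \to \F^\ell$ is given by left multiplication by a unique matrix: writing $\mathbf{e}_1, \ldots, \mathbf{e}_n$ for the standard basis of $\F^n$, the matrix $M$ whose $k$-th column is $L(\mathbf{e}_k) \in \F^\ell$ satisfies $L(\mathbf{v}) = M\mathbf{v}$ for all $\mathbf{v} \in \F^n$, by linearity. Applying this to $\Rec\colon \F^n \to \F^\ell$, which is $\F$-linear by the definition of linear reconstruction (here $n = \sum_{j \in [s]} n_j$ is the total number of symbols downloaded, consistent with $\mathcal{Z} = \F$), produces a matrix $G_\pi \in \F^{\ell \times n}$ with $\Rec(\mathbf{z}) = G_\pi \mathbf{z}$ for every $\mathbf{z} \in \F^n$.

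It then remains to identify $\Rec(\mathbf{z})$ with $f(\mathbf{x})$ for honestly generated output shares. Fix $f \in \mathcal{F} \subseteq \POLYdmF^\ell$ and secrets $\mathbf{x} = (\mathbf{x}^{(1)}, \ldots, \mathbf{x}^{(\ell)}) \in (\F^m)^\ell$. For any randomness $\mathbf{r} = (r_1, \ldots, r_m) \in \mathcal{R}^m$, let $(y_{i,1}, \ldots, y_{i,s}) \leftarrow \Share(x_i, r_i)$ for $i \in [m]$, let $z_j \leftarrow \Eval(f, j, (y_{1,j}, \ldots, y_{m,j}))$ for $j \in [s]$, and let $\mathbf{z}$ be the concatenation $(z_1, \ldots, z_s) \in \F^n$. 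The correctness requirement of Definition~\ref{def:HSS} asserts that $\Rec(z_1, \ldots, z_s) = f(x_1, \ldots, x_m)$ with probability $1$ over $\mathbf{r}$; since $\mathcal{R}$ is finite, probability $1$ means this equality holds for \emph{every} choice of $\mathbf{r}$. Combining this with the previous paragraph gives $G_\pi \mathbf{z} = \Rec(\mathbf{z}) = f(\mathbf{x})$, and because $f = (f_1, \ldots, f_\ell)$ acts on the batches coordinate-wise, $f(\mathbf{x}) = [\, f_1(\mathbf{x}^{(1)}),\ f_2(\mathbf{x}^{(2)}), \ldots,\ f_\ell(\mathbf{x}^{(\ell)})\,]^{\mathsf{T}}$, which is exactly the claimed identity.

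There is no substantive obstacle here; the only points needing care are bookkeeping ones. First, $G_\pi$ must be a \emph{single} matrix that works simultaneously for all $f \in \mathcal{F}$ and all $\mathbf{x}$ — but this is automatic, since $G_\pi$ is extracted from $\Rec$ alone, which by the definition of HSS does not depend on $f$. Second, one should check that the argument $\mathbf{z}$ fed to $\Rec$ genuinely lies in $\F^n$ so that the matrix identity $\Rec(\mathbf{z}) = G_\pi \mathbf{z}$ may be invoked — this is guaranteed by $\mathcal{Z} = \F$ together with the choice $n = \sum_j n_j$. I would close by noting that $G_\pi$ is uniquely determined once the coordinatization of the output shares is fixed, although uniqueness plays no role in what follows.
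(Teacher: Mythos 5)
Your proof is correct and is essentially the argument the paper relies on: the paper states this observation without proof (citing \cite{BW23}), and the intended justification is exactly your unwinding — represent the $\F$-linear map $\Rec$ by the matrix whose columns are the images of the standard basis vectors, then invoke probability-$1$ correctness over the finite randomness space to identify $\Rec(\mathbf{z})$ with $f(\mathbf{x})$ for honestly generated output shares. Your bookkeeping remarks (that $G_\pi$ depends only on $\Rec$, hence works uniformly over $f \in \mathcal{F}$ and $\mathbf{x}$, and that $\mathbf{z} \in \F^n$ with $n = \sum_j n_j$) are the right points to flag, and nothing further is needed.
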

For a linear HSS $\pi$, we call $G_\pi$ as in the observation above the \emph{reconstruction matrix} corresponding to $\Rec$. It was shown in \cite{BW23} that any such reconstruction matrix must be full rank.

\begin{lemma}[\cite{BW23}]\label{lem:Gpi full rank}
    Let $t,s,d,m,\ell$ be positive integers so that $m \geq d$ and $n \geq \ell$, and let $\pi$ be a $t$-private $s$-server linear HSS for some $\mathcal{F} \subseteq \POLYdmF$, so that $\mathcal{F}$ contains an element $(f_1, \ldots, f_\ell)$ where for each $i \in [\ell]$, $f_i$ is non-constant.  Then $G_\pi \in \F^{\ell \times n}$ has rank $\ell$.
\end{lemma}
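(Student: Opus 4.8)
The plan is to argue by contradiction: if $G_\pi$ fails to have full row rank, then some nonzero linear functional on $\F^\ell$ annihilates the entire column space of $G_\pi$, hence — by correctness — annihilates $f(\mathbf{x})$ for every secret $\mathbf{x}$; but the independence of the $\ell$ blocks of secrets together with the non-constancy of each $f_i$ makes this impossible. Note that $t$-privacy plays no role in this particular statement, and the hypotheses $m \ge d$, $n \ge \ell$ are needed only to make the setup well-posed (so that $\POLYdmF$ is nontrivial and the matrix shapes in Observation~\ref{obs: linear reconstruction is matrix multiplication} make sense); the existence of a tuple $(f_1,\dots,f_\ell)$ with each $f_i$ non-constant is handed to us by hypothesis.

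In detail, I would start by supposing $\rank(G_\pi) \le \ell - 1$, so that the $\ell$ rows of $G_\pi$ are linearly dependent and there is a nonzero row vector $\mathbf{v} = (v_1,\dots,v_\ell) \in \F^\ell$ with $\mathbf{v}^\mathsf{T} G_\pi = \mathbf{0} \in \F^n$. Let $(f_1,\dots,f_\ell) \in \mathcal{F}$ be the promised tuple of non-constant polynomials. For an arbitrary secret $\mathbf{x} = (\mathbf{x}^{(1)},\dots,\mathbf{x}^{(\ell)}) \in (\F^m)^\ell$, running $\Share$ on each $\mathbf{x}^{(i)}$ (with any randomness) and then $\Eval$ produces output shares $\mathbf{z} \in \F^n$, and by the correctness requirement (which holds with probability $1$) we have $G_\pi \mathbf{z} = f(\mathbf{x}) = \big(f_1(\mathbf{x}^{(1)}),\dots,f_\ell(\mathbf{x}^{(\ell)})\big)^\mathsf{T}$. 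Hence
\[
  \sum_{i=1}^{\ell} v_i\, f_i(\mathbf{x}^{(i)}) \;=\; \mathbf{v}^\mathsf{T} f(\mathbf{x}) \;=\; \big(\mathbf{v}^\mathsf{T} G_\pi\big) \mathbf{z} \;=\; 0,
\]
and since $\mathbf{x}$ was arbitrary, this identity holds for every $(\mathbf{x}^{(1)},\dots,\mathbf{x}^{(\ell)}) \in (\F^m)^\ell$.

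To finish, I would pick an index $i_0$ with $v_{i_0} \neq 0$; since $f_{i_0}$ is non-constant, there exist $\mathbf{a}, \mathbf{b} \in \F^m$ with $f_{i_0}(\mathbf{a}) \neq f_{i_0}(\mathbf{b})$. Substituting into the displayed identity the two secret vectors that agree on every block $j \neq i_0$ (say all blocks equal to $\mathbf{0}$) and that equal $\mathbf{a}$, respectively $\mathbf{b}$, on block $i_0$, and subtracting the two resulting equations, I get $v_{i_0}\big(f_{i_0}(\mathbf{a}) - f_{i_0}(\mathbf{b})\big) = 0$, contradicting $v_{i_0}\neq 0$. Therefore $G_\pi$ has rank $\ell$.

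The argument is short, so I do not anticipate a genuine obstacle; the one thing to state carefully is the logical flow, namely that $\mathbf{v}^\mathsf{T} G_\pi = \mathbf{0}$ is an identity on all of $\F^n$, so it in particular kills every achievable vector of output shares, and correctness then forces $\mathbf{v}$ to annihilate $f(\mathbf{x})$ for every $\mathbf{x}$ — after which the independence of the $\ell$ secret blocks is exactly what upgrades ``$\mathbf{v}$ kills all values of the tuple-valued map $\mathbf{x}\mapsto f(\mathbf{x})$'' into ``$\mathbf{v} = \mathbf{0}$''.
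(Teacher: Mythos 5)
Your proof is correct and is essentially the same argument the paper relies on: the paper simply cites \cite{BW23}, where the lemma is established by exactly this left-kernel contradiction (a nonzero $\mathbf{v}$ with $\mathbf{v}^\mathsf{T} G_\pi = \mathbf{0}$ would, by correctness, force $\sum_i v_i f_i(\mathbf{x}^{(i)}) = 0$ for all secrets, which the non-constancy of each $f_i$ and the freedom to vary one block at a time rules out). The only point worth stating explicitly is that ``non-constant'' must be read as non-constant as a function on $\F^m$ (the intended reading), since that is what guarantees the existence of $\mathbf{a},\mathbf{b}$ with $f_{i_0}(\mathbf{a}) \neq f_{i_0}(\mathbf{b})$.
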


Finally, we formally define the  \textit{download rate} of an HSS scheme.

\begin{definition}[Download cost, dowload rate]\label{def:downloadrate}
    Let $s, t$ be integers and let $\mathcal{F}$ be a class of functions with input space $\mathcal{X}^m$ and output space $\mathcal{O}$. Let $\pi$ be an $s$-server $t$-private HSS for $\mathcal{F}$. Let $z_i \in \mathcal{Z}^{n_i}$ for $i \in [s]$ denote the output shares.

    \begin{itemize}
        \item[$\bullet$] The \emph{download cost} of $\pi$ is given by

        \begin{equation*}
            \textsf{DownloadCost}(\pi):=\sum_{i \in [s]} \|{z_i}\|,
        \end{equation*}
        where we recall that $\|{z_i}\| = n_i \log_2|\mathcal{Z}|$ denotes the number of bits used to represent $z_i$. 
        \item[$\bullet$] The \emph{download rate} of $\pi$ is given by 
        \begin{equation*}
            \textsf{DownloadRate}(\pi):=\frac{\log_2| \mathcal{O}|}{\textsf{DownloadCost}(\pi)}.
        \end{equation*}
        
    \end{itemize}
\end{definition}

Thus, the download rate is a number between $0$ and $1$, and we would like it to be as close to $1$ as possible.

\subsection{Polynomial Function Classes}

Throughout, we will be interested in classes of functions $\mathcal{F}$ comprised of low-degree polynomials.
\begin{definition}
    Let $m > 0$ be an integer and $\mathbb{F}$ be a finite field. We define
    \begin{equation*}
        \POLYdmF := \lbrace f \in \mathbb{F}[X_1, \ldots, X_m] : \deg(f) \leq d \rbrace
    \end{equation*}
    to be the class of all $m$-variate polynomials of degree at most $d$, with coefficients in $\mathbb{F}$. 
\end{definition}
We are primarily interested in \emph{amortizing} the HSS computation over $\ell$ instances of $\POLYdmF$, as discussed in the Introduction.  We can capture this as part of Definition~\ref{def:HSS} by taking the function class $\mathcal{F}$ to be (a subset of) $\POLYdmF^\ell$ for some $\ell \in \mathbb{Z}^+$.   Note that this corresponds to the amortized setting discussed in the Introduction.

\if
As hinted at above, our infeasibility results hold not just for $\POLYdmF^\ell$ but also for any $\mathcal{F} \subseteq \POLYdmF$ that contains monomials with at least $d$ different variables.
\fi

\begin{definition}\label{def:nontriv}
    Let $\mathcal{F} \subseteq \POLYdmF^\ell$.  We say that $\mathcal{F}$ is \emph{non-trivial} if there exists some $\mathbf{f} = (f_1, \ldots, f_\ell) \in \mathcal{F}$ so that for all $i \in [\ell]$, $f_i$ contains a monomial with at least $d$ distinct variables.
\end{definition}

\if
(We note that our \emph{feasibility} results are stated in terms of $\mathcal{F} = \POLYdmF^\ell$; trivially these extend to any subset $\mathcal{F}\subseteq \POLYdmF^\ell$).
\fi

The work \cite{FIKW22} showed that any linear HSS scheme for $\POLYdmF^\ell$ (for any $\ell$) can have download rate at most $(s-dt)/s$:
We recall the following theorem from \cite{FIKW22}.
\begin{theorem}[\cite{FIKW22}]\label{thm:FIKW rate LB}
    Let $t,s,d,m,\ell$ be positive integers so that $m \geq d$. Let $\mathbb{F}$ be any finite field and $\pi$ be a $t$-private $s$-server linear HSS scheme for $\POLYdmF^\ell$. Then $dt< s$, and $\textsf{DownloadRate}(\pi) \leq (s - dt)/s$.
\end{theorem}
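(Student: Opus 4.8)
The plan is to reduce the rate bound to a combinatorial (``labelweight'') property of the reconstruction matrix, and then finish with a Singleton-type inequality. First, note that since $\pi$ is an HSS for $\POLYdmF^\ell$ its output domain is $\mathcal{O}=\F^\ell$ and, by linear reconstruction, $\mathcal{Z}=\F$; hence, writing $n=\sum_{j\in[s]}n_j$ for the total number of downloaded $\F$-symbols, $\mathsf{DownloadRate}(\pi)=\ell/n$, so it suffices to prove that $dt<s$ and $\ell\le\frac{s-dt}{s}n$. Let $G_\pi\in\F^{\ell\times n}$ be the reconstruction matrix of Observation~\ref{obs: linear reconstruction is matrix multiplication}; since $m\ge d$, the class $\POLYdmF^\ell$ contains $\mathbf{f}=(X_1\cdots X_d,\dots,X_1\cdots X_d)$, which is non-constant in every coordinate, so Lemma~\ref{lem:Gpi full rank} gives $\rank(G_\pi)=\ell$. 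Let $\cL:[n]\to[s]$ send each column index to the server that produced the corresponding output symbol, and let $\cC\subseteq\F^n$ be the dimension-$\ell$ code generated by the rows of $G_\pi$. The two things to prove are: (a) $\Delta_\cL(\cC)\ge dt+1$, and (b) any such code satisfies $dt<s$ and $\ell\le\frac{s-dt}{s}n$.

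I expect part (a) to be the main obstacle; it is exactly the place where the degree-$d$, $t$-private, CNF structure enters, and it generalizes the degree-$1$ argument sketched in Section~\ref{sec:tech}. Arguing by contraposition: if some nonzero $\mathbf{v}=\mathbf{m}^{\mathsf T}G_\pi$ had support touching only labels in a set $U\subseteq[s]$ with $|U|\le dt$, partition $U=U_1\sqcup\cdots\sqcup U_d$ with each $|U_k|\le t$. Running the scheme on $\mathbf{f}=(X_1\cdots X_d,\dots,X_1\cdots X_d)$, correctness gives $\mathbf{v}\cdot\mathbf{z}=\mathbf{m}^{\mathsf T}f(\mathbf{x})=\sum_{i\in[\ell]}m_i\prod_{k=1}^d x_k^{(i)}$, and by the definition of $\cL$ the left-hand side depends only on the output shares $\{z_j:j\in U\}$, hence only on the CNF input shares held by the servers in $U$. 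One must then show this contradicts $t$-privacy: expanding the product of CNF sharings as a sum over $d$-tuples of coalitions and using the partition $U=U_1\sqcup\cdots\sqcup U_d$ to assign the $k$-th factor to the $\le t$ servers in $U_k$, one argues via a hybrid over $k=1,\dots,d$ that for some $i$ with $m_i\ne0$ the quantity $\sum_i m_i\prod_k x_k^{(i)}$ cannot be determined from shares that are, per variable $X_k$, available to only $t$ servers. This degree-$d$ bookkeeping (which does not require $\Eval$ to be linear) is the technical heart of the FIKW bound.

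Granting (a), part (b) is a short puncturing/averaging argument. Since $\cC$ has a nonzero codeword (it has dimension $\ell\ge1$) and any codeword touches at most $s$ labels, (a) forces $dt+1\le s$, i.e.\ $dt<s$. Now pick $S\subseteq[s]$ with $|S|=dt$ maximizing $|\cL^{-1}(S)|$; since $\sum_{r\in[s]}|\cL^{-1}(r)|=n$, averaging over $dt$-subsets gives $|\cL^{-1}(S)|\ge\frac{dt}{s}n$. Puncturing $\cC$ on the coordinate set $\cL^{-1}(S)$ yields a code of length $n-|\cL^{-1}(S)|$ and still of dimension $\ell$, because any codeword killed by the puncturing map is supported on the $dt$ labels of $S$ and hence has labelweight $\le dt<dt+1\le\Delta_\cL(\cC)$, forcing it to be $0$. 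Therefore $\ell\le n-|\cL^{-1}(S)|\le n-\frac{dt}{s}n=\frac{s-dt}{s}n$; dividing by $n$ gives $\mathsf{DownloadRate}(\pi)=\ell/n\le(s-dt)/s$, as required.
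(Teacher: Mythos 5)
Your overall architecture is sound, and it is worth noting that the paper itself does not prove this statement: Theorem~\ref{thm:FIKW rate LB} is imported from \cite{FIKW22}, so there is no internal proof to compare against. Your route is essentially ``forward direction of the equivalence plus a Singleton-type bound'': your part (a) is exactly the content of Lemma~\ref{lem:HSS_implies_lblwt} (which the paper also cites to \cite{BW23} rather than proving), and your part (b) --- averaging over $dt$-subsets of labels to find $S$ with $|\cL^{-1}(S)|\ge \frac{dt}{s}n$, then puncturing and using $\Delta_\cL(\cC)\ge dt+1$ to keep the dimension at $\ell$ --- is a correct and clean ``labelweight Singleton'' argument, and it also correctly yields $dt<s$. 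Had you simply invoked Lemma~\ref{lem:HSS_implies_lblwt} for (a), the proof would be complete and arguably more modular than the original download-rate argument of \cite{FIKW22}.

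As a blind proof, however, part (a) as you sketch it has a genuine gap, and it is the heart of the matter. First, the theorem is stated for an arbitrary linear HSS, while your argument silently assumes CNF sharing (``hence only on the CNF input shares held by the servers in $U$''); you need the reduction via Theorem~\ref{thm:CNFuniversal} (servers locally convert CNF shares into shares of the given linear scheme, preserving download cost), or an argument that works for any linear $\Share$. Second, the concluding step ``a hybrid over $k=1,\dots,d$ \dots cannot be determined from shares that are, per variable $X_k$, available to only $t$ servers'' does not work as described: after fixing the variables $X_{k'}$, $k'\neq k$, the value $x_k^{(i)}$ being computable from the shares held by $U$ is \emph{not} a privacy violation, because $|U|$ can be as large as $dt>t$; $t$-privacy only restricts coalitions of size $t$, and no single coalition $U_k$ computes anything on its own. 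The actual argument (in \cite{FIKW22,BW23}) stays with the CNF share variables: choose $T_k\supseteq U_k$ with $|T_k|=t$, write each output share $z_r$ with $\cL(r)\in U$ as a reduced polynomial in the shares its server holds (every function over a finite field admits a unique reduced polynomial representation), and observe that the multilinear monomial $\prod_{k=1}^d y^{(i^*)}_{k,T_k}$ appears with nonzero coefficient $m_{i^*}$ on the left side of the correctness identity but cannot appear in any $z_r$, since each server $j\in U$ lies in some $U_k\subseteq T_k$ and hence does not hold $y^{(i^*)}_{k,T_k}$. Without this (or an equivalent information-theoretic argument), the labelweight claim $\Delta_\cL(\cC)\ge dt+1$ is not established, and the rest of your proof has nothing to stand on.
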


\subsection{CNF Sharing}
The main $\Share$ function that we consider in this work is \emph{CNF sharing}~\cite{ISN89}.

\begin{definition}[$t$-private CNF sharing] \label{def:CNF}
Let $\F$ be a finite field.
The $t$-private, $s$-server CNF secret-sharing scheme over $\F$ is a function $\Share: \F \times \F^{\binom{s}{t} - 1} \to \left(\F^{\binom{s-1}{t}}\right)^s$ that shares a secret $x \in \F$ as $s$ shares $y_j \in \F^{\binom{s-1}{t}}$, using $\binom{s}{t} - 1$ random field elements, as follows.

Let $x \in \F$, and let $\mathbf{r} \in \F^{\binom{s}{t} - 1}$ be a uniformly random vector.  Using $\mathbf{r}$, choose $y_T \in \F$ for each set $T \subseteq [s]$ of size $t$, as follows: The $y_T$ are uniformly random subject to the equation
\[ x = \sum_{T \subseteq [s]: |T| = t} y_T.\]
Then for all $j \in [s]$, define
\[ \Share(x,\mathbf{r})_j = ( y_T \,:\, j \not\in T ) \in \F^{\binom{s-1}{t} }.\]
\end{definition}

We observe that CNF-sharing is indeed $t$-private.  Any $t+1$ servers between them hold all of the shares $y_T$, and thus can reconstruct $x = \sum_T y_T$.  In contrast, any $t$ of the servers (say given by some set $S \subseteq [s]$) are missing the share $y_S$, and thus cannot learn anything about $x$.

The main reason we focus on CNF sharing is that it is \emph{universal} for linear secret sharing schemes:

\begin{theorem}[\cite{CDI05}]\label{thm:CNFuniversal}
    Suppose that $x \in \F$ is $t$-CNF-shared among $s$ servers, so that server $j$ holds $y_j \in \F^{\binom{s-1}{t}}$, and let $\Share'$ be any other linear secret-sharing scheme for $s$ servers that is (at least) $t$-private.  Then the shares $y_j$ are locally convertible into shares of $\Share'$.  That, is there are functions $\phi_1, \ldots, \phi_s$ so that
    $(\phi_1(y_1), \ldots, \phi_s(y_s))$
    has the same distribution as
    $\Share'(x, \mathbf{r})$
    for a uniformly random vector $\mathbf{r}$.
\end{theorem}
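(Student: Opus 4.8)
The plan is to prove Theorem~\ref{thm:CNFuniversal} through the linear-algebraic (monotone span program) description of $\Share'$, exhibiting the conversion maps $\phi_1,\dots,\phi_s$ as explicit $\F$-linear maps. Recall that $t$-CNF sharing draws $y_T\in\F$ uniformly for each $T\subseteq[s]$ with $|T|=t$, subject to $x=\sum_{|T|=t}y_T$, and that server $j$'s share is the tuple $(y_T : j\notin T)$. Since $\Share'$ is linear, for each server $j$ there are $\mathbf{a}_j$ and $B_j$ over $\F$ with $\Share'(x,\mathbf{r}')_j=\mathbf{a}_j x+B_j\mathbf{r}'$, where $\mathbf{r}'$ is the uniform randomness of $\Share'$, say of length $\rho$ (we may assume $\Share'$ has no redundant randomness, which is without loss of generality for the statement and makes the dimension count below tight).

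The first step is to extract, from the hypothesis that $\Share'$ is $t$-private, a \emph{privacy vector} $\mathbf{p}_T\in\F^\rho$ for each size-$t$ set $T$, satisfying $B_j\mathbf{p}_T=\mathbf{a}_j$ for every $j\in T$. This is immediate: the distribution of $(\mathbf{a}_j x+B_j\mathbf{r}')_{j\in T}$ over uniform $\mathbf{r}'$ is uniform over the affine set $(\mathbf{a}_j x)_{j\in T}$ plus the column span of $(B_j)_{j\in T}$, and equality of this distribution for all $x$ (compared, say, to the $x=0$ case) forces $(\mathbf{a}_j x)_{j\in T}$ into that column span for every $x$; taking $x=1$ produces $\mathbf{p}_T$.

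The second step is to define
\[
  \phi_j(y_j) \;=\; \sum_{|T|=t} \big( \mathbf{a}_j - B_j\mathbf{p}_T \big)\, y_T .
\]
This depends only on server $j$'s CNF share: if $j\in T$ then $(\mathbf{a}_j-B_j\mathbf{p}_T)y_T=0$ by the defining property of $\mathbf{p}_T$, so only the terms with $j\notin T$ survive, and those are precisely the $y_T$ that server $j$ holds. Reading off the $j$-th coordinate of $(\phi_1(y_1),\dots,\phi_s(y_s))$ and using $\sum_T y_T=x$ gives $\phi_j(y_j)=\mathbf{a}_j x+B_j\mathbf{r}^\ast$ with the \emph{same} vector $\mathbf{r}^\ast:=-\sum_{|T|=t}\mathbf{p}_T\, y_T$ for every $j$; that is, $(\phi_1(y_1),\dots,\phi_s(y_s))=\Share'(x,\mathbf{r}^\ast)$ as an identity in the $y_T$'s, so the output is always a syntactically valid $\Share'$-sharing of $x$, and each $\phi_j$ is $\F$-linear as required.

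The remaining claim --- and the step I expect to be the main obstacle --- is distributional: for every fixed secret $x$, the induced randomness $\mathbf{r}^\ast=-\sum_T\mathbf{p}_T y_T$ must be uniform on $\F^\rho$ as the $y_T$ range over the CNF distribution. Because $(y_T)\mapsto\mathbf{r}^\ast$ is $\F$-linear its fibers are equidistributed, so it suffices to prove surjectivity; eliminating one coordinate via $\sum_T y_T=x$ shows the image is exactly $\spn\{\mathbf{p}_T-\mathbf{p}_{T_0}: |T|=t\}$ for any fixed reference set $T_0$, so the goal becomes: the privacy vectors can be chosen so that these differences span $\F^\rho$. Here I would use the remaining freedom in each $\mathbf{p}_T$ (any solution of $\{B_j\mathbf{p}_T=\mathbf{a}_j: j\in T\}$ is admissible) together with the correctness of $\Share'$ (the full server set reconstructs $x$) and the minimality of $\rho$; the cleanest route is to pass to the dual span program of $\Share'$ and argue that a nonzero functional annihilating all the differences would yield either a coalition of size $\le t$ able to reconstruct $x$ (contradicting $t$-privacy) or a randomness coordinate on which no server's share depends (contradicting minimality). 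Finally I would dispatch the degenerate regime $s=t+1$, where each $y_T$ is held by a single server, by direct inspection. With the surjectivity lemma in hand, the maps $\phi_j$ above complete the proof.
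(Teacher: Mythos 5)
You should first note that the paper does not prove this statement at all: it is imported verbatim from \cite{CDI05}, so the only comparison available is with the standard share-conversion argument from that work. Your construction is exactly that argument: the privacy vectors $\mathbf{p}_T$ (equivalently, sharings of $1$ that look all-zero to the coalition $T$) exist by the coset argument you give, the maps $\phi_j(y_j)=\sum_{T}(\mathbf{a}_j-B_j\mathbf{p}_T)y_T$ are local and $\F$-linear, and your identity $(\phi_1(y_1),\ldots,\phi_s(y_s))=\Share'(x,\mathbf{r}^\ast)$ with $\mathbf{r}^\ast=-\sum_T\mathbf{p}_T y_T$ correctly shows the output is always a \emph{valid} $\Share'$-sharing of $x$. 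Up to this point your write-up is correct and matches the cited proof.

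The gap is the final distributional step, which you defer to a hoped-for ``surjectivity lemma,'' and that lemma is false at this level of generality --- in fact the distributional form of the statement itself fails without further hypotheses on $\Share'$. Take $s=2$, $t=1$, $\F=\F_2$, and $\Share'(x,(r_1,r_2))=\bigl((r_1,r_2),\,(x+r_1,r_2)\bigr)$. This scheme is linear and $1$-private, and it has no redundant randomness in your sense: both $r_1$ and $r_2$ appear in the shares, so the map $\mathbf{r}\mapsto B\mathbf{r}$ is injective and $\rho=2$. Yet conditioned on $x$, the $1$-CNF shares carry only $\binom{2}{1}-1=1$ field element of entropy, so any deterministic local conversion has output supported on at most $2$ points, while a fresh $\Share'$-sharing of $x$ is uniform on $4$ points; no choice of $\phi_j$ can match the distribution. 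In your language: there are only two privacy vectors $\mathbf{p}_{\{1\}},\mathbf{p}_{\{2\}}$, so $\spn\{\mathbf{p}_T-\mathbf{p}_{T_0}\}$ is at most one-dimensional and can never equal $\F^\rho=\F^2$. This also kills the proposed dual-span-program dichotomy: in the example no coalition of size $\le t$ can reconstruct $x$ and every randomness coordinate influences some share, yet the annihilating functional exists. What is actually true (and what \cite{CDI05} proves, and what this paper needs) is the validity statement your second step already establishes --- the converted shares are always consistent shares of $x$ under $\Share'$, and $t$-privacy is inherited because the conversion is local; exact equality with the fresh-sharing distribution requires additional assumptions on $\Share'$ (e.g., it holds for Shamir-type targets where the relevant sweeping vectors do span), not just minimality of the randomness dimension.
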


\if
In particular, we prove several results of the form ``no linear HSS with CNF sharing can do better than 
\_\_\_\_.'' Because of Theorem~\ref{thm:CNFuniversal}, these results imply that ``no linear HSS with \emph{any} linear sharing scheme can do better than \_\_\_\_.''
\fi

\subsection{Linear Codes}

Throughout, we will be working with \emph{linear codes} $\cC \subset \F^n$, which are just subspaces of $\F^n$.  For a linear code $\cC \subseteq \F^n$ of dimension $\ell$, a matrix $G \in \F^{\ell \times n}$ is a \emph{generator matrix} for $\cC$ if $\cC = \mathrm{rowSpan}(G)$. Note that generator matrices are not unique. The \emph{rate} of a linear code $\cC \subset \F^n$ of dimension $\ell$ is defined as
\[ \textsf{Rate}(\cC) := \frac{\ell}{n}. \]

\section{Equivalence of Linear HSS and Labelweight Codes}\label{sec: equiv}
In this section we show that linear HSS schemes for low-degree multivariate polynomials are equivalent to linear codes with sufficient labelweight. Concretely, we have the following theorem, which formalizes the statement of Theorem \ref{thm:maininformal}.

\begin{theorem}\label{thm: main equiv}
    Let $\ell, t, s, d, m, n$ be integers, with $m \geq d$, $\ell \leq n$. There exists a $t$-private, $s$-server $\mathbb{F}$-linear HSS $\pi = \left( \Share, \Eval, \Rec \right)$ for any non-trivial $\mathcal{F} \subseteq \text{POLY}_{d,m}(\mathbb{F})^\ell$, with download rate $\textsf{DownloadRate}(\pi)=\ell/n$, if and only if there exists a linear code $\cC \subseteq \F^n$ with rate $\textsf{DownloadRate}(\pi)$ and a labeling $\cL:[n] \to [s]$ so that $\Delta_\cL(\cC) \geq dt + 1$.
\end{theorem}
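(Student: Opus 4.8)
The plan is to prove the two directions of the equivalence separately, following the structure sketched in Section~\ref{sec:tech} but filling in the technical gap that arises outside the optimal-rate regime.

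\paragraph{Forward direction (HSS $\Rightarrow$ labelweight code).} Suppose $\pi = (\Share, \Eval, \Rec)$ is a $t$-private, $s$-server linear HSS for a non-trivial $\mathcal{F} \subseteq \POLYdmF^\ell$ with download rate $\ell/n$. By Definition~\ref{def:downloadrate}, the total number of downloaded $\F$-symbols is $n = \sum_j n_j$, where server $j$ sends $z_j \in \F^{n_j}$. Define the labeling $\cL : [n] \to [s]$ by $\cL(i) = j$ precisely when coordinate $i$ of the concatenated output vector $\mathbf z = (z_1, \ldots, z_s)$ comes from server $j$; this is surjective as long as every server sends at least one symbol, which we may assume (a server sending nothing can be dropped, or we argue $n_j \ge 1$ from Lemma~\ref{lem:Gpi full rank}-type reasoning). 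Let $G = G_\pi \in \F^{\ell \times n}$ be the reconstruction matrix from Observation~\ref{obs: linear reconstruction is matrix multiplication}; by Lemma~\ref{lem:Gpi full rank} it has rank $\ell$, so $\cC := \mathrm{rowSpan}(G)$ has dimension $\ell$ and rate $\ell/n = \textsf{DownloadRate}(\pi)$. It remains to show $\Delta_\cL(\cC) \ge dt+1$. Following the claim in Section~\ref{sec:tech}: take any nonzero $\mathbf m \in \F^\ell$; then $\mathbf c := \mathbf m^{\mathsf T} G \in \cC$ is nonzero, and $\mathbf c^{\mathsf T} \mathbf z = \mathbf m^{\mathsf T} f(\mathbf x)$, a fixed nonzero linear combination of the $\ell$ polynomial outputs $f_i(\mathbf x^{(i)})$. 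The symbols of $\mathbf z$ appearing in the support of $\mathbf c$ live on the servers in $\cL(\supp(\mathbf c))$. If $|\cL(\supp(\mathbf c))| \le dt$, then $\mathbf m^{\mathsf T} f(\mathbf x)$ is computable from the output shares of at most $dt$ servers; but $f$ non-trivial means some $f_i$ contains a monomial in $\ge d$ distinct variables, and since each of the $m\ell$ secrets is independently $t$-CNF-shared, a degree-$d$ monomial's value is information-theoretically hidden from any $dt$ servers (by the standard argument that $dt$ CNF-shares of $d$ independent secrets reveal no product — this is where $t$-privacy of $\Share$ and Lemma~\ref{lem: query spread}-style reasoning enters). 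That contradicts privacy, so $\Delta_\cL(\cC) \ge dt+1$.

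\paragraph{Converse direction (labelweight code $\Rightarrow$ HSS).} Given a labeling $\cL:[n]\to[s]$ and a code $\cC \subseteq \F^n$ of dimension $\ell$, rate $\ell/n$, with $\Delta_\cL(\cC) \ge dt+1$, pick any generator matrix $G \in \F^{\ell\times n}$. We must build $\Eval$ and $\Rec$ so that $\Rec(\mathbf z) = G\mathbf z = (f_1(\mathbf x^{(1)}), \ldots, f_\ell(\mathbf x^{(\ell)}))^{\mathsf T}$ for every $\mathbf f \in \mathcal F$, using $t$-CNF input shares. Set $\Rec$ to be multiplication by $G$. The work is to show each server $j$ can locally compute its block $z_j$ of $n_j := |\cL^{-1}(j)|$ symbols so that $G\mathbf z$ correctly equals the stacked polynomial evaluations. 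Here I would follow the strategy of \cite{BW23}: think of $G$ as partitioned into column-blocks $G = [G^{(1)} \mid \cdots \mid G^{(s)}]$ indexed by servers, so $G\mathbf z = \sum_j G^{(j)} z_j$; we want to choose, for each degree-$\le d$ monomial appearing across the $f_i$, a way to write its contribution to the target vector as $\sum_j G^{(j)} z_j$ with $z_j$ depending only on server $j$'s CNF shares. A single degree-$d$ monomial in instance $i$ expands, over $t$-CNF shares, into a sum over $(dt)$-tuples of the form (product of $d$ CNF-share components) attributable to some set of $\le dt$ servers; linearity lets us handle monomials, hence all of $\mathcal F$, separately and add up. The key algebraic ingredient needed to make this work for arbitrary rate is Lemma~\ref{lem:updated mds-like} (the ``MDS-like'' property): any sufficiently large column-sub-matrix of $G$ has full row rank $\ell$. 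Concretely, since $\Delta_\cL(\cC) \ge dt+1$, the dual distance / parity-check structure implies that for any set $S$ of $\le dt$ servers, the columns of $G$ outside $\cL^{-1}(S)$ already span $\F^\ell$ (otherwise a nonzero codeword of $\cC$ would be supported only on servers in $S$, contradicting the labelweight bound). This lets us "route" the contribution of each monomial — which is naturally expressed in terms of shares held by some $\le dt$ servers — onto the remaining servers' columns via solving a consistent linear system, distributing the bookkeeping so that each server's $z_j$ is well-defined and locally computable; summing over monomials and instances gives the full $\Eval$. Then $\Rec(\mathbf z)=G\mathbf z$ reconstructs $f(\mathbf x)$ by construction, download rate is $\ell/n$, and $t$-privacy is inherited from the CNF sharing since $\Eval$ is applied locally.

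\paragraph{Main obstacle.} The forward direction is essentially the \cite{BW23} argument and should be routine. The real work — and the place where the optimal-rate proof of \cite{BW23} breaks down — is the converse: without the strong symmetry properties (equal download per server; every square sub-array of the $\Rec$-matrix invertible), one cannot simply invert blocks of $G$. The crux is therefore establishing and then exploiting Lemma~\ref{lem:updated mds-like}: proving that labelweight $\ge dt+1$ forces every union of $(n - \text{(symbols on any } \le dt \text{ servers)})$ columns of any generator matrix to have full row rank, and then showing this rank condition is exactly what is needed to solve the per-monomial routing linear systems consistently and locally. I expect the bulk of the proof of Theorem~\ref{thm: main equiv} to be the careful combinatorial/linear-algebraic setup of $\Eval$ from this rank property — managing how a monomial's value, which lives on $\le dt$ CNF-servers, gets re-expressed against the columns of the remaining $\ge n - (\text{those servers' symbols})$ coordinates, and verifying consistency across overlapping monomials. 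The privacy and rate claims, by contrast, are immediate once the construction is in place.
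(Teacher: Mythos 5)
Your proposal is correct and takes essentially the same route as the paper: the forward direction is the privacy argument of \cite{BW23} (which the paper simply imports as Lemma~\ref{lem:HSS_implies_lblwt}), and your converse is exactly the paper's Theorem~\ref{thm:lblwtImpliesHSS} --- CNF sharing, $\Rec$ given by a generator matrix $G$ of $\cC$, and $\Eval$ obtained by solving per-monomial linear systems whose solvability rests on the full-row-rank property of Lemma~\ref{lem:updated mds-like}; the paper merely packages these systems into one block-diagonal system $S\cdot\mathbf{e}=\mathbf{g}$. (One phrasing slip: a CNF-share product $\prod_k y_{k,T_k}$ is computable by the $\geq s-dt$ servers \emph{outside} $\bigcup_k T_k$, not ``held by $\leq dt$ servers,'' but this is precisely how the rank condition is invoked, so the argument is unchanged.)
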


The work of \cite{BW23} proved this equivalence for only the optimal-rate setting and left the equivalence in an arbitrary-rate setting as an open question. Theorem \ref{thm: main equiv}  settles this question and shows that linear HSS and linear codes of sufficient labelweight are indeed equivalent in \textit{all} parameter regimes. The proof of Theorem~\ref{thm: main equiv} follows from Lemma~\ref{lem:HSS_implies_lblwt} (for the forward direction) and Theorem~\ref{thm:lblwtImpliesHSS} (for the converse) below.

We begin with the forward direction.
\begin{lemma}[Follows from the analysis of~\cite{BW23}]\label{lem:HSS_implies_lblwt}
    Let $\ell, t, s, d, m, n$ be integers, with $m \geq d$, $\ell \leq n$. Suppose there exists a $t$-private, $s$-server $\mathbb{F}$-linear HSS $\pi = \left( \Share, \Eval, \Rec \right)$ for any non-trivial (see Definition~\ref{def:nontriv}) $\mathcal{F} \subseteq \text{POLY}_{d,m}(\mathbb{F})^\ell$, with download rate $\textsf{DownloadRate}(\pi)=\ell/n$. Then there exists a linear code $\cC \subseteq \F^n$ with rate $\textsf{DownloadRate}(\pi)$ and a labeling $\cL:[n] \to [s]$ so that $\Delta_\cL(\cC) \geq dt + 1$.
\end{lemma}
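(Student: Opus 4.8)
The plan is to extract the code $\cC$ from the reconstruction matrix $G_\pi$ and the labeling $\cL$ from the server-to-output-symbol assignment, exactly as in the ``HSS for concatenation'' sketch in Section~\ref{sec:tech}, but carried out for general non-trivial $\mathcal{F} \subseteq \POLYdmF^\ell$. First I would invoke Observation~\ref{obs: linear reconstruction is matrix multiplication} to get the matrix $G_\pi \in \F^{\ell \times n}$ with $G_\pi \mathbf{z} = \mathbf{f}(\mathbf{x})$ for all $\mathbf{f} \in \mathcal{F}$ and all secrets, where $n = \sum_{j \in [s]} n_j$ is the total download length (so that $\textsf{DownloadRate}(\pi) = \ell / n$ because $\mathcal{O} = \F^\ell$). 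By Lemma~\ref{lem:Gpi full rank}, $G_\pi$ has rank $\ell$, so $\cC := \mathrm{rowSpan}(G_\pi)$ is a linear code of dimension $\ell$ and rate $\ell/n = \textsf{DownloadRate}(\pi)$. Define $\cL : [n] \to [s]$ by $\cL(i) = j$ whenever coordinate $i$ of $\mathbf{z}$ is the output share contributed by server $j$; since every server sends at least one symbol (otherwise $\Rec$ could not depend on it, and $G_\pi$ would have a zero column, but we actually only need surjectivity which follows because all $n_j \geq 1$ can be assumed WLOG, or if some $n_j = 0$ we can restrict $[s]$), $\cL$ is a surjection. It remains to show $\Delta_\cL(\cC) \geq dt + 1$.

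The heart of the argument is the privacy-based contradiction. Suppose for contradiction there is a nonzero $\mathbf{m} \in \F^\ell$ with $\Delta_\cL(\mathbf{m} G_\pi) \leq dt$; let $\mathbf{c} = \mathbf{m} G_\pi$ and let $T \subseteq [s]$ be the set of $\leq dt$ labels touched by $\supp(\mathbf{c})$. Fix a witness $\mathbf{f} = (f_1, \ldots, f_\ell) \in \mathcal{F}$ for non-triviality, so each $f_i$ contains a monomial with $\geq d$ distinct variables; by a change of variables / restriction we may reduce to the case that this monomial is $X_1 X_2 \cdots X_d$ (or more carefully, pick an appropriate monomial $\mu_i$ in $f_i$ with $d$ distinct variables and argue about the coefficient of the corresponding ``cross term''). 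Then $\mathbf{m}^{\mathsf T} \mathbf{f}(\mathbf{x}) = \sum_i m_i f_i(\mathbf{x}^{(i)}) = \langle \mathbf{c}, \mathbf{z}\rangle$ is a fixed $\F$-linear function of only those output shares $z_j$ with $j \in T$, i.e., computable from the views of the $\leq dt$ servers in $T$. Now I would argue this violates $t$-privacy: intuitively, a set of $dt$ servers holding $t$-CNF shares of each $x_k^{(i)}$ cannot reconstruct a degree-$d$ monomial in the $x_k^{(i)}$'s — this is the content of the $\Share$ side of the argument. Concretely, partition $T$ (or rather analyze it server-by-server): recovering $f_i(\mathbf{x}^{(i)})$ including its degree-$d$ cross term requires, by the structure of CNF sharing, the participation of at least $dt+1$ servers for that instance $i$ (this is essentially Lemma~\ref{lem: query spread}, the ``query spread'' lemma, which is implicit in \cite{FIKW22,BW23}); since $|T| \leq dt$, the value $\langle \mathbf c, \mathbf z\rangle$ restricted to $T$ cannot have the correct dependence on $\mathbf{x}$ for all inputs, contradiction — or more precisely, set all $m_i x^{(i)}$-blocks to zero except one index $i^\star$ with $m_{i^\star} \neq 0$, reducing to a single-instance statement about computing $m_{i^\star} f_{i^\star}$ from $\leq dt$ servers.

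The main obstacle I anticipate is making the privacy contradiction fully rigorous in the general-$\mathcal{F}$, general-rate setting rather than the toy identity-function case: one has to (i) handle that $\Eval$ is nonlinear, so the output shares $z_j$ are arbitrary functions of server $j$'s input shares, and a linear combination $\langle \mathbf c, \mathbf z\rangle$ over a subset of servers is some fixed (possibly complicated) function of those servers' CNF shares; and (ii) argue that no such function can equal $m_{i^\star} f_{i^\star}(\mathbf{x}^{(i^\star)})$ for all $\mathbf{x}$, which is where the degree-$d$ monomial and the $t$-privacy of CNF sharing combine — the cleanest route is to cite Lemma~\ref{lem: query spread} (query spread for CNF-shared HSS) essentially verbatim, since it already packages exactly this ``$\Rec$ must touch $\geq dt+1$ servers per instance'' fact, and then observe that a codeword of labelweight $\leq dt$ yields a ``$\Rec$-like'' linear functional touching $\leq dt$ servers, contradicting it. Since the lemma is stated to ``follow from the analysis of~\cite{BW23}'', I expect the write-up to largely reduce to citing the appropriate lemmas from that work and checking that nothing in their argument used optimality of the rate — only that $G_\pi$ is full rank and that CNF sharing is $t$-private.
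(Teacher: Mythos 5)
Your proposal is correct and matches the paper's approach: the paper proves Lemma~\ref{lem:HSS_implies_lblwt} simply by citing Lemma~12 of \cite{BW23}, observing that the forward-direction argument there (the reconstruction matrix $G_\pi$ via Observation~\ref{obs: linear reconstruction is matrix multiplication}, full rank via Lemma~\ref{lem:Gpi full rank}, the server-induced labeling, and the privacy contradiction showing labelweight at least $dt+1$) never uses optimality of the download rate. Your reconstruction of that argument, including the reduction to a single instance $i^\star$ and the appeal to the ``a degree-$d$ monomial requires $dt+1$ servers'' fact from \cite{FIKW22,BW23}, is exactly the analysis the paper defers to.
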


Though the statement of Lemma \ref{lem:HSS_implies_lblwt} is more general than its optimal-rate counterpart in \cite{BW23}, the proof is analogous; a careful reading of Lemma 12 in \cite{BW23} shows that this forward direction does not leverage any of the strong symmetries of optimal rate linear HSS. Thus, we refer the reader to \cite{BW23} for a proof.

Thus, in the rest of this section we focus on the converse, which does deviate from the analysis of \cite{BW23}, as we cannot leverage the same strong symmetries that they did, as discussed in Section~\ref{sec:tech}. We first formally state the converse.

\begin{theorem}\label{thm:lblwtImpliesHSS}
    Let $\ell, t, s, d, m, n$ be integers, with $m \geq d$. Suppose that there exists a linear code $\cC \subseteq \F^n$ with dimension $\ell$ and rate $\ell/n$. Suppose there exists a labeling $\cL: [n] \to [s]$ so that $\Delta_\cL(\cC) \geq dt + 1$. Then there exists a $t$-private, $s$-server linear HSS $\pi = (\Share, \Eval, \Rec)$ for $\POLYdmF^\ell$ with download rate $\ell/n$ and amortization parameter $\ell$.
\end{theorem}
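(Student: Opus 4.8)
The plan is to construct the scheme by taking $\Share$ to be the $t$-private, $s$-server CNF sharing of Definition~\ref{def:CNF} (so that $t$-privacy holds by construction and is untouched by the choice of $\Eval$ and $\Rec$), and then to engineer $\Eval$ and $\Rec$ around a fixed generator matrix $G \in \F^{\ell \times n}$ of $\cC$. I would set $\Rec(\mathbf{z}) = G\mathbf{z}$, and declare that the $c$-th symbol of $\mathbf{z}$ is the symbol downloaded from server $\cL(c)$; since $\sum_{j\in[s]} |\cL^{-1}(j)| = n$, this gives download cost $n$ symbols against an output of $\ell$ symbols, hence download rate exactly $\ell/n$ and amortization parameter $\ell$, as required. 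Everything then reduces to showing that server $j$ can \emph{locally} produce its $|\cL^{-1}(j)|$ output symbols so that $G\mathbf{z} = \bigl(f_1(\mathbf{x}^{(1)}), \ldots, f_\ell(\mathbf{x}^{(\ell)})\bigr)^{\mathsf{T}}$ for every $\mathbf{f} \in \POLYdmF^\ell$.

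First I would recall the standard ``atomic decomposition'' of CNF shares: for a monomial $\mu = X_{k_1}\cdots X_{k_{d'}}$ of degree $d' \le d$ (with repetitions allowed, and with the constant term treated as the empty product), evaluating on CNF-shared secrets gives $\mu(\mathbf{x}^{(i)}) = \sum_{(T_1,\ldots,T_{d'})} \prod_{a=1}^{d'} y^{(i)}_{k_a,T_a}$, a sum over tuples of size-$t$ subsets of $[s]$, where each summand is computable by every server $j \notin \bigcup_a T_a$ and $\bigl|\bigcup_a T_a\bigr| \le d't \le dt$. Expanding each $f_i$ into monomials, I obtain for each $i \in [\ell]$ a finite family of atomic terms $\{g_\tau : \tau \in \mathcal{T}_i\}$, each a known $\F$-multiple of a product of CNF-share components from batch $i$, together with a ``bad set'' $B_\tau \subseteq [s]$ with $|B_\tau| \le dt$, such that $f_i(\mathbf{x}^{(i)}) = \sum_{\tau \in \mathcal{T}_i} g_\tau$ and server $j$ can compute $g_\tau$ whenever $j \notin B_\tau$.

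The heart of the argument is the ``MDS-like'' property of $G$ recorded in Lemma~\ref{lem:updated mds-like}: because $\Delta_\cL(\cC) \ge dt+1$, for every $B \subseteq [s]$ with $|B| \le dt$ the columns of $G$ indexed by $S_B := \cL^{-1}([s]\setminus B)$ span all of $\F^\ell$. Indeed, $G$ restricted to $S_B$ fails to have full row rank iff some nonzero $\mathbf{u} \in \F^\ell$ makes $\mathbf{u}^{\mathsf{T}} G$ vanish on $S_B$, i.e.\ iff some nonzero codeword of $\cC$ touches only labels in $B$; but every nonzero codeword touches at least $dt+1 > |B|$ distinct labels. Consequently, for each $i \in [\ell]$ and each $B$ with $|B| \le dt$ I may fix, as public data depending only on $G$ and $\cL$, a vector $\mathbf{w}_{i,B} \in \F^n$ supported on $S_B$ with $G\,\mathbf{w}_{i,B} = \mathbf{e}_i$.

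Finally I would define $\Eval$: on input $\mathbf{f} = (f_1,\ldots,f_\ell)$, server $j$ outputs $z_c$ for each $c \in \cL^{-1}(j)$, where
\[
z_c \;=\; \sum_{i=1}^{\ell}\ \sum_{\tau \in \mathcal{T}_i} (\mathbf{w}_{i,B_\tau})_c \cdot g_\tau .
\]
This is locally computable: if $(\mathbf{w}_{i,B_\tau})_c \ne 0$ then $c \in S_{B_\tau}$, so $\cL(c)=j \in [s]\setminus B_\tau$ and server $j$ knows $g_\tau$. Assembling the symbols, $\mathbf{z} = \sum_{i,\tau} g_\tau\,\mathbf{w}_{i,B_\tau}$, hence $\Rec(\mathbf{z}) = G\mathbf{z} = \sum_{i,\tau} g_\tau\,\mathbf{e}_i = \sum_{i}\bigl(\sum_{\tau\in\mathcal{T}_i} g_\tau\bigr)\mathbf{e}_i = \bigl(f_1(\mathbf{x}^{(1)}),\ldots,f_\ell(\mathbf{x}^{(\ell)})\bigr)^{\mathsf{T}}$, which gives correctness. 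I expect the main obstacle, and the only genuine departure from \cite{BW23}, to be precisely Lemma~\ref{lem:updated mds-like} together with handling the asymmetry of the arbitrary-rate setting: the servers may contribute unequal numbers of symbols (the $|\cL^{-1}(j)|$ need not be equal) and $G$ need not decompose into a tidy array of invertible blocks, so the routing in the definition of $\Eval$ must be phrased directly in terms of the vectors $\mathbf{w}_{i,B}$ rather than inverting block submatrices; once the MDS-like rank property is in hand, the linear-algebraic core above is insensitive to this asymmetry.
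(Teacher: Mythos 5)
Your proposal is correct and follows essentially the same route as the paper: CNF sharing for $\Share$, $\Rec$ given by a generator matrix $G$ of $\cC$, expansion of each $f_i$ over CNF shares into atomic monomials whose bad sets have size at most $dt$, and the full-row-rank property of Lemma~\ref{lem:updated mds-like} as the crux. Your explicit routing vectors $\mathbf{w}_{i,B}$ amount to a block-wise closed-form solution of the linear system $S \cdot \mathbf{e} = \mathbf{g}$ that the paper sets up (each block being $G(\Lambda_\fm)$), so the two arguments coincide up to packaging.
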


The main ingredient in proving this direction without the strong symmetries of optimal rate linear HSS is the following lemma, which neatly generalizes the results of Lemma 13 and Corollaries 14, 15 of \cite{BW23}.

\begin{lemma}\label{lem:updated mds-like}
    Let $\mathcal{C}$ be a length $n$, dimension $\ell$ linear code over $\mathbb{F}_q$ with generator matrix $G \in \mathbb{F}_q^{\ell \times n}$. Let $\mathcal{L}:[n] \to [s]$ be a surjective labeling such that $\Delta_\mathcal{L}(\mathcal{C}) \geq dt+1$.
    
    For $\Lambda \subseteq [s]$, let $G(\Lambda)$ denote the restriction of $G$ to the columns $r \in [n]$ so that $\cL(r) \in \Lambda$. Then for any $|\Lambda| \geq s-dt$, $G(\Lambda)$ has full row rank.
\end{lemma}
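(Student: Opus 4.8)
The plan is to argue by contradiction: suppose $|\Lambda| \geq s - dt$ but $G(\Lambda)$ does not have full row rank $\ell$. Then the rows of $G(\Lambda)$ are linearly dependent, so there is a nonzero vector $\mathbf{m} \in \F_q^\ell$ with $\mathbf{m} G(\Lambda) = \mathbf{0}$. Consider the codeword $\mathbf{c} = \mathbf{m} G \in \cC$. Since $\mathbf{m} \neq \mathbf{0}$ and $G$ has rank $\ell$ (its rows are a basis of $\cC$), we have $\mathbf{c} \neq \mathbf{0}$. The key observation is that the equation $\mathbf{m} G(\Lambda) = \mathbf{0}$ says exactly that $c_r = 0$ for every column index $r \in [n]$ with $\cL(r) \in \Lambda$. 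Hence the support of $\mathbf{c}$ is contained in the columns whose label lies in $[s] \setminus \Lambda$, and therefore $\Delta_\cL(\mathbf{c}) \leq |[s]\setminus\Lambda| = s - |\Lambda| \leq s - (s - dt) = dt$. This contradicts the hypothesis $\Delta_\cL(\cC) \geq dt + 1$, completing the proof.

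The core of the argument is the translation between "a left-kernel vector of the column-restriction $G(\Lambda)$" and "a codeword whose labelweight avoids $\Lambda$"; once that dictionary is in place the rest is a one-line counting bound on the number of labels outside $\Lambda$. I would state this translation carefully, since it is the only place where the definition of $G(\Lambda)$ (restriction to columns with label in $\Lambda$, not columns indexed by $\Lambda$) is used, and getting the bookkeeping right between column indices in $[n]$ and labels in $[s]$ is where a careless write-up could slip. I do not anticipate a genuine obstacle here — this lemma is the "easy" structural fact that replaces the stronger MDS-like property available to \cite{BW23} in the optimal-rate regime — but I would be slightly careful to note that full row rank of $G(\Lambda)$ is equivalent to the rows of $G$ remaining linearly independent after restricting to those columns, which is the form in which the lemma gets used later (to invert square subarrays after picking out $\ell$ suitable columns). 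It may also be worth remarking explicitly that surjectivity of $\cL$ is not actually needed for this direction, only that $\cL$ maps into $[s]$, but I would keep the hypothesis as stated for consistency with the rest of the paper.
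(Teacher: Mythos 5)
Your proposal is correct and follows essentially the same argument as the paper: both take a nonzero left-kernel vector of $G(\Lambda)$ and observe that the corresponding codeword $\mathbf{m}G$ has labelweight at most $|[s]\setminus\Lambda| \leq dt$, contradicting $\Delta_\cL(\cC) \geq dt+1$ (the paper first reduces to $|\Lambda| = s-dt$, which you handle directly via $s - |\Lambda| \leq dt$, an immaterial difference). If anything, your write-up is slightly more careful than the paper's in explicitly noting that $\mathbf{m} \neq \mathbf{0}$ together with $\mathrm{rank}(G) = \ell$ forces $\mathbf{m}G \neq \mathbf{0}$, which is needed since labelweight of a code is a minimum over nonzero codewords.
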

\begin{proof}
    Let $\Lambda = \Lambda' \cup \Lambda''$ where $\Lambda' \cap \Lambda'' = \varnothing$ and $|\Lambda'| = s-dt$. If $G(\Lambda')$ achieves full row rank, then so does $G(\Lambda)$, since adding columns to a matrix does not induce linear independence among its rows. Hence, it suffices to consider only $|\Lambda| = s-dt$.

    Up to a permutation of columns, $G$ can be written as $G = \left[ \; G(\Lambda) \; \vert \; G([s] \setminus \Lambda) \right]$. Let $w$ denote the number of columns in $G(\Lambda)$.
    
    Assume towards a contradiction that there exists some $\mathbf{v} \in \mathbb{F}_q^\ell$ such that $\mathbf{v} G(\Lambda) = 0^w$. Then 
    \begin{equation*}
        \mathbf{v}G = \left[ \mathbf{v}G(\Lambda) \; | \; \mathbf{v}G([s]\setminus\Lambda ) \right] = \left[ 0^w \; | \; \mathbf{v}G([s]\setminus\Lambda ) \right].
    \end{equation*}
    Since $|[s] \setminus \Lambda| = dt$, it follows that
    \begin{equation*}
        \Delta_\mathcal{L}(\mathbf{v}G ) = \Delta_\mathcal{L} ( \mathbf{v}G([s]\setminus\Lambda )) \leq dt
    \end{equation*}
    which contradicts $\Delta_\mathcal{L}(\mathcal{C}) \geq dt+1$.
\end{proof}

Let $G_\pi$ be a reconstruction matrix. At a high level, Lemma \ref{lem:updated mds-like} says that any sufficiently large submatrix of $G_\pi$, obtained by only considering columns labeled with a sufficiently large subset $\Lambda \subseteq [s]$, must be full-rank. \cite{BW23} proved that such a property held for $G_\pi$ in the optimal-rate regime that relied heavily on the fact that, in optimal-rate linear HSS, the output client downloads an equal number of output symbols from each server. This is equivalent to requiring that the sets $\mathcal{L}^{-1}(y) := \lbrace x \in [n] : \mathcal{L}(x) = y \rbrace$ be the same size for all $y \in [s]$. The proof of Lemma \ref{lem:updated mds-like} shows that, perhaps surprisingly, sufficiently large submatrices of $G_\pi$ still achieve full-rank even when the output client is allowed to download arbitrary numbers of output symbols from each server.

The remainder of the proof of Theorem \ref{thm:lblwtImpliesHSS} proceeds in a familiar syntax to that of \cite{BW23}; we give a compact presentation here for completeness, abridging sections that are similar to \cite{BW23} while leaving enough detail to contextualize the novel application of Lemma \ref{lem:updated mds-like} in Claim \ref{claim:fullrank}.

\begin{proof}[Proof of Theorem \ref{thm:lblwtImpliesHSS}]
    
    Let $G \in \F^{\ell \times n}$ be any generator matrix for $\cC$.
    We construct $\pi$ by its functions $\Share$, $\Eval$, and $\Rec$.  For $\Share$, we will use $t$-CNF sharing.  We define $\Rec$ using the generator matrix $G$ (as in Observation \ref{obs: linear reconstruction is matrix multiplication}).
    
    
    It remains only to define $\Eval$. Since reconstruction is linear, we may assume without loss of generality that the function $\mathbf{f} = (f_1, \ldots, f_\ell)$ that the HSS scheme is tasked to compute has $f_j(x_1, \ldots, x_m) = \prod_{i=1}^d x_i$ for all $j \in [\ell]$.
    
    Let 
    $\mathcal{T} = \lbrace T \subseteq [s] : |T| = t \rbrace$  be the set of size-$t$ subsets of $[s]$.  Let $\mathbf{x} = (\mathbf{x}^{(1)}, \ldots, \mathbf{x}^{(\ell)}) \in (\F^m)^\ell$ denote the secrets to be shared.  For $T \in \mathcal{T}$, $r \in [\ell]$, and $j \in [m]$, let $y^{(r)}_{j, T}$ denote the CNF shares of $x^{(r)}_j$, so 
    \[ x^{(r)}_j = \sum_T y^{(r)}_{j,T}.\]  Thus, for each $r \in [\ell]$,
    \begin{equation}\label{eq:want}
    f_i(\mathbf{x}^{(r)}) = \sum_{\mathbf{T} \in \mathcal{T}^d} \prod_{k=1}^d y^{(r)}_{k,T_k}. 
    \end{equation}
    Let $\mathbf{y}_j$ denote the set of CNF shares that server $j$ holds: $\mathbf{y}_j = (y_{k,T}^{(i)} : k \in [d], T \in \mathcal{T}, i \in [\ell], j \not\in T)$.  We will treat $\mathbf{y}_j$ as tuples of formal variables. We next define the following classes of monomials, in the variables $y_{j,T}^{(i)}$.  Let
\begin{equation*}
    \mathcal{M} = \left\lbrace y_{1,T_1}^{(i)} y_{2, T_2}^{(i)} \cdots y_{d, T_d}^{(i)} \; : \; \mathbf{T} \in \mathcal{T}^d, i \in [\ell] \right\rbrace.
\end{equation*}
Given a server $j \in [s]$, let $\mathcal{M}_j$ be the subset of $\mathcal{M}$ locally computable by server $j$:
\begin{equation*}
    \mathcal{M}_j  = \left\lbrace y_{1,T_1}^{(i)} y_{2, T_2}^{(i)} \cdots y_{d, T_d}^{(i)} \in \mathcal{M}  \; : \; \mathbf{T} \in \mathcal{T}^d, i \in [\ell], j \not\in \bigcup_{k \in [d]} T_k \right\rbrace.
\end{equation*}

$\Eval(\mathbf{f}, j, \mathbf{y}_j)$ determines server $j$'s output shares and is defined as $n_j = |\cL^{-1}(j)|$ polynomials of degree $d$ in the monomials $\mathcal{M}_j$.  Define a vector of variables $\mathbf{e} \in \F^{ \sum_{r \in [n]} |\mathcal{M}_{\cL(r)}|}$, indexed by pairs $(r, \chi)$ for $\chi \in \cM_{\cL(r)}$.  The vector $\mathbf{e}$ will encode the function $\Eval$ as follows.  For each $r \in [n]$,  define $z_r = z_r(\mathbf{y}_{\cL(r)})$ to be the polynomial in the variables $\mathbf{y}_{\cL(r)}$ given by
\begin{equation}\label{eq:zr}
z_r(\mathbf{y}_{\cL(r)})  := \sum_{\chi \in \cM_{\cL(r)}} \mathbf{e}_{r, \chi} \cdot \chi(\mathbf{y}_{\cL(r)}). 
\end{equation}
Then for each server $j \in [s]$ we define $\Eval$ by
\[ \Eval( \mathbf{f}, j, \mathbf{y}_j ) = ( z_r(\mathbf{y}_j) : r \in \cL^{-1}(j) ) \in \F^{n_j}. \]
To solve for the coefficients in $\mathbf{e}$ defining $\Eval$, define the matrix $S \in \F^{\ell |\cM| \times \sum_{r \in [n]}|\cM_{\cL(r)}|}$:
\begin{itemize}
    \item The rows of $S$ are indexed by pairs $(i, \fm) \in [\ell] \times \cM$.
    \item The columns of $S$ are indexed by pairs $(r, \chi)$ for $r \in [s]$ and $\chi \in \cM_r$.
    \item The entry of $S$ indexed by $(i,\fm)$ and $(r, \chi)$ is given by:
    \[ S[(i,\fm), (r, \chi)] = \begin{cases} G[i,r] & \fm = \chi \\ 0 & \text{else} \end{cases}.\]
\end{itemize}
Define a vector $\mathbf{g} \in \F^{\ell|\cM|}$ so that the coordinates of $\mathbf{g}$ are indexed by pairs $(i,\fm) \in [\ell] \times \cM$, so that
\[ \mathbf{g}[(i,\fm)] = \begin{cases} 1 & \psi_i(\fm) \\ 
0 & \text{else} \end{cases}\]
where
\[ \psi_i(\fm) = \begin{cases} 1 & \text{ $\fm$ is of the form $\prod_{k=1}^d y_{k,T_k}^{(i)}$ for some $\mathbf{T} \in \mathcal{T}^d$ } \\ 0 & \text{else} \end{cases}\]
Notice that \eqref{eq:want} implies that for $i \in [\ell]$,
\begin{equation}\label{eq:want2} f_i(\mathbf{x}^{(i)}) = \sum_{\mathbf{T} \in \mathcal{T}^d} \prod_{k=1}^d y_{k,T_k}^{(i)} = \sum_{\fm \in \mathcal{M}} \psi_i(\fm) \cdot \fm(\mathbf{y}).\end{equation}

The following claim is shown in \cite{BW23}.

\begin{claim}[\cite{BW23}, Claim 17]\label{claim:correct}
    Suppose that $S \cdot \mathbf{e} = \mathbf{g}$.  Then the HSS scheme $\pi = (\Share, \Eval, \Rec)$, where $\Share$ and $\Rec$ are as above, and $\Eval$ is defined by $\mathbf{e}$ as above, satisfies the \emph{Correctness} property in Definition~\ref{def:HSS}.
\end{claim}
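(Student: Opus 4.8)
The plan is a direct verification: I will compute the reconstruction map $\mathbf{z}\mapsto G\mathbf{z}$ coordinate by coordinate and show it outputs $\big(f_1(\mathbf{x}^{(1)}),\dots,f_\ell(\mathbf{x}^{(\ell)})\big)^{\mathsf T}$, using the hypothesis $S\mathbf{e}=\mathbf{g}$ at exactly one point to turn a linear-algebraic identity into the required functional identity. Two preliminary reductions make this clean. First, by $\F$-linearity of $\Share$, of $\Eval$ in its coefficient vector $\mathbf{e}$, and of $\Rec$, it suffices to treat the single instance $f_j(x_1,\dots,x_m)=\prod_{i=1}^{d}x_i$ fixed in the construction; arbitrary $\mathbf{f}\in\POLYdmF^{\ell}$ then follow by taking $\F$-linear combinations of the associated $\Eval$ outputs. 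Second, since $t$-CNF sharing produces shares with $x^{(r)}_k=\sum_{T\in\mathcal{T}}y^{(r)}_{k,T}$ with probability $1$, it is enough to establish $G\mathbf{z}=\big(f_1(\mathbf{x}^{(1)}),\dots,f_\ell(\mathbf{x}^{(\ell)})\big)^{\mathsf T}$ as a polynomial identity in the formal share variables $\mathbf{y}$; Correctness then holds with probability $1$.

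The core computation proceeds in three steps. First, expand: since each $z_r=\sum_{\chi\in\mathcal{M}_{\cL(r)}}\mathbf{e}_{r,\chi}\,\chi(\mathbf{y})$ by \eqref{eq:zr} (the monomial $\chi$ involves only variables held by server $\cL(r)$, so $\chi(\mathbf{y}_{\cL(r)})=\chi(\mathbf{y})$), we have
\[ (G\mathbf{z})_i \;=\; \sum_{r\in[n]}G[i,r]\,z_r \;=\; \sum_{r\in[n]}\ \sum_{\chi\in\mathcal{M}_{\cL(r)}}G[i,r]\,\mathbf{e}_{r,\chi}\,\chi(\mathbf{y}). \]
Second, regroup the double sum by the monomial $\fm\in\mathcal{M}$ produced, reading off the coefficient of each $\fm(\mathbf{y})$:
\[ (G\mathbf{z})_i \;=\; \sum_{\fm\in\mathcal{M}}\Bigg(\ \sum_{\substack{r\in[n]\,:\ \fm\in\mathcal{M}_{\cL(r)}}}G[i,r]\,\mathbf{e}_{r,\fm}\ \Bigg)\,\fm(\mathbf{y}). \]
Third, recognize the inner coefficient as the $(i,\fm)$ entry of $S\mathbf{e}$ --- recalling that the columns of $S$, like the entries of $\mathbf{e}$, are indexed by pairs $(r,\chi)$ with $r\in[n]$ and $\chi\in\mathcal{M}_{\cL(r)}$, and that $S[(i,\fm),(r,\chi)]=G[i,r]$ when $\fm=\chi$ and $0$ otherwise --- and then apply $S\mathbf{e}=\mathbf{g}$ to replace it by $\mathbf{g}[(i,\fm)]=\psi_i(\fm)$. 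This yields $(G\mathbf{z})_i=\sum_{\fm\in\mathcal{M}}\psi_i(\fm)\,\fm(\mathbf{y})$, which equals $f_i(\mathbf{x}^{(i)})$ by \eqref{eq:want2}. Since this holds for every $i\in[\ell]$, we get $\Rec(\mathbf{z})=G\mathbf{z}=f(\mathbf{x})$, i.e.\ Correctness.

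Finally, I will record the well-definedness check that $\Eval$ is a legal local computation: for $r\in\cL^{-1}(j)$ the polynomial $z_r$ is supported on monomials in $\mathcal{M}_{\cL(r)}=\mathcal{M}_j$, and by definition every monomial in $\mathcal{M}_j$ is a product of variables $y^{(i)}_{k,T}$ with $j\notin T$, i.e.\ variables in server $j$'s share $\mathbf{y}_j$; so $\Eval(\mathbf{f},j,\mathbf{y}_j)$ really is a function of $\mathbf{y}_j$ alone. I do not expect a genuine obstacle here --- the statement is attributed to \cite{BW23} and its content is exactly this index-tracking identity; the only places to be careful are the regrouping step and keeping the column-indexing of $S$ aligned with that of $\mathbf{e}$.
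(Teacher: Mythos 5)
Your proof is correct and is exactly the intended argument: the paper defers this claim to \cite{BW23}, and the direct expand--regroup--substitute computation you give (together with the correct reading of the column indexing of $S$ as pairs $(r,\chi)$ with $r\in[n]$, $\chi\in\cM_{\cL(r)}$, matching the indexing of $\mathbf{e}$) is precisely that index-bookkeeping verification, with the locality and probability-$1$ points handled appropriately. Nothing further is needed.
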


Given Claim~\ref{claim:correct}, it remains only to show there exists $\mathbf{e}$ so that $S \cdot \mathbf{e} = \mathbf{g}$; this follows from the fact that $S$ is full-rank.

\begin{claim}\label{claim:fullrank}
Let $S$ be as above.  Then $S$ has full row rank.
\end{claim}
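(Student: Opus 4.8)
The plan is to read off the block structure of $S$ and reduce the claim to Lemma~\ref{lem:updated mds-like}. The first step is the observation that, by the entry formula defining $S$, an entry $S[(i,\fm),(r,\chi)]$ is nonzero only when $\fm=\chi$. So if we group the rows of $S$ into blocks $R_\fm:=\{(i,\fm):i\in[\ell]\}$ indexed by monomials $\fm\in\cM$, and group the columns into blocks $C_\chi:=\{(r,\chi):r\in[n],\ \chi\in\cM_{\cL(r)}\}$, then every nonzero entry of $S$ lies in a single pair $R_\fm\times C_\fm$. After permuting rows and columns, $S$ is literally block diagonal, with one block $S_\fm:=S[R_\fm,C_\fm]$ for each $\fm\in\cM$, and these blocks partition the rows of $S$. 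Hence it suffices to show each $S_\fm$ has full row rank $\ell$.

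The second step is to identify each diagonal block with a column-restriction of $G$. Writing a generic $\fm\in\cM$ as $y^{(i_0)}_{1,T_1}\cdots y^{(i_0)}_{d,T_d}$ with $\mathbf{T}=(T_1,\ldots,T_d)\in\mathcal{T}^d$ and $i_0\in[\ell]$, a server $j$ can locally compute $\fm$ (i.e.\ $\fm\in\cM_j$) precisely when $j\notin\bigcup_{k=1}^d T_k$. Set $\Lambda_\fm:=[s]\setminus\bigcup_{k=1}^d T_k$; since each $|T_k|=t$ we have $|\Lambda_\fm|\geq s-dt$, and since $dt<s$ by Theorem~\ref{thm:FIKW rate LB} together with surjectivity of $\cL$, the block $S_\fm$ is nonempty. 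Reading off the entry formula, $S_\fm$ has rows indexed by $i\in[\ell]$, columns indexed by those $r\in[n]$ with $\cL(r)\in\Lambda_\fm$, and $(i,r)$-entry equal to $G[i,r]$ — that is, $S_\fm=G(\Lambda_\fm)$ in the notation of Lemma~\ref{lem:updated mds-like}. Because $\Delta_\cL(\cC)\geq dt+1$ and $|\Lambda_\fm|\geq s-dt$, Lemma~\ref{lem:updated mds-like} gives that $G(\Lambda_\fm)$, hence $S_\fm$, has full row rank $\ell$. Summing over the blocks, $S$ has full row rank $\ell|\cM|$, which is the total number of its rows.

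I do not expect a genuine obstacle here: the only real work is the bookkeeping in the first two steps — correctly extracting the block-diagonal structure from the heavy double-index notation, and verifying that the block attached to $\fm$ is exactly the column restriction $G(\Lambda_\fm)$ with $|\Lambda_\fm|\geq s-dt$ (so that Lemma~\ref{lem:updated mds-like} applies). Once that identification is in place, full row rank of $S$ is immediate, and Claim~\ref{claim:fullrank} then yields a solution $\mathbf{e}$ to $S\cdot\mathbf{e}=\mathbf{g}$, completing the construction of $\Eval$ and hence the proof of Theorem~\ref{thm:lblwtImpliesHSS} via Claim~\ref{claim:correct}.
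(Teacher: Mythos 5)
Your proposal is correct and follows essentially the same route as the paper's proof: decompose $S$ into a block-diagonal matrix with one block per monomial $\fm\in\cM$, identify each block with the column restriction $G(\Lambda_\fm)$ where $|\Lambda_\fm|\geq s-dt$, and apply Lemma~\ref{lem:updated mds-like}. The only differences are cosmetic (the paper keeps the all-zero columns in each block and drops them afterward, and it does not need your side remark invoking Theorem~\ref{thm:FIKW rate LB}, since $s>dt$ already follows from the labelweight hypothesis).
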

\begin{proof}
    For $\fm \in \cM$, consider the block $S^{(\fm)}$ of $S$ restricted to the rows $\{ (i, \fm) : i \in [\ell] \}$ and the columns $\{ (r, \fm) : r \in [n] \}$.  Thus, $S^{(\fm)} \in \F^{\ell \times n}$, and we may index the rows of $S^{(\fm)}$ by $i \in [\ell]$ and the columns by $r \in [n]$.  By the definition of $S$, we have
    \[S^{(\fm)}[i,r] = G[i,r] \cdot \mathbf{1}[\fm \in \cM_{\cL(r)}]. \]
     For $\fm \in \cM$, let $\Lambda_\fm = \inset{ \lambda \in [s] : \fm \in \cM_\lambda }.$   Then 
    \begin{equation}\label{eq:samerank} \mathrm{rank}(S^{(\fm)}) = \mathrm{rank}(G(\Lambda_{\fm})),
    \end{equation}
    as we can make $G(\Lambda_{\fm})$ out of $S^{(\fm)}$ by dropping all-zero columns.
    We claim that the rank in \eqref{eq:samerank} is exactly $\ell$.  To see this, first
    note that $|\Lambda_{\fm}| \geq s - dt$.  Indeed, if 
    $ \fm = \prod_{k=1}^d y^{(i)}_{k, T_k} $ for some $\mathbf{T} \in \mathcal{T}^d$, then the only $\lambda \in [s]$ so that $\fm \not\in \cM_\lambda$ are those so that $\lambda \in \bigcup_{k=1}^d T_k$, and there are at most $dt$ such $\lambda$.  Let $\Lambda_{\fm}' \subseteq \Lambda_{\fm}$ be an arbitrary subset of size $s - dt$.  Then Lemma~\ref{lem:updated mds-like} implies that $G(\Lambda_\fm)$ has full row rank.
    Equation~\ref{eq:samerank} then implies that $S^{(\fm)}$ also has full row rank, namely that it has rank $\ell$.

    Next, observe that, after rearranging rows and columns appropriately, $S$ is a block-diagonal matrix with blocks $S^{(\fm)}$ on the diagonal, for each $\fm \in \cM$.  Indeed, any entry $S[(i,\fm), (r, \chi)]$ where $\fm \neq \chi$ is zero by construction, and any entry where $\fm = \chi$ is included in the block $S^{(\fm)}$.  Thus, we conclude that $S$ too has full row-rank, as desired.
\end{proof}

Claim~\ref{claim:correct} says that any $\mathbf{e}$ so that $S\cdot\mathbf{e} = \mathbf{g}$ corresponds to a correct $\Eval$ function (with $\Share$ and $\Rec$ as given above), and Claim~\ref{claim:fullrank} implies that we can find such an $\mathbf{e}$ efficiently.  Thus we can efficiently find a description of $\pi = (\Share, \Eval, \Rec)$.  It remains to verify that $\textsf{DownloadRate}(\pi) \geq \ell /n$, which follows from construction, as the download rate of $\pi$ is equal to the rate of $\cC$, which is by definition $\ell/n$.\end{proof}

\section{Linear HSS from Hermitian Codes}\label{sec: hermitian}
Linear HSS schemes presented by \cite{FIKW22}, \cite{BW23} both achieve optimal download rate $1-dt/s$ but require large amortization parameters to do so. \cite{FIKW22} showed it was sufficient to take amortization parameter $\ell = (s-dt) \log(s) = O(s \log(s))$, and \cite{BW23} proved that such an amortization parameter is in fact necessary in many parameter regimes, and is off by at most 1 otherwise. 

It is a natural to ask whether linear HSS schemes that achieve a better trade-off between download rate and amortization parameter exist. \textit{Specifically, can we make minor concessions to download rate and save substantially on the amortization needed?}

Through the lens of Theorem \ref{thm:lblwtImpliesHSS}, this is equivalent to asking whether there exists a labelweight code with minimum labelweight $\geq dt+1$ that achieves good rate at low dimension. A natural first attempt at an existential result would be via random coding; specifically, building a linear HSS scheme by starting with a random linear code and following the construction of Theorem \ref{thm:lblwtImpliesHSS}. Unfortunately (and perhaps surprisingly!), we show in Appendix \ref{apx: linear hss from random codes} that this results in strictly worse parameters than \cite{FIKW22}, \cite{BW23}.

In the following sections we take a different approach. We derive our labelweight codes straightforwardly from well-studied algebraic geometric codes: we set the number of servers $s$ equal to the block length $n$ of the codes and label each coordinate by the identity function $\mathcal{L}:[n] \to [s], x \mapsto x$. In this setting, labelweight is equivalent to Hamming weight. We justify this straightforward approach in subsection 2 of Section \ref{sec: main results}.

This section constructs a family of linear HSS schemes from Hermitian codes; notably, such schemes achieve asymptotically optimal download rate while requiring an amortization parameter that is only linear in $s$. 

\begin{theorem}\label{thm: hss from hermitian code params}
    Let $\ell, t, s, d, m$ be positive integers and $q$ a prime power satisfying $m \geq d$, $s-dt > 0$, and $s= q^3$. Then there exists an explicit $t$-private, $s$-server HSS $\pi = \left( \Share, \Eval, \Rec \right)$ for any non-trivial $\mathcal{F} \subseteq \mathrm{POLY}_{d,m}(\mathbb{F}_{q^2})$ with
    \begin{equation*}
        \textsf{DownloadRate}(\pi) = 1 - \frac{dt}{s} - \frac{s^{1/3}+1}{2 s^{2/3}}
    \end{equation*}
    and amortization parameter
    \begin{equation*}
        \ell = s-dt - \frac{s^{2/3}-s^{1/3}}{2}.
    \end{equation*}
\end{theorem}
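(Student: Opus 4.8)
The plan is to reduce the theorem to a pure coding problem via the converse equivalence and then plug in Hermitian codes. By Theorem~\ref{thm:lblwtImpliesHSS} it suffices to exhibit a linear code $\cC \subseteq \F_{q^2}^n$ of dimension $\ell$ together with a surjection $\cL\colon[n]\to[s]$ with $\Delta_\cL(\cC)\ge dt+1$; Theorem~\ref{thm:lblwtImpliesHSS} then produces a $t$-private $s$-server linear HSS for $\POLY_{d,m}(\F_{q^2})^\ell$ with amortization parameter $\ell$ and download rate exactly $\ell/n$, and moreover gives explicit $\Share$ (CNF sharing), $\Rec$ (multiplication by a generator matrix of $\cC$), and $\Eval$ (obtained from Lemma~\ref{lem:updated mds-like} and Claims~\ref{claim:correct}, \ref{claim:fullrank}). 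So everything comes down to constructing such a code with $n=s=q^3$ and the stated dimension.

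For the code I would use the one-point Hermitian code over $\F_{q^2}$. Recall the Hermitian curve $\mathcal H\colon y^q+y=x^{q+1}$ has genus $g=q(q-1)/2$, exactly $q^3$ affine $\F_{q^2}$-rational points $P_1,\dots,P_{q^3}$, and a single rational point $Q$ at infinity. With $D=P_1+\cdots+P_{q^3}$, let $\cC$ be the evaluation code $\{(f(P_1),\dots,f(P_{q^3})) : f\in L(mQ)\}$, which has length $n=q^3=s$. Take $\cL$ to be any bijection $[q^3]\to[s]$; then $\Delta_\cL(\cC)$ equals the minimum Hamming distance of $\cC$, which by the Goppa bound is at least the designed distance $n-m$. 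Choosing $m=q^3-dt-1$ (the largest value with $n-m\ge dt+1$) gives $\Delta_\cL(\cC)\ge dt+1$, as required. Provided $m\ge 2g-1$, Riemann–Roch yields
\[ \ell := \dim\cC = m-g+1 = q^3 - dt - \tfrac{q(q-1)}{2} = s - dt - \tfrac{s^{2/3}-s^{1/3}}{2}, \]
matching the claimed amortization, and the download rate of the resulting HSS is $\ell/n = (s-dt-g)/s = 1 - \tfrac{dt}{s} - \tfrac{s^{2/3}-s^{1/3}}{2s}$, from which the claimed bound follows by direct simplification. The remaining bookkeeping — checking $m\ge 0$ and $\ell\ge 1$ (both immediate from $s-dt>0$ and $q\ge 2$) and that $\cL$ is surjective (automatic since $n=s$) — is routine.

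I expect the main obstacle to be pinning down the dimension in all parameter regimes. The clean formula $\dim\cC=m-g+1$ is valid only when $m\ge 2g-1$, i.e.\ roughly $dt\le q^3-q^2+q$; outside this range one should instead use the residue code $C_\Omega(D,(dt+2g-1)Q)$, which has designed distance $\ge dt+1$ and (for $2g-1\le dt+2g-1\le n-1$) the same dimension $q^3-dt-g$, so the formula persists, but one must state this operating regime carefully. One also has to be precise about the Weierstrass semigroup at $Q$ (namely $\langle q,q+1\rangle$) when $m$ is small, and track the $\pm 1$'s so the rate and amortization come out exactly as stated. Everything else — the genus and point count of $\mathcal H$, the Goppa distance bound, and the mechanics of Theorem~\ref{thm:lblwtImpliesHSS} — is off the shelf and plugs in directly.
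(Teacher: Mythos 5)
Your proposal is correct and follows essentially the same route as the paper: both reduce to Theorem~\ref{thm:lblwtImpliesHSS} and instantiate it with a Hermitian code of length $n=s=q^3$ under the identity labeling, with minimum distance at least $dt+1$ and dimension $s-dt-q(q-1)/2$; the only difference is that you derive these parameters from the one-point AG description via the Goppa bound and Riemann--Roch (carefully flagging the $m\ge 2g-1$ regime), whereas the paper simply cites Theorem~\ref{thm: hermitian code params}. One small note: your computed rate loss simplifies to $(s^{1/3}-1)/(2s^{2/3})$ rather than the $(s^{1/3}+1)/(2s^{2/3})$ appearing in the theorem statement---the same discrepancy occurs in the paper's own Lemma~\ref{lem: hermitian to labelweight conversion}, so it is a typo in the stated bound rather than a gap in your argument.
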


We note that the above download rate is off of the optimal $1-dt/s$ by only a $O(s^{-1/3})$ term; \emph{it converges asymptotically to the optimal rate $1- dt/s$}. Furthermore, it achieves this near-optimal download rate while requiring amortization only linear in $s$. We place these parameters in the context of \cite{FIKW22}, \cite{BW23} in Figure \ref{fig: comparing hermitian hss to fikw}.

\begin{table}[ht]
\centering
\begin{tabular}{l|c|c}
\hline
 & \cite{FIKW22},\cite{BW23} & Theorem \ref{thm: hss from hermitian code params} \\ \hline
Download Rate & $1-dt/s$ & $ 1 - dt/s - O(s^{-1/3})$ \\ 
Amortization & $(s-dt)\log_{q^2}(s)$ & $s-dt - O(s^{2/3})$ \\ \hline
\end{tabular}
\caption{Comparison of Theorem \ref{thm: hss from hermitian code params} to \cite{FIKW22}, \cite{BW23}.  When $q = O(1)$ and $s = \omega(1)$, the download rate in Theorem~\ref{thm: hss from hermitian code params} approaches the optimal rate; while the amortization is asmyptotically better.}
\label{fig: comparing hermitian hss to fikw}
\end{table}

We can compare these download rates (and the amortization parameters required to achieve them) for up to 1,000 servers $s \in (dt, 1000]$ in Table \ref{table: hermitian numerical comparison}; we visualize this data in Figure~\ref{fig: comparing hermitian hss to fikw}. 

The key takeaway from these numerical illustrations of Theorem \ref{thm: hss from hermitian code params} is that \emph{even in non-asymptotic parameter regimes, small concessions in rate result in notable savings in amortization}.

\begin{table}[ht]
\centering
\begin{tabular}{c|cc|cc|cc}
\hline
\multirow{2}{*}{\# Servers} & \multicolumn{2}{c|}{\cite{FIKW22}, \cite{BW23}} & \multicolumn{2}{c|}{Theorem \ref{thm: hss from hermitian code params}} & \multicolumn{2}{c}{\% Difference} \\
 & DL Rate & Amort. & DL Rate & Amort. & DL Rate & Amort. \\
\hline
50 & 0.92 & 69 & 0.75 & 42 & -18\% & -39\% \\
100 & 0.96 & 145 & 0.83 & 88 & -13\% & -39\% \\
200 & 0.98 & 294 & 0.88 & 182 & -10\% & -38\% \\
300 & 0.98 & 444 & 0.90 & 277 & -8\% & -38\% \\
400 & 0.99 & 594 & 0.91 & 373 & -7\% & -37\% \\
500 & 0.99 & 744 & 0.92 & 469 & -7\% & -37\% \\
1000 & 0.99 & 1494 & 0.94 & 951 & -5\% & -36\% \\
\hline
\end{tabular}
\caption{Comparison of download rates, amortization parameters from \cite{FIKW22}, \cite{BW23} and Theorem \ref{thm: hss from hermitian code params} when $d = t = 2$.}
\label{table: hermitian numerical comparison}
\end{table}

\begin{figure}[ht]
    \begin{center}
        \includegraphics[scale=0.3]{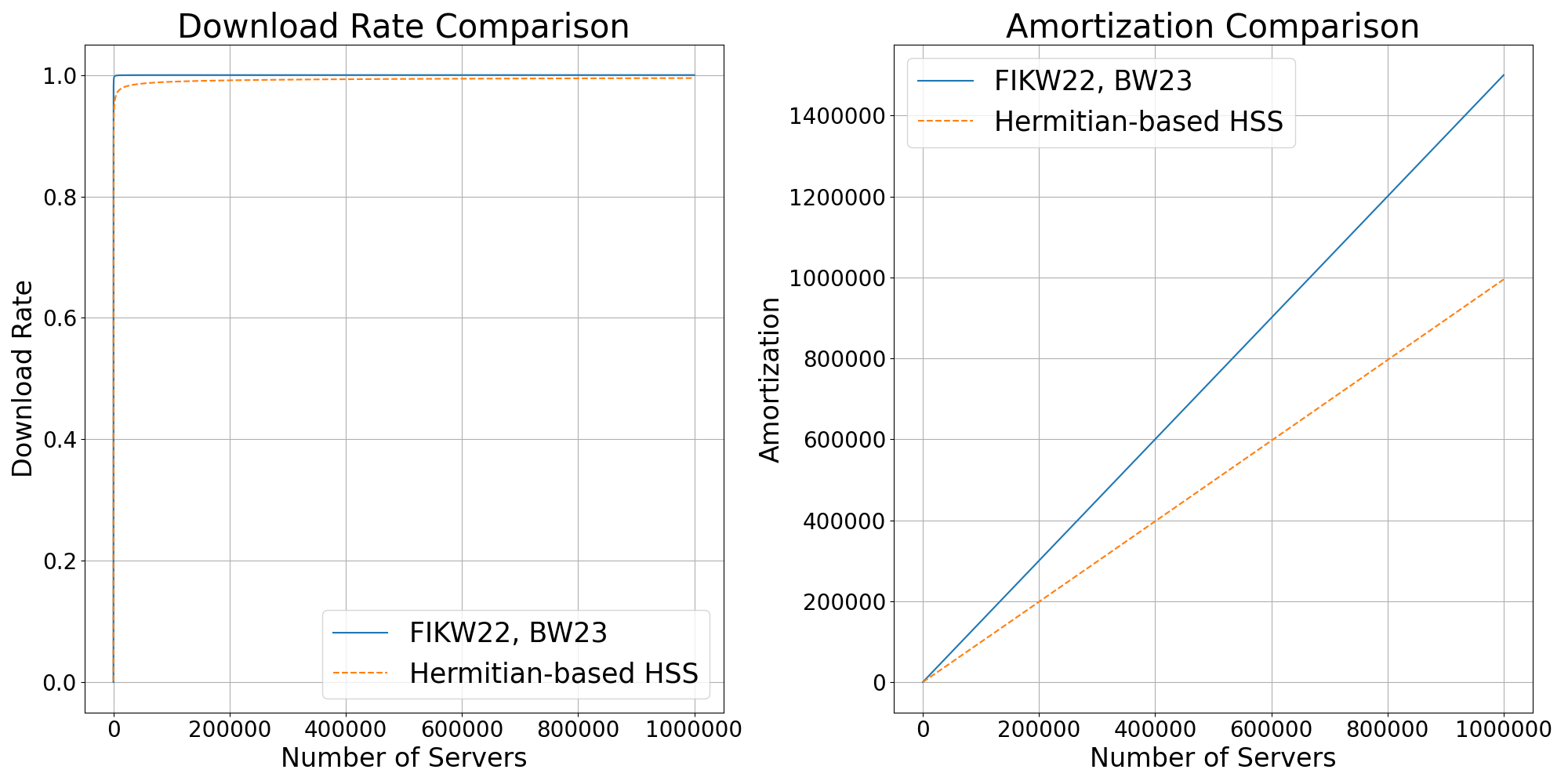}
    \caption{The left (right) plot compares the download rates (amortization parameters) of \cite{FIKW22}, \cite{BW23} with that achieved by Theorem \ref{thm: hss from hermitian code params} when $d =t = 2$. The $x$-axis denotes the number of servers and ranges from 1 to 1,000,000 to illustrate the asymptotic convergence of Theorem \ref{thm: hss from hermitian code params} to the optimal rate of \cite{FIKW22}, \cite{BW23} at a constant factor less amortization.}
    \label{fig:graphical comparison of hss vs fikw, bw}
    \end{center}
\end{figure}

\subsection{Hermitian Code Definition, Parameters}

The construction proceeds by building an optimal labelweight code from Hermitian codes before applying the construction of Theorem \ref{thm:lblwtImpliesHSS} to derive the specification of a linear HSS scheme. We begin by recalling the definition and key properties of Hermitian codes. We defer a full treatment of this well-studied family of algebraic geometry codes to \cite{Stich08}, \cite{hirschfeld2008algebraic}.

\begin{definition}[Hermitian Curve \cite{Stich08}]
The (affine) Hermitian Curve is given by the planar curve
\begin{equation*}
    g(x,y) = y^q + y - x^{q+1}.
\end{equation*}
\end{definition}

\begin{definition}[Hermitian Code \cite{hirschfeld2008algebraic}]
    Let $k \in \mathbb{Z}^+$ and let $M \subseteq \mathbb{F}_{q^2}[x,y]$ denote the set of all bi-variate polynomials $f(x,y)$ with total degree $\deg(f) < k$. Denote by
    \begin{equation*}
            \mathcal{Z}:= \left( (x,y) \in \mathbb{F}_{q^2}^2 \; : \; g(x,y) = 0 \right)
    \end{equation*}
    the affine rational points of $g$, and fix any arbitrary ordering of its elements. Then the $k$-dimensional Hermitian code $\mathcal{H}$ is given by the set of codewords
    \begin{equation*}
        \mathcal{H}:=\left\lbrace \ev_{\mathcal{Z}}(f) \; : \; f \in M \right\rbrace
    \end{equation*}
    where $\ev_\mathcal{Z}(f) = \left( f(x,y) : (x,y) \in \mathcal{Z} \right)$ denotes the standard evaluation map.
\end{definition}

\begin{theorem}[Hermitian Code Parameters \cite{hirschfeld2008algebraic}]\label{thm: hermitian code params}
    The $k$-dimensional Hermitian code $\mathcal{H}$ is a linear code of length $n=q^3$ and rate $k/n$ with minimum distance
    \begin{equation}\label{eqn: hermitian code dist}
        q^3 - k - \frac{q(q-1)}{2} + 1.
    \end{equation}
\end{theorem}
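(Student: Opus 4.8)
The plan is to derive both parameters by specializing one-point algebraic-geometry code theory to the Hermitian curve, for which the smooth projective model of $g(x,y)=y^q+y-x^{q+1}$ over $\mathbb{F}_{q^2}$ has a unique point $P_\infty$ at infinity, genus $g=\frac{q(q-1)}{2}$, and coordinate functions $x,y$ with pole orders $q$ and $q+1$ at $P_\infty$. For the length I would count $\mathcal{Z}$ directly: the map $x\mapsto x^{q+1}=x\cdot x^q$ is the norm $N_{\mathbb{F}_{q^2}/\mathbb{F}_q}$, so $x^{q+1}\in\mathbb{F}_q$ for every $x\in\mathbb{F}_{q^2}$, while $y\mapsto y^q+y$ is the trace $\mathrm{Tr}_{\mathbb{F}_{q^2}/\mathbb{F}_q}$, an $\mathbb{F}_q$-linear surjection with kernel of size $q$. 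Hence for each of the $q^2$ choices of $x$ the equation $\mathrm{Tr}(y)=N(x)$ has exactly $q$ solutions $y$, so $|\mathcal{Z}|=q^2\cdot q=q^3$ and $n=q^3$. (Equivalently, the Hermitian curve is maximal over $\mathbb{F}_{q^2}$, meeting the Hasse--Weil bound with $q^3+1$ rational points, one of which is $P_\infty$.)

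Next I would identify $\mathcal{H}$ with the one-point code $C_L(\mathcal{Z},\alpha P_\infty)$ for the appropriate degree $\alpha=k+g-1=k+\frac{q(q-1)}{2}-1$; matching pole orders at $P_\infty$ shows that, under $\ev_{\mathcal{Z}}$, the relevant span of bivariate polynomials is the Riemann--Roch space $L(\alpha P_\infty)$. Since $\alpha<q^3=n$, no nonzero $f\in L(\alpha P_\infty)$ can vanish on all of $\mathcal{Z}$ (it would have more zeros than poles), so $\ev_{\mathcal{Z}}$ is injective on $L(\alpha P_\infty)$; Riemann--Roch, in the range of interest where $\alpha>2g-2$, then gives $\dim\mathcal{H}=\alpha-g+1=k$. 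Applying the same observation to a codeword $\ev_{\mathcal{Z}}(f)$ of Hamming weight $w$ shows that $f$ has at least $n-w$ zeros among $\mathcal{Z}$, forcing $n-w\le\alpha$, i.e. $w\ge n-\alpha=q^3-k-\frac{q(q-1)}{2}+1$. This Goppa bound already yields the claimed value as a lower bound on the minimum distance, which is the direction the later sections actually use.

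The remaining step, showing the bound is attained, is the one I expect to be the real obstacle. One must exhibit a function $f\in L(\alpha P_\infty)$ with exactly $n-\alpha$ zeros in $\mathcal{Z}$; the standard construction takes $f$ to be a product of ``vertical line'' functions $x-a$ with $a\in\mathbb{F}_q$ (each vanishing on exactly $q$ points of $\mathcal{Z}$), adjusted by a low-degree factor so that the pole order at $P_\infty$ is exactly $\alpha$, using the Weierstrass semigroup $\langle q,q+1\rangle$ of $P_\infty$. Carrying this out cleanly requires a short case analysis on $\alpha\bmod q$ (and for a few residues the true minimum distance in fact exceeds the Goppa bound, via the order bound); since the construction and the exact distance formula in the relevant range are classical, I would cite \cite{Stich08, hirschfeld2008algebraic} for these details rather than reproduce the analysis here.
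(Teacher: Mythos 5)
The paper does not prove this statement at all: it is quoted wholesale from \cite{hirschfeld2008algebraic} (and \cite{Stich08}), so there is no internal proof to compare against. Judged on its own, your sketch follows the standard one-point AG-code argument, and the parts you carry out are correct: the norm/trace count giving $|\mathcal{Z}|=q^3$ is complete, and the Goppa designed-distance argument correctly yields $d(\mathcal{H})\ge n-\alpha=q^3-k-\frac{q(q-1)}{2}+1$ once $\mathcal{H}$ is identified with $C_L(\mathcal{Z},\alpha P_\infty)$, $\alpha=k+g-1$. That lower bound is in fact all the paper ever uses (Lemma~\ref{lem: hermitian to labelweight conversion} only needs $\Delta_L(\mathcal{C})\ge dt+1$), so your proof covers the operative content.

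Two points are genuinely glossed, though. First, the identification step is asserted rather than proved: the paper's definition evaluates \emph{all} bivariate polynomials of total degree $<k$, and the claim that their span on the curve equals the Riemann--Roch space $L\bigl((k+g-1)P_\infty\bigr)$ (equivalently, that the resulting code has dimension exactly $k$ and that the right divisor degree is $k+g-1$) needs an argument -- e.g.\ reducing modulo $y^q+y-x^{q+1}$ to monomials $x^iy^j$ with $j<q$, checking these have distinct pole orders $iq+j(q+1)$ at $P_\infty$ and comparing with the Weierstrass semigroup $\langle q,q+1\rangle$, together with injectivity of $\ev_{\mathcal{Z}}$ when the pole order is $<n$. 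This reconciliation between the total-degree definition and the one-point-code parameters is exactly the content being cited, so deferring it to the references is acceptable, but as written your ``matching pole orders shows'' sentence is not a proof. Second, you are right to flag that the stated formula is an \emph{exact} minimum distance, and that attaining the Goppa bound requires exhibiting a function with $n-\alpha$ zeros in $\mathcal{Z}$ (and that for some residues the true distance strictly exceeds the bound, so the equality only holds in suitable ranges of $k$); since you cite the classical determination of the exact distance rather than reproduce it, your argument establishes the theorem only up to that citation -- which is consistent with how the paper itself treats the statement.
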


\subsection{Proof of Theorem \ref{thm: hss from hermitian code params}}

We first show the following lemma.

\begin{lemma}\label{lem: hermitian to labelweight conversion}
    Let $s = q^3,d,t \in \mathbb{Z}^+$ for some prime power $q$ such that $s-dt>0$. There exists a linear code $\mathcal{C} \subseteq \mathbb{F}_{q^2}^n$ and labeling function $L:[n] \to [s]$ satisfying $\Delta_L(\mathcal{C}) \geq dt+1$ with rate
    \begin{equation*}
        R = 1 - \frac{dt}{s} - \frac{s^{1/3}+1}{2 s^{2/3}}
    \end{equation*}
    and dimension
    \begin{equation}\label{eqn: hermitian hss dim}
        k = s-dt - \frac{s^{2/3}-s^{1/3}}{2}.
    \end{equation}
\end{lemma}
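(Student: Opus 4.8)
The plan is to produce the labelweight code simply as an ordinary Hermitian code, equipped with the most economical possible labeling — a \emph{bijection} — so that the labelweight of every codeword coincides with its ordinary Hamming weight (this is exactly the degenerate case noted right after Definition~\ref{def:LW}). Everything then reduces to reading Hermitian parameters off of Theorem~\ref{thm: hermitian code params}.

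\textbf{Step 1: choose the Hermitian code.} Work over $\mathbb{F}_{q^2}$ and let $\mathcal{H}$ be the Hermitian code of dimension $k := s - dt - \frac{s^{2/3}-s^{1/3}}{2}$ from \eqref{eqn: hermitian hss dim}; writing $q = s^{1/3}$ this is $k = q^3 - dt - \frac{q(q-1)}{2} = q^3 - dt - g$, where $g=\frac{q(q-1)}{2}$ is the genus of the Hermitian curve. Since $\frac{q(q-1)}{2}\in\mathbb{Z}$, this $k$ is an integer, and for the statement to be non-vacuous one needs $k\ge 1$, i.e. $s-dt\ge 1+\frac{s^{2/3}-s^{1/3}}{2}$ — a point I would pin down carefully (it is a mild strengthening of the stated $s-dt>0$). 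By Theorem~\ref{thm: hermitian code params}, $\mathcal{H}$ has block length $n=q^3=s$ and minimum distance $q^3 - k - \frac{q(q-1)}{2} + 1$; substituting the chosen $k$ makes the two genus terms cancel, leaving minimum distance exactly $dt+1$.

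\textbf{Step 2: the labeling, and Step 3: the parameters.} Because $n=q^3=s$, take $\mathcal{L}\colon[n]\to[s]$ to be the identity map, which is a surjection. For a bijective labeling each nonzero coordinate of a codeword contributes its own distinct label, so $\Delta_{\mathcal{L}}(\mathbf c)=\mathrm{wt}(\mathbf c)$ for every $\mathbf c\in\mathcal{H}$, whence $\Delta_{\mathcal{L}}(\mathcal{H})$ equals the minimum distance of $\mathcal{H}$, namely $dt+1\ge dt+1$ by Step~1. The code has dimension $k$ by construction, and its rate is $\textsf{Rate}(\mathcal{H})=k/n=k/s$; expanding $k/s = 1 - \frac{dt}{s} - \frac{s^{2/3}-s^{1/3}}{2s}$ gives the rate $R$ asserted in the statement (if one insists on the exact value one may append a suitable number of identically-zero coordinates, labeled arbitrarily — these change neither the dimension nor $\Delta_{\mathcal{L}}$ — to tune the length; this is only needed to match the closed form and is not essential). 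This yields a linear code over $\mathbb{F}_{q^2}$ of dimension $k$, rate $R$, and a labeling with $\Delta_{\mathcal{L}}(\mathcal{C})\ge dt+1$, as required; feeding it into Theorem~\ref{thm:lblwtImpliesHSS} then produces the HSS of Theorem~\ref{thm: hss from hermitian code params}, since that construction carries the code dimension over as the amortization parameter and the code rate over as the download rate.

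The one genuinely non-mechanical point is Step~1: the equality in Theorem~\ref{thm: hermitian code params} is the \emph{true} minimum distance of a one-point Hermitian code only when the designed parameter $a=k+g-1$ lies in an appropriate range (roughly $a<n$, equivalently $dt+1\le q^3$, which is essentially the hypothesis $s-dt>0$), so I would make sure the chosen $k$ keeps us in that regime and that $1\le k<q^3$ so the code is nondegenerate. Once that bookkeeping is settled, Steps~2–3 are immediate, and the remainder of Theorem~\ref{thm: hss from hermitian code params} is just an application of Theorem~\ref{thm:lblwtImpliesHSS} to the code and labeling constructed here.
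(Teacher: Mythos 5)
Your proposal is correct and follows essentially the same route as the paper's own proof: take the Hermitian code over $\mathbb{F}_{q^2}$ of dimension $k = s - dt - \frac{s^{2/3}-s^{1/3}}{2}$, note via Theorem~\ref{thm: hermitian code params} that its length is $n=q^3=s$ and its minimum distance is exactly $dt+1$, and use the identity labeling so that labelweight coincides with Hamming weight. Your side remark about the closed form is well taken — $k/s$ simplifies to $1-\frac{dt}{s}-\frac{s^{1/3}-1}{2s^{2/3}}$ rather than the stated $1-\frac{dt}{s}-\frac{s^{1/3}+1}{2s^{2/3}}$, a small algebraic slip that is present in the paper's own simplification as well, so no zero-padding device is actually needed.
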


\begin{proof}
    Let $\mathcal{H}$ be the $k$-dimensional Hermitian code defined over alphabet $\mathbb{F}_{q^2}$. By Theorem \ref{thm: hermitian code params}, such a code has length $n = s = q^3$. 
    
    Allowing dimension $k$ to be as specified in Equation ~\ref{eqn: hermitian hss dim}, it follows from Equation ~\ref{eqn: hermitian code dist} that $\Delta(\mathcal{H})$ is given by
    \begin{equation*}
        s - \left( s-dt - \frac{s^{2/3}-s^{1/3}}{2} \right) - \frac{s^{1/3}(s^{1/3}-1)}{2} + 1 = dt + 1.
    \end{equation*}
    The rate of $\mathcal{H}$ is given by 
    \begin{equation*}
        R_\mathcal{H} = \frac{1}{s}\left( s-dt - \frac{s^{2/3}-s^{1/3}}{2} \right) = 1 - \frac{dt}{s} - \frac{s^{1/3}+1}{2 s^{2/3}}.
    \end{equation*}
    Set $\mathcal{H} = \mathcal{C}$ and $L: [s] \to [s], x \mapsto x$; it immediately follows that $R_\mathcal{H} = R_\mathcal{C}$ and $\Delta(\mathcal{H}) = \Delta_L(\mathcal{C}) = dt+1$, as desired.
\end{proof}

We are now prepared to prove Theorem \ref{thm: hss from hermitian code params} by applying Theorem \ref{thm:lblwtImpliesHSS}.

\begin{proof}[Proof of Theorem \ref{thm: hss from hermitian code params}]
    By Lemma \ref{lem: hermitian to labelweight conversion}, there exists a linear code $\mathcal{C} \subseteq \mathbb{F}_{q^2}^s$ and a labeling $L:[s] \to [s], x \mapsto x$ such that $\Delta_L(\mathcal{C}) \geq dt+1$; furthermore $\mathcal{C}$ has dimension 
    \begin{equation*}
        \ell = s-dt - \frac{s^{2/3} - s^{1/3}}{2}
    \end{equation*}
    and rate
    \begin{equation*}
        R = 1 - \frac{dt}{s} - \frac{s^{1/3}+1}{2 s^{2/3}}.
    \end{equation*}
    By Theorem \ref{thm:lblwtImpliesHSS}, the existence of such a labelweight code is equivalent to the existence of a linear HSS scheme with corresponding parameters; in particular, there exists a $t$-private, $s$-server, $\mathbb{F}_{q^2}$-linear HSS scheme $\pi = \pi(\Share, \Eval, \Rec)$ that achieves download rate $\mathsf{DownloadRate}(\pi) = R$ and amortization parameter $\ell$.
\end{proof}

\section{Linear HSS from Goppa Codes}\label{sec: goppa}
In this section we construct a family of linear HSS schemes from Goppa codes; unlike Theorem \ref{thm: hss from hermitian code params}, these schemes do not achieve asymptotically optimal rate. However, this family of schemes stands apart from those of Theorem \ref{thm: hss from hermitian code params} by allowing us to compute over the binary field regardless of the number of servers employed. Furthermore, such schemes achieve a near order-of-magnitude amortization savings at practical server counts. We first state the result before considering its performance in realistic parameter regimes.

\begin{theorem}\label{thm: hss from goppa params}
    Let $\ell, t, s, d, m, u$ be positive integers satisfying $m \geq d$, $s-dt> 0$, and $s=2^u$, where
    \begin{equation*}
        u > \log_2 \left( 2(dt)^2 - 4dt + 2(dt+1) \sqrt{(dt)^2 - 2dt + 2} + 3 \right).
    \end{equation*}
    
    Then there exists an explicit $t$-private, $s$-server HSS $\pi = \left( \Share, \Eval, \Rec \right)$ for some non-trivial $\mathcal{F} \subseteq \mathrm{POLY}_{d,m}(\mathbb{F}_{2})$ with
    \begin{equation*}
        \textsf{DownloadRate}(\pi) = 1 - u\frac{dt}{s} 
    \end{equation*}
    and amortization parameter $\ell = s-udt $.
\end{theorem}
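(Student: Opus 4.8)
The plan is to mirror the proof of Theorem~\ref{thm: hss from hermitian code params} essentially line for line, with classical binary Goppa codes playing the role of Hermitian codes: build a binary linear code of the right length, dimension, and labelweight, then feed it to the constructive converse, Theorem~\ref{thm:lblwtImpliesHSS}. Concretely, I would prove a lemma in the exact mold of Lemma~\ref{lem: hermitian to labelweight conversion}: whenever $s = 2^u$ and $u$ exceeds the stated threshold, there is a linear code $\cC \subseteq \F_2^s$ together with the trivial labeling $\cL : [s] \to [s]$, $x \mapsto x$ (so $\Delta_\cL(\cC)$ is just the Hamming distance of $\cC$), with $\Delta(\cC) \geq dt+1$, $\dim(\cC) = \ell$ for $\ell := s - udt$, and $\textsf{Rate}(\cC) = 1 - udt/s$. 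Given this lemma the theorem is immediate: since $m \geq d$, Theorem~\ref{thm:lblwtImpliesHSS} turns $(\cC, \cL)$ into a $t$-private, $s$-server $\F_2$-linear HSS for $\POLY_{d,m}(\F_2)^\ell$ --- and hence for any non-trivial $\mathcal{F} \subseteq \POLY_{d,m}(\F_2)$ --- with download rate $1 - udt/s$ and amortization parameter $\ell = s - udt$. One bookkeeping point worth flagging: Theorem~\ref{thm:lblwtImpliesHSS} wants a code of dimension \emph{exactly} $\ell$ and returns amortization \emph{exactly} $\ell$, so if the Goppa code we start from happens to have dimension strictly larger than $s - udt$, I would pass to an arbitrary subcode of dimension exactly $s - udt$; this only increases the labelweight, so the distance bound survives.

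For the lemma I would invoke the standard parameters of classical binary Goppa codes: for a support $L \subseteq \F_{2^u}$ and a Goppa polynomial $g \in \F_{2^u}[z]$ of degree $r$ with no root in $L$, the code $\Gamma(L,g)$ is a binary linear code of length $|L|$, dimension at least $|L| - ur$, and minimum distance at least $r+1$. I would take $L = \F_{2^u}$ in full, so the length is exactly $n = 2^u = s$, and take $g$ to be an irreducible polynomial of degree $r = dt$ over $\F_{2^u}$ (these exist in every degree, and for degree at least $2$ have no $\F_{2^u}$-root, so $L = \F_{2^u}$ is a legal support). Then $\Gamma(\F_{2^u}, g)$ has length $s$, dimension at least $s - udt$, and minimum distance at least $dt+1$; passing if necessary to a subcode of dimension exactly $\ell = s - udt$ and using the identity labeling yields $\Delta_\cL(\cC) = \Delta(\cC) \geq dt+1$ and $\textsf{Rate}(\cC) = \ell/s = 1 - udt/s$, as required.

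The one step that demands genuine care is the hypothesis on $u$. It is the quantitative condition guaranteeing that all three targets --- length $2^u$, dimension $s - udt$, and distance at least $dt+1$ --- can be met simultaneously by an honest binary Goppa code: it forces $\ell = s - udt > 0$ (so that the function class can be non-trivial), keeps the designed distance $dt+1$ consistent with the dimension budget, and steers clear of the degenerate case $dt = 1$ (where no degree-$1$ polynomial over $\F_{2^u}$ is root-free, so one substitutes an elementary $[s, s-u, \geq 2]_2$ code instead). Tracing this arithmetic through is precisely what produces the explicit bound $u > \log_2\!\bigl(2(dt)^2 - 4dt + 2(dt+1)\sqrt{(dt)^2 - 2dt + 2} + 3\bigr)$ in the statement; I expect this parameter bookkeeping --- not any conceptual hurdle --- to be the main obstacle, everything else being a black-box appeal to Theorem~\ref{thm:lblwtImpliesHSS}, exactly as in the Hermitian case.
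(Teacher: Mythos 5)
Your proposal is correct and follows the same route as the paper: instantiate a binary Goppa code with support $V=\F_{2^u}$ and an irreducible Goppa polynomial of degree $r=dt$, so the length is $s=2^u$, the distance is at least $dt+1$, take the identity labeling so labelweight equals Hamming distance, and feed the resulting code to Theorem~\ref{thm:lblwtImpliesHSS}. The one place you diverge is in how the dimension is pinned to exactly $s-udt$: the paper invokes the sharpness result of \cite{V90}, which states that under the condition $2r-2 < (2^u-1)/2^{u/2}$ the Berlekamp bound $k \ge n-ur$ holds with equality, and the hypothesis $u > \log_2\bigl(2(dt)^2-4dt+2(dt+1)\sqrt{(dt)^2-2dt+2}+3\bigr)$ is precisely that condition rewritten --- it is not, as you guessed, mere positivity/consistency bookkeeping. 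Your substitute, passing to an arbitrary subcode of dimension exactly $s-udt$ (which can only increase labelweight and yields rate exactly $1-udt/s$), is a valid and arguably cleaner way to get the stated parameters without the exact-dimension theorem; it also lets you handle the $dt=1$ corner case (where a degree-one irreducible has a root in $\F_{2^u}$) explicitly, which the paper glosses over. So the argument goes through; only your explanation of where the threshold on $u$ comes from is off.
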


Noting that $u \geq 3$ for all $d,t \in \mathbb{Z}^+$, we see that the download rate does not converge asymptotically to $1-dt/s$ as the construction of Theorem \ref{thm: hss from hermitian code params} does; however, we show in Table \ref{table: goppa vs fikw numerical comparison} that for small parameter values, Theorem \ref{thm: hss from goppa params} vastly outperforms Theorem \ref{thm: hss from hermitian code params} in terms of preserving rate and saving on amortization. In particular, compared to Theorem \ref{thm: hss from goppa params}, \emph{the construction of Theorem \ref{thm: hss from goppa params} concedes less rate while delivering an order-of-magnitude savings in amortization} in practical parameter regimes. We illustrate these results graphically in Figure \ref{fig:graphical comparison of goppa hss vs fikw, bw}.

\begin{table}[ht]
\centering
\begin{tabular}{c|cc|cc|cc}
\hline
\multirow{2}{*}{\# Servers} & \multicolumn{2}{c|}{\cite{FIKW22},\cite{BW23}} & \multicolumn{2}{c|}{Theorem \ref{thm: hss from goppa params}} & \multicolumn{2}{c}{\% Reduction} \\
 & DL Rate & Amortize & DL Rate & Amortize & DL Rate & Amortize \\
\hline
64 & 0.93 & 360 & 0.65 & 42 & -31\% & -88\% \\
128 & 0.96 & 868 & 0.82 & 106 & -15\% & -88\% \\
256 & 0.98 & 2016 & 0.91 & 234 & -7\% & -88\% \\
512 & 0.99 & 4572 & 0.96 & 490 & -3\% & -89\% \\
1024 & 0.99 & 10200 & 0.98 & 1002 & -1.8\% & -90\% \\
2048 & 0.99 & 22484 & 0.99 & 2026 & -0.9\% & -91\% \\
\hline
\end{tabular}
\caption{Comparison of Download Rates and Amortization Values with Percentage Differences between FIKW and Goppa.}
\label{table: goppa vs fikw numerical comparison}
\end{table}

\begin{figure}[ht]
    \begin{center}
        \includegraphics[scale=0.3]{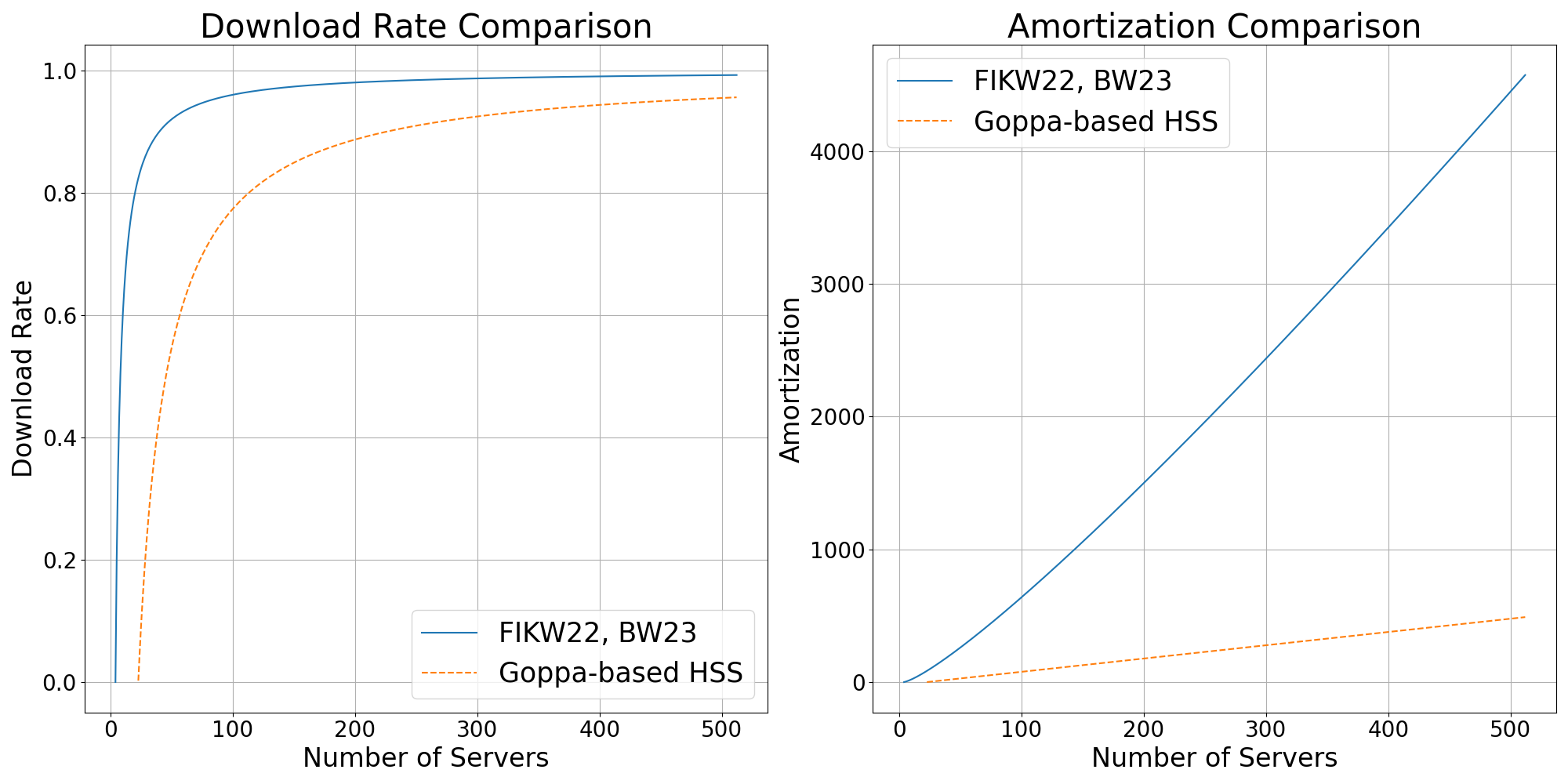}
    \caption{The left (right) plot compares the download rates (amortization parameters) of \cite{FIKW22}, \cite{BW23} with that achieved by Theorem \ref{thm: hss from goppa params} when $d =t = 2$. The $x$-axis represents the number of servers and ranges from 1 to 512. This emphasizes the super-constant amortization savings of Theorem \ref{thm: hss from goppa params} at practical parameter regimes relative to \cite{FIKW22}, \cite{BW23}, with small concessions to rate.}
    \label{fig:graphical comparison of goppa hss vs fikw, bw}
    \end{center}
\end{figure}

\subsection{Goppa Code Definition, Parameters}

The proof of Theorem \ref{thm: hss from goppa params} is constructive; it proceeds by building an optimal labelweight code from Goppa codes before applying the construction of Theorem \ref{thm:lblwtImpliesHSS} to arrive at a linear HSS scheme with the desired properties. We begin by recalling the definition and key properties of binary Goppa codes, deferring a fuller treatment to \cite{berlekamp1973goppa}, \cite{goppa1977codes}.

\begin{definition}[Goppa Polynomial \cite{berlekamp1973goppa}]
    For some $n \in \mathbb{Z}^+$, fix $V = \lbrace \alpha_1, \ldots, \alpha_n \rbrace \subseteq \mathbb{F}_{2^u},\; u \in \mathbb{Z}^+$. A Goppa polynomial is a polynomial
    \begin{equation*}
        g(x) = g_V(x) = g_0 + g_1x + \cdots + g_r x^r \in \mathbb{F}_{2^u}[x] 
    \end{equation*}
    satisfying $\deg(g) = r$ and $g(\alpha_i) \neq 0$ for all $\alpha_i \in V$.
\end{definition}

In practice, it is common to take $n = 2^u$ and set $V = \mathbb{F}_{2^u}$ for some $u \in \mathbb{Z}^+$, as it reduces the task of finding a suitable Goppa polynomial to that of finding an irreducible polynomial of degree $r$ over $\mathbb{F}_{2^u}$.

Given the definition of the Goppa polynomial above, we can define a binary Goppa code.

\begin{definition}[Goppa Codes \cite{berlekamp1973goppa}]\label{def: goppa code defn}
    Let $n, u, r \in \mathbb{Z}^+$. Fix $V = \lbrace \alpha_1, \ldots, \alpha_n \rbrace \subseteq \mathbb{F}_{2^u},\; u \in \mathbb{Z}^+$ and let $g_V \in \mathbb{F}_{2^u}[x]$ be a Goppa polynomial of degree $r$. Then the Goppa code is the set of codewords given by
    \begin{equation*}
        \Gamma_{n,u,r} = \Gamma_{n,u,r}(g,V) := \left\lbrace \mathbf{c} = (c_1, \ldots, c_n) \in \mathbb{F}_2^n \; : \; \sum_{i=1}^n \frac{c_i}{x-\alpha_i} \equiv 0 \; \mod g(x) \right\rbrace
    \end{equation*}.
\end{definition}

The parameters of Goppa codes are given by the following theorem.

\begin{theorem}[Goppa Code Parameters \cite{berlekamp1973goppa}]\label{thm: goppa code params}
    For $n, u ,r \in \mathbb{Z}^+$ let $\Gamma = \Gamma_{n,u,r}$ be a binary Goppa code as in Definition \ref{def: goppa code defn}. Then $\Gamma$ is a linear code of length $n$, dimension $k \geq n - ur$, and minimum distance $d(\Gamma) \geq r + 1$.
\end{theorem}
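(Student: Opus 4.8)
The statement bundles three classical facts about the binary Goppa code $\Gamma_{n,u,r} = \Gamma_{n,u,r}(g,V)$, and I would establish them in turn. \textbf{Linearity and length} are essentially by inspection: for $\mathbf{c} = (c_1,\dots,c_n)\in\mathbb{F}_2^n$, the syndrome map $\mathbf{c}\mapsto\sum_{i=1}^n c_i\,(x-\alpha_i)^{-1}$, computed in the ring $R:=\mathbb{F}_{2^u}[x]/(g(x))$, is $\mathbb{F}_2$-linear (each $x-\alpha_i$ is a unit of $R$ because $g(\alpha_i)\neq0$), and $\Gamma_{n,u,r}$ is exactly the set of $\mathbf{c}\in\mathbb{F}_2^n$ annihilated by this map. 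Hence $\Gamma_{n,u,r}$ is an $\mathbb{F}_2$-subspace of $\mathbb{F}_2^n$, i.e.\ a linear code of length $n$.

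For the \textbf{dimension bound} I would write down an explicit $\mathbb{F}_2$-parity-check matrix with at most $ur$ rows. Using $g(\alpha_i)\neq0$, one has the closed form $(x-\alpha_i)^{-1}\equiv g(\alpha_i)^{-1}\cdot\frac{g(x)-g(\alpha_i)}{x-\alpha_i}\pmod{g(x)}$, where the right-hand side is a genuine polynomial $h_i(x)\in\mathbb{F}_{2^u}[x]$ of degree $r-1$. Therefore $\mathbf{c}\in\Gamma_{n,u,r}$ if and only if $\sum_i c_i h_i(x)=0$ as an identity between polynomials of degree $\le r-1$, i.e.\ if and only if its $r$ coefficients (of $1,x,\dots,x^{r-1}$) all vanish. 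Collecting these coefficients into a matrix $H\in\mathbb{F}_{2^u}^{r\times n}$ and then expanding each $\mathbb{F}_{2^u}$-entry over a fixed $\mathbb{F}_2$-basis of $\mathbb{F}_{2^u}$ produces $H'\in\mathbb{F}_2^{ur\times n}$ with $\Gamma_{n,u,r}=\{\mathbf{c}\in\mathbb{F}_2^n:H'\mathbf{c}^{\mathsf{T}}=0\}$. Since $\mathrm{rank}_{\mathbb{F}_2}(H')\le ur$, the kernel has dimension $\ge n-ur$.

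For the \textbf{minimum distance}, take any nonzero $\mathbf{c}\in\Gamma_{n,u,r}$ with support $I=\{i:c_i=1\}$ and weight $w=|I|\ge1$, and set $f(x):=\prod_{i\in I}(x-\alpha_i)\in\mathbb{F}_{2^u}[x]$, which is monic of degree $w$ with pairwise distinct roots. The logarithmic-derivative identity $\sum_{i\in I}\frac{1}{x-\alpha_i}=\frac{f'(x)}{f(x)}$ --- which is just the product-rule fact that $f'=\sum_{i\in I}\prod_{j\in I\setminus\{i\}}(x-\alpha_j)$ --- holds in the rational function field, and it descends to $R$ because $f$ is a unit of $R$: indeed $\gcd(f,g)=1$, since every irreducible factor of $f$ is some $x-\alpha_i$ with $i\in I$ and $g(\alpha_i)\neq0$. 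Consequently $\mathbf{c}\in\Gamma_{n,u,r}$ forces $f'(x)\equiv0\pmod{g(x)}$, i.e.\ $g(x)\mid f'(x)$. But $f'\neq0$: for any $k\in I$, every summand of $f'$ except the one omitting $x-\alpha_k$ has $x-\alpha_k$ as a factor, so $f'(\alpha_k)=\prod_{j\in I\setminus\{k\}}(\alpha_k-\alpha_j)\neq0$, whence $f'$ is not the zero polynomial. Hence $r=\deg g\le\deg f'\le w-1$, so $w\ge r+1$; as $\mathbf{c}$ was an arbitrary nonzero codeword, $d(\Gamma_{n,u,r})\ge r+1$.

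I expect the step needing the most care to be transporting the rational identity $\sum_{i\in I}1/(x-\alpha_i)=f'/f$ into the quotient ring $R=\mathbb{F}_{2^u}[x]/(g)$: one must verify that $f$ --- whose roots are exactly the support points $\alpha_i$ --- is invertible modulo $g$, and this is precisely where the defining hypothesis $g(\alpha_i)\neq0$ for all $i\in[n]$ is used; one must also treat the characteristic-$2$ formal derivative carefully when concluding $f'\neq0$ (it is cleanest to argue via $f'(\alpha_k)\neq0$ rather than via degree considerations alone). A minor but easy-to-slip point is that the dimension claim is only an inequality, since the $ur$ scalar $\mathbb{F}_2$-constraints need not be independent, so one should not overclaim equality. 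Finally I would remark, though it is not needed for the bound as stated, that when $g$ is separable this argument sharpens to $d(\Gamma_{n,u,r})\ge 2r+1$: in characteristic $2$ the polynomial $f'$ is a perfect square, so $g$ square-free together with $g\mid f'$ gives $g^2\mid f'$ and hence $w-1\ge 2r$.
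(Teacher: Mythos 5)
Your proof is correct, and all three parts (linearity via the $\mathbb{F}_2$-linear syndrome map into $\mathbb{F}_{2^u}[x]/(g)$, the dimension bound via the explicit degree-$(r-1)$ parity checks $h_i(x)=g(\alpha_i)^{-1}\frac{g(x)-g(\alpha_i)}{x-\alpha_i}$, and the distance bound via the logarithmic derivative $g\mid f'$ with $f'(\alpha_k)\neq 0$) are exactly the standard classical argument. The paper itself does not prove this theorem --- it imports it from the cited reference \cite{berlekamp1973goppa} --- so there is nothing internal to compare against; your write-up, including the careful handling of invertibility of $f$ modulo $g$ and of the characteristic-$2$ derivative (and the optional sharpening to $2r+1$ for separable $g$), would serve as a faithful self-contained proof of the cited result.
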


The parameters given by Theorem \ref{thm: goppa code params} only allow us to determine rate and minimum distance up to a lower bound, making it difficult to ascertain download rate and amortization when used to construct linear HSS schemes. Fortunately, these lower bounds are known to be sharp under additional assumptions. The following theorem gives one such instance.

\begin{theorem}[\cite{V90}]
    Fix $u ,r \in \mathbb{Z}^+$ satisfying
    \begin{equation}\label{eqn: vdv condition}
        2r - 2 < \frac{2^u - 1}{2^{u/2}}
    \end{equation}
    and let $g \in \mathbb{F}_{2^u}[x]$ be a Goppa polynomial of degree $r$ with no repeated roots. Set $V = \mathbb{F}_{2^u}$ and let $\Gamma = \Gamma_{2^u,u,r}(g,V)$ be a Goppa code as in Definition \ref{def: goppa code defn}. Then $\Gamma$ is a binary linear code with dimension precisely $k = n - ur$.
\end{theorem}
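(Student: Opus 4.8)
The plan is to prove the sharper dimension statement $\dim_{\mathbb{F}_2}\Gamma=n-ur$ (with $n=2^u$) by reducing it, via Delsarte duality, to the injectivity of a trace map, and then ruling out a nonzero kernel element using a point count on an Artin--Schreier curve together with the Hasse--Weil bound; the appearance of $2^{u/2}$ in hypothesis~\eqref{eqn: vdv condition} is exactly what one expects from such a bound.

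\textbf{Step 1 (reduction to an injectivity statement).} Using the standard identity $\tfrac{1}{x-\alpha_i}\equiv -g(\alpha_i)^{-1}\tfrac{g(x)-g(\alpha_i)}{x-\alpha_i}\pmod{g(x)}$, one rewrites $\Gamma=\Gamma_{2^u,u,r}(g,V)$ as the subfield subcode $\mathcal{A}\cap\mathbb{F}_2^n$ of a generalized Reed--Solomon code $\mathcal{A}\subseteq\mathbb{F}_{2^u}^n$ whose dual is $\mathcal{A}^\perp=\{(p(\alpha_i)/g(\alpha_i))_{i\in[n]}\,:\,p\in\mathbb{F}_{2^u}[x],\ \deg p<r\}$, of $\mathbb{F}_{2^u}$-dimension $r$. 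By Delsarte's theorem $\Gamma^\perp=\mathrm{Tr}(\mathcal{A}^\perp)$, where $\mathrm{Tr}=\mathrm{Tr}_{\mathbb{F}_{2^u}/\mathbb{F}_2}$ is applied coordinatewise, so $\dim_{\mathbb{F}_2}\Gamma=n-\dim_{\mathbb{F}_2}\mathrm{Tr}(\mathcal{A}^\perp)$. Since $\mathrm{Tr}(\mathcal{A}^\perp)$ is the image of the $\mathbb{F}_2$-linear map $\Phi\colon\{p\in\mathbb{F}_{2^u}[x]:\deg p<r\}\to\mathbb{F}_2^n$, $\ p\mapsto(\mathrm{Tr}(p(\alpha_i)/g(\alpha_i)))_i$, whose domain has $\mathbb{F}_2$-dimension $ur$, it suffices to show $\Phi$ is injective (this also re-derives the general bound $\dim\Gamma\ge n-ur$).

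\textbf{Step 2 (a nonzero kernel element forces too many rational points).} Suppose $\Phi(p)=0$ with $p\neq0$, and put $f=p/g\in\mathbb{F}_{2^u}(x)$. Since $\deg p<\deg g=r$, since $g$ has no repeated roots, and since $g(\alpha)\neq0$ for every $\alpha\in V=\mathbb{F}_{2^u}$, in lowest terms $f$ is nonconstant, has only simple poles (at most $r$ of them, none lying in $\mathbb{F}_{2^u}$), and vanishes at $\infty$. Consider the Artin--Schreier curve $C\colon y^2+y=f(x)$ over $\mathbb{F}_{2^u}$: because $f$ has a pole of odd order, $y^2+y-f$ is irreducible, so $C$ is a genuine degree-$2$ cover of $\mathbb{P}^1$, ramified exactly over the $\le r$ poles of $f$, each with different exponent $2$; Riemann--Hurwitz then yields $g_C\le r-1$. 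Counting $\mathbb{F}_{2^u}$-points: for every $\alpha\in\mathbb{F}_{2^u}$ the value $f(\alpha)$ is finite and $\mathrm{Tr}(f(\alpha))=0$ by assumption, so the fibre of $C$ over $\alpha$ has two rational points; together with the two rational points over $\infty$ this gives $\#C(\mathbb{F}_{2^u})\ge 2\cdot 2^u+2$. The Hasse--Weil bound $\#C(\mathbb{F}_{2^u})\le 2^u+1+2g_C\,2^{u/2}$ then forces $2\cdot 2^u+2\le 2^u+1+2(r-1)2^{u/2}$, i.e.\ $2r-2\ge(2^u+1)/2^{u/2}>(2^u-1)/2^{u/2}$, contradicting~\eqref{eqn: vdv condition}. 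Hence $\Phi$ is injective and $\dim_{\mathbb{F}_2}\Gamma=n-ur$.

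\textbf{Main obstacle.} The routine parts are Step 1 (bookkeeping with Delsarte duality and the GRS description of Goppa codes) and the final Hasse--Weil computation. The delicate part is Step 2: correctly translating the hypothesis ``$\mathrm{Tr}(f(\alpha))=0$ for all $\alpha\in\mathbb{F}_{2^u}$'' into ``every affine fibre of $C$ splits'', and justifying the structural facts about the cover $C$ — that it is irreducible precisely because $p\neq0$ makes $f$ have an odd-order pole, that its only ramification is over the (simple) poles of $f$ with different exponent exactly $2$, and hence $g_C\le r-1$. Once $C$ is correctly set up, the contradiction with~\eqref{eqn: vdv condition} is immediate.
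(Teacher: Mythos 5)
The paper itself gives no proof of this statement---it is imported verbatim from \cite{V90}---so there is nothing internal to compare against; what you have written is a correct, self-contained reconstruction of the standard proof of van der Vlugt's theorem, and the approach (Delsarte duality reducing the claim to injectivity of the trace map on the $r$-dimensional GRS code $\{(p(\alpha_i)/g(\alpha_i))_i : \deg p < r\}$, then ruling out a nonzero kernel element by counting points on the Artin--Schreier curve $y^2+y=p(x)/g(x)$ against the Hasse--Weil bound) is exactly the one used in the cited literature. Two details are worth making explicit to close the argument completely. First, Hasse--Weil requires \emph{geometric} irreducibility; your odd-order-pole criterion handles this too, since if $f=h^2+h$ with $h\in\overline{\mathbb{F}}_2(x)$ then every pole of $f$ has even order, whereas $f$ has a simple pole because $g$ is squarefree and $\deg p<\deg g$ forces $f$ to be non-polynomial. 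Second, $p$ and $g$ need not be coprime, so in lowest terms $f=p'/g'$ with $g'$ a nontrivial divisor of $g$; this only helps, since the poles remain simple and avoid $\mathbb{F}_{2^u}$, Riemann--Hurwitz (different exponent $2$ at each pole) gives $g_C\le \deg g'-1\le r-1$, and the counted points (two in each fibre over $\alpha\in\mathbb{F}_{2^u}$ and two over $\infty$, all lying over unramified split places and hence genuine points of the smooth model) still give $\#C(\mathbb{F}_{2^u})\ge 2\cdot 2^u+2$, so the contradiction $2r-2\ge (2^u+1)/2^{u/2}>(2^u-1)/2^{u/2}$ with \eqref{eqn: vdv condition} stands. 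With those two clarifications your proof is complete and matches the intended argument of \cite{V90}.
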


We observe that, performing the appropriate manipulations, Equation \ref{eqn: vdv condition} is satisfied for all
\begin{align}
    u& \geq \max \Big\lbrace\left\lceil\log_2 \left( 2r^2 - 4r + 2(r+1) \sqrt{r^2 - 2r + 2} + 3 \right) \right\rceil, \label{eqn: vdv condition manipulated}\\
    &\qquad\qquad \qquad \log_2 \left( 2r^2 - 4r + 2(r+1) \sqrt{r^2 - 2r + 2} + 3 \right)+ 1 \Big\rbrace. \nonumber
\end{align}

\subsection{Proof of Theorem \ref{thm: hss from goppa params}}

In this section we prove Theorem \ref{thm: hss from goppa params}. We first show the following lemma.

\begin{lemma}\label{lem: goppa to labelweight conversion}
    Let $s,d,t,u \in \mathbb{Z}^+$ satisfy $s-dt>0$ and $s = 2^u$, where
    \begin{align}\label{eqn: extension degree condition for conversion setup}
    u&= \max \Big\lbrace\left\lceil \log_2 \left( 2(dt)^2 - 4dt + 2(dt+1) \sqrt{(dt)^2 - 2dt + 2} + 3 \right) \right\rceil, \\
    &\qquad\qquad \qquad \log_2 \left( 2(dt)^2 - 4dt + 2(dt+1) \sqrt{(dt)^2 - 2dt + 2} + 3 \right)+ 1 \Big\rbrace. \nonumber
    \end{align}

    There exists a linear code $\mathcal{C} \subseteq \mathbb{F}_2^n$ and labeling function $L: [n] \to [s]$ satisfying $\Delta_L(\mathcal{C})\geq dt+1$ with rate
    \begin{equation*}
        R = 1 - \frac{udt}{s}
    \end{equation*}
    and dimension $\ell = s - udt$.
\end{lemma}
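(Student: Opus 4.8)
The plan is to mirror the structure of Lemma~\ref{lem: hermitian to labelweight conversion}: exhibit an explicit Goppa code whose parameters, after an appropriate choice of the design degree $r$, translate directly into a labelweight code with the advertised rate and dimension, using the trivial labeling $L:[n]\to[s]$, $x\mapsto x$. First I would set $r = dt$ and take $n = s = 2^u$, $V = \mathbb{F}_{2^u}$, so that the Goppa polynomial $g$ is just an irreducible (or repeated-root-free) polynomial of degree $dt$ over $\mathbb{F}_{2^u}$, which exists. The hypothesis on $u$ in the statement is exactly condition~\eqref{eqn: vdv condition manipulated} with $r$ replaced by $dt$, so it guarantees both that $g$ with no repeated roots exists and that the dimension of $\Gamma = \Gamma_{2^u,u,dt}(g,V)$ is \emph{exactly} $k = n - u\cdot dt = s - udt$ by Theorem~\cite{V90}.

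Next I would read off the minimum distance. By Theorem~\ref{thm: goppa code params}, $d(\Gamma)\geq r+1 = dt+1$. That is all we need: setting $\mathcal{C} = \Gamma$ and $L$ the identity labeling on $[s]$, the labelweight $\Delta_L(\mathcal{C})$ equals the ordinary minimum Hamming distance of $\mathcal{C}$, which is $\geq dt+1$, so $\Delta_L(\mathcal{C}) \geq dt+1$ as required. (Note that, unlike the Hermitian case, we only have a lower bound on the distance here, but since the characterization in Theorem~\ref{thm:lblwtImpliesHSS} only requires $\Delta_L(\mathcal{C})\geq dt+1$, a lower bound suffices; the exact dimension is what matters for pinning down the amortization parameter, and that is supplied by \cite{V90}.) Then the rate is $R = k/n = (s-udt)/s = 1 - udt/s$, and the dimension/amortization parameter is $\ell = k = s - udt$, matching the claim.

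Finally I would note the bookkeeping point that the $u$ in the \emph{lemma} statement~\eqref{eqn: extension degree condition for conversion setup} is defined as the exact value $\max\{\lceil \log_2(\cdots)\rceil, \log_2(\cdots)+1\}$ with $r = dt$, whereas the \emph{theorem} statement~\ref{thm: hss from goppa params} only requires $u > \log_2(2(dt)^2 - 4dt + 2(dt+1)\sqrt{(dt)^2-2dt+2}+3)$; any such $u$ satisfies condition~\eqref{eqn: vdv condition} (after the algebraic manipulation relating \eqref{eqn: vdv condition} and \eqref{eqn: vdv condition manipulated}), so the construction goes through for the full range of $u$ claimed in the theorem, and Theorem~\ref{thm: hss from goppa params} then follows by feeding $\mathcal{C}$ and $L$ into Theorem~\ref{thm:lblwtImpliesHSS} exactly as in the proof of Theorem~\ref{thm: hss from hermitian code params}.

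The main obstacle is not any deep argument but rather verifying the algebra connecting inequality~\eqref{eqn: vdv condition}, namely $2r-2 < (2^u-1)/2^{u/2}$, to the explicit threshold on $u$ in~\eqref{eqn: vdv condition manipulated}: one treats $2^{u/2}$ as the unknown, clears denominators to get a quadratic inequality $2^u - (2r-2)2^{u/2} - 1 > 0$, solves via the quadratic formula to obtain $2^{u/2} > (r-1) + \sqrt{(r-1)^2+1}$, squares, and simplifies to $2^u > 2r^2 - 4r + 2 + 2(r-1)\sqrt{r^2-2r+2}$; reconciling this with the stated $2r^2 - 4r + 2(r+1)\sqrt{r^2-2r+2}+3$ bound (which is a clean over-estimate) is the one place care is needed, and one should double-check that the over-estimate is genuinely $\geq$ the tight bound so that the claimed hypothesis on $u$ really does imply \eqref{eqn: vdv condition}.
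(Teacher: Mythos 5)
Your proposal is correct and follows essentially the same route as the paper's proof: take $r=dt$, $V=\mathbb{F}_{2^u}$, $n=s=2^u$ with an irreducible degree-$dt$ Goppa polynomial, get the exact dimension $k=s-udt$ from the condition on $u$ via \cite{V90}, get $d(\Gamma)\geq dt+1$ from Theorem~\ref{thm: goppa code params}, and use the identity labeling so that labelweight equals Hamming distance. The algebraic reconciliation you flag does go through, since the tight threshold is $2^u > 2(dt)^2-4dt+3+2(dt-1)\sqrt{(dt)^2-2dt+2}$, which is dominated by the stated bound with coefficient $(dt+1)$ on the square root; and your rate computation $R=k/n$ is in fact the correct reading (the paper's displayed $R=d(\Gamma)/s$ is a typo).
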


\begin{proof}
    Fix $ V = \mathbb{F}_{2^u}$ and let $g\in \mathbb{F}_{2^u}[x]$ be an irreducible polynomial of degree $r=dt$; set $n = 2^u$. Let $\Gamma = \Gamma_{n,u,r}(g,V)$ be the binary Goppa code given by Definition \ref{def: goppa code defn}. It follows from Equation \ref{eqn: extension degree condition for conversion setup} and the observation of Equation \ref{eqn: vdv condition manipulated} that $\Gamma$ has dimension $k = n - ur = s-udt$. It follows from Theorem \ref{thm: goppa code params} that $\Gamma$ has minimum distance $d(\Gamma) \geq r + 1 = dt+1$.

    Set $\mathcal{C} = \Gamma$ and define $L:[n] \to [s], x \mapsto x$ to be the identity labeling. It immediately follows that $\mathcal{C}$ has rate
    \begin{equation*}
        R = \frac{d(\Gamma)}{s} = 1- \frac{udt}{s}
    \end{equation*}
    and minimum labelweight
    \begin{equation*}
        \Delta_L(\mathcal{C}) = d(\Gamma) \geq dt+1.
    \end{equation*}
\end{proof}

It is now straightforward to prove Theorem \ref{thm: hss from goppa params} by leveraging Theorem \ref{thm:lblwtImpliesHSS}.

\begin{proof}[Proof of Theorem \ref{thm: hss from goppa params}]
By Lemma \ref{lem: goppa to labelweight conversion}, there exists a linear code $\mathcal{C} \subseteq \mathbb{F}_2^s$ and a labeling $L:[s] \to [s]$ such that $\Delta_L(\mathcal{C}) \geq dt+1$; furthermore $\mathcal{C}$ has dimension $\ell = s - udt$ and rate $R = 1- udt/s$. By Theorem \ref{thm:lblwtImpliesHSS}, the existence of such a labelweight code is equivalent to the existence of a linear HSS scheme with corresponding parameters; in particular, there exists a $t$-private, $s$-server, $\mathbb{F}_2$-linear HSS scheme $\pi = \pi(\Share, \Eval, \Rec)$ that achieves download rate 
\begin{equation*}
    \mathsf{DownloadRate}(\pi) = R = 1 - \frac{udt}{s}
\end{equation*}
and amortization parameter $\ell = s- udt$. 
\end{proof}

\bibliographystyle{alpha}
\bibliography{refs.bib}

\appendix

\section{Linear HSS from Random Codes}
\label{apx: linear hss from random codes}
In this section we show that the natural random coding approach does not appear to yield linear HSS schemes that meaningfully outperform the $\ell = O(s \log(s))$ amortization parameter required by \cite{FIKW22}, \cite{BW23}.  Indeed, the standard argument established that random linear codes correspond to linear HSS schemes that can attain good download rates, but---like~\cite{BW23,FIKW22}---only with large amortization parameters. 

\subsection{Notation}

To justify the notion that ``random labelweight codes don't outperform Reed-Solomon codes in linear HSS amortization'', we proceed by generalizing the well-known Gilbert-Varshamov Bound to the labelweight setting. 
One standard proof of this result (see, e.g.,~\cite{ECT}) analyzes the distance of a random linear code, and we follow the same path here.  We first introduce some notation.

\begin{definition}[Labelweight Ball]
    Let $L:[n] \to [s]$ be a surjective labeling. We define the labelweight ball $B_L(r)$ of radius $0 \leq r \leq n$ to be the set
    \begin{equation*}
        B_L(r) := \left\lbrace c \in \mathbb{F}_q^n \; : \; \Delta_L(c) \leq r \right\rbrace
    \end{equation*}
    and define the volume of the labelweight ball to be $\mathrm{Vol}_L(r) = | B_L(r) |$.
\end{definition}

For the purposes of our analysis, we consider only a fixed labeling function.

\begin{assume}\label{assume: fixed balanced labeling}
    Let $n, s, w \in \mathbb{Z}^+$ such that $ n = sw$. In this section we will only consider the labeling
    \begin{equation*}
        L: [n] \to [s], \; x \mapsto \left\lceil \frac{x}{s} \right\rceil.
    \end{equation*}
\end{assume}
When paired with a code of length $n$, this balanced labeling simply labels the first $w$ coordinates with 1, the second $w$ coordinates with 2, and continues analogously until the last $w$ coordinates are labeled with $s$. Intuitively, for fixed $n, s \in \mathbb{Z}^+$, such a balanced labeling pattern maximizes labelweight in expectation over random linear codes.

Under this fixed, balanced labeling function of Assumption \ref{assume: fixed balanced labeling}, we have the following algebraic formulation of labelweight ball volume.

\begin{observation}
    Let $n, s, w,r \in \mathbb{Z}^+$ such that $n = sw$ and $0 \leq r \leq n$. Then
    \begin{equation*}
        \mathrm{Vol}_L(r) = | B_L(r) | = \sum_{i=0}^r \binom{s}{i}\left( q^w - 1 \right)^i.
    \end{equation*}
\end{observation}

Finally, we will need to define relative labelweight for labelweight codes, which is the natural analogue of relative minimum distance for linear codes.

\begin{definition}[Relative Labelweight]\label{def: relative labelweight}
    Let $\mathcal{C} \subseteq \mathbb{F}_q^n$ be a linear code and $L: [n] \to [s]$ a surjective labeling such that $\Delta_L(\mathcal{C}) = d$. We define the relative labelweight of $\mathcal{C}$ to be $\delta = d/s$.
\end{definition}

\subsection{Generalization of \texorpdfstring{$q$}{q}-ary Entropy}

In the standard proof of the Gilbert-Varshamov bound, the volume of a Hamming ball is estimated by the $q$-ary entropy function.  To generalize the proof to labelweight, we introduce the following generalization of the $q$-ary entropy function, which captures the volume of labelweight balls.

\begin{definition}[Generalized $q$-ary Entropy]
    Let $q \geq 2$, $w \geq 1$. For $x \in (0,1)$, we denote by $H_{q,w}(x)$ the generalized $q$-ary entropy function:
    \begin{equation*}
       H_{q,w}(x) = x \log_q(q^w - 1) - x \log_q(x) - (1-x) \log_q(1-x),
    \end{equation*}
    where $H_{q,w}(0), H_{q,w}(1)$ are defined as the limit of $H_{q,w}$ as $x \to 0, 1$, respectively.
\end{definition}

Note that the case where $w = 1$ is the standard $q$-ary entropy function. We notice that, when properly normalized, the generalized entropy function can be approximated linearly.

\begin{observation}\label{obs: gen entropy linear approx}
    For all $x \in [0, 1 - 1/q^w]$, $x \leq w^{-1}H_{q,w}(x) \leq x + \log_q(2^{1/w})$.
\end{observation}
\begin{proof}
    Observe that 
    \begin{align*}
        g(x):=w^{-1}H_{q,w}(x) - x &= w^{-1}\left(x \log_q(q^w - 1) - x \log_q(x) - (1-x) \log_q(1-x)\right) - x\\
        &\leq w^{-1} \left( - x \log_q(x) - (1-x) \log_q(1-x) \right).
    \end{align*}
    Since $- x \log_q(x) - (1-x) \log_q(1-x)$ is a concave function which attains its maximal value when $x = 1/2$, it follows that $w^{-1}H_{q,w}(x) - x \leq w^{-1}  \log_q(2)$ as desired. The lower bound follows from observing that $g$ is itself a concave function, since
    \begin{equation*}
        g''(x) = -\frac{1}{w \cdot x \cdot (1 - x) \cdot \ln(q)} \leq 0 \; \forall x \in [0, 1 - 1/q^w]
    \end{equation*}
    and that its values at the endpoints of the domain $[0, 1-1/q^w]$ are non-negative; i.e., $g(0), g(1-1/q^w) \geq 0$.
\end{proof}

Equipped with this definition, our goal becomes to express the volume of a given labelweight ball in terms of the generalized entropy function. To do so, we note two helpful relations; we omit the proofs, which are elementary algebraic manipulations.

\begin{observation}\label{obs: remove entropy function from exponent}
    Let $s, p \in \mathbb{R}$ such that $s,p \geq 0$. Then
    \begin{equation*}
        q^{-sH_{q,w}(p)} = (1-p)^{(1-p)s} \left( \frac{p}{q^w - 1} \right)^{ps}
    \end{equation*}
\end{observation}

\begin{observation}\label{obs: bounding ratio}
    Let $w \in \mathbb{Z}^+$ and $p \in [0,1)$ satisfy $0 \leq p \leq 1 - 1/q^w$. Then
    \begin{equation*}
        \frac{p}{(1-p)(q^w-1)} \leq 1.
    \end{equation*}
\end{observation}

We now give the volume of a labelweight ball in terms of the generalized entropy function.

\begin{lemma}\label{lem: ball volume in terms of entropy}
    Let $s,w \in \mathbb{Z}^+$ and $p \in [0,1)$ satisfy $0 \leq p \leq 1 - 1/q^w$ and $ps \in \mathbb{Z}^+$. Then
    \begin{equation*}
        \mathrm{Vol}_L(ps) \leq q^{s H_{q,w}(p)}.
    \end{equation*}
\end{lemma}

\begin{proof}
Observe that 
\begin{align*}
1 &= (p + (1-p))^s= \sum_{i=0}^{s} \binom{s}{i} p^i (1-p)^{s-i}\geq \sum_{i=0}^{ps} \binom{s}{i} p^i (1-p)^{s-i}.
\end{align*}
Multiplying through by $1 = (q^w-1)^i/(q^w-1)^i$ and applying Observation 
\ref{obs: bounding ratio} yields
\begin{equation*}
    1 \geq \sum_{i=0}^{ps} \binom{s}{i}  (q^w-1)^i (1-p)^{s} \left(\frac{p}{(1-p)(q^w-1)}\right)^{ps}.
\end{equation*}
Finally, applying Observation \ref{obs: remove entropy function from exponent} yields
\begin{align*}
    1 &\geq \sum_{i=0}^{ps} \binom{s}{i}  (q^w-1)^i q^{-sH_{q,w}(p)}\\
&= \mathrm{Vol}_L(ps) q^{-sH_{q,w}(p)}.
\end{align*}
    
\end{proof}

\subsection{Gilbert-Varshamov Bound for Random Labelweight Codes}

We are finally equipped to prove a generalization of the Gilbert-Varshamov bound for labelweight codes. This generalization will quantify the rate, and labelweight trade-off we can guarantee through random linear codes; viewed through the lens of Theorem \ref{thm:lblwtImpliesHSS}, this tells us the download rate and amortization parameters that can be guaranteed by linear HSS scheme  constructed from random linear codes. 

\begin{theorem}\label{thm: gv for labelweight}
    For $q \geq 2$, let $n, s, w \in \mathbb{Z}^+$ satisfy $n = sw$. Let $\delta \in [0, 1 - 1/q^w]$ satisfy $\delta s \in \mathbb{Z}^+$. For $\varepsilon \in [0, 1 - H_{q,w}(\delta)]$, let
    \begin{equation}\label{eqn: lgv dimension}
        k = n - sH_{q,w}(\delta) - n\varepsilon
    \end{equation}
    and let $G \in \mathbb{F}_q^{k \times n}$ be chosen uniformly at random.

    Then with probability $> 1 - q^{-\varepsilon n}$, $G$ is the generator matrix of a length $n$, dimension $k$, and relative labelweight $\geq \delta$ linear code with rate
    \begin{equation*}
        R = 1 - \frac{sH_{q,w}(\delta)}{n} - \varepsilon.
    \end{equation*}
\end{theorem}

Note that when $n = s$ and $w=1$, Theorem \ref{thm: gv for labelweight} becomes the standard Gilbert-Varshamov Bound.
Before we show the proof of Theorem \ref{thm: gv for labelweight}, we interpret its statement in terms of linear HSS parameters. 

\begin{example}\label{ex: gv example}
Let $s, d, t \in \mathbb{Z}^+$ satisfying $s-dt> 0$ parameterize a linear HSS scheme as in Definition \ref{def:HSS}. Let $s$ be as stated in Theorem \ref{thm: gv for labelweight} and set $\delta = (dt+1)/s$.

For the sake of illustration, suppose $w = \log_q(s)$ and $\varepsilon > 0$ a negligible constant. Let $\mathcal{C}$ denote the linear code with properties guaranteed by Theorem \ref{thm: gv for labelweight} and let $\pi$ denote the $t$-private, $s$-server linear HSS constructed from $\mathcal{C}$ as in Theorem \ref{thm:lblwtImpliesHSS}. Applying Observation \ref{obs: gen entropy linear approx} to Theorem \ref{thm:lblwtImpliesHSS}, $\pi$ has download rate at most
\begin{equation*}
    \mathsf{DownloadRate}(\pi) \leq 1 - \frac{dt+1}{s} - \varepsilon = 1 - \frac{dt}{s} - O(s^{-1})
\end{equation*}
with amortization parameter \textit{at least}
\begin{equation*}
    \ell \geq (1-\varepsilon)s \log_q(s) - s\log_q(2) - (dt+1)\log_q(s) = \Omega(s\log(s))
\end{equation*}
for sufficiently small $\varepsilon$. In particular, we note that such a construction has an amortization parameter (at least) on the same $\Omega(s \log(s))$ order as that of \cite{FIKW22}, \cite{BW23}, while achieving a rate comparable to that of our Hermitian code-based construction of Theorem \ref{thm: hss from hermitian code params}. We summarize this situation in Table \ref{fig: comparing hermitian hss to gv}.
\end{example}

\begin{table}[ht]
\centering
\begin{tabular}{l|c|c}
\hline
 & Thm. \ref{thm: hss from hermitian code params} (Hermitian code-based) & Ex. \ref{ex: gv example} (Random code-based) \\ \hline
Download Rate & $1 - dt/s - O(s^{-1/3})$ & $\leq 1 - dt/s - O(s^{-1})$ \\ 
Amortization & $s-dt - O(s^{2/3})$ & $\Omega(s\log(s))$ \\ \hline
\end{tabular}
\caption{Comparison of Theorem \ref{thm: hss from hermitian code params} to Example \ref{ex: gv example}.}
\label{fig: comparing hermitian hss to gv}
\end{table}

We conclude this section by proving Theorem \ref{thm: gv for labelweight}.

\begin{proof}[Proof of Theorem \ref{thm: gv for labelweight}]
Let $\mathcal{C} = \lbrace \mathbf{m}G \; : \; \mathbf{m} \in \mathbb{F}_q^k \rbrace$ be the linear code generated by $G$. It suffices to show that $\Delta_L(\mathbf{m}G) \geq d$ for all non-zero $\mathbf{m}$.

    Accordingly, let $\mathbf{m} \in \mathbb{F}_q^k$ be a uniformly random non-zero vector; then $\mathbf{m}G$ is uniformly distributed over $\mathbb{F}_q^n$. It follows from Lemma \ref{lem: ball volume in terms of entropy} that
    \begin{equation*}
        \Pr \left[ \delta_L(\mathbf{m}G) < d \right] = \frac{\mathrm{Vol}_L(d-1)}{q^n} \leq \frac{q^{s H_{q,w}(\delta)}}{q^n} = q^{-k} \; q^{-n\varepsilon}.
    \end{equation*}
    Taking the Union Bound over all $\mathbf{m} \in \mathbb{F}_q^k$ yields the observation that with probability $1 - q^{-n\varepsilon}$, $\Delta_L(\mathcal{C}) \geq d$ as desired.
\end{proof}

\end{document}